\renewcommand\section{\@startsection {section}{1}{\z@}
{-30pt \@plus -1ex \@minus -.2ex}
{2.3ex \@plus.2ex}
{\normalfont\normalsize\bfseries\boldmath}}
\renewcommand\subsection{\@startsection{subsection}{2}{\z@}
{-3.25ex\@plus -1ex \@minus -.2ex}
{1.5ex \@plus .2ex}
{\normalfont\normalsize\bfseries\boldmath}}
\renewcommand{\@seccntformat}[1]{\csname the#1\endcsname. }
\theoremstyle{definition}
\theoremstyle{plain}
\newtheorem{thm}{Theorem}[section]
\newtheorem{lem}[thm]{Lemma}
\newtheorem{cor}[thm]{Corollary}
\newtheorem{conj}[thm]{Conjecture}
\newtheorem{prop}[thm]{Proposition}
\newcommand{\outdeg}{\operatorname{outdeg}}
\newcommand{\indeg}{\operatorname{indeg}}
\title{Random processes for generating task-dependency graphs}
\author{Jesse Geneson and Shen-Fu Tsai}
\date{}
\begin{document}

\maketitle

\begin{abstract}
We investigate random processes for generating task-dependency graphs of order $n$ with $m$ edges and a specified number of initial vertices and terminal vertices. In order to do so, we consider two random processes for generating task-dependency graphs that can be combined to accomplish this task. In the $(x, y)$ edge-removal process, we start with a maximally connected task-dependency graph and remove edges uniformly at random as long as they do not cause the number of initial vertices to exceed $x$ or the number of terminal vertices to exceed $y$. In the $(x, y)$ edge-addition process, we start with an empty task-dependency graph and add edges uniformly at random as long as they do not cause the number of initial vertices to be less than $x$ or the number of terminal vertices to be less than $y$. In the $(x, y)$ edge-addition process, we halt if there are exactly $x$ initial vertices and $y$ terminal vertices. For both processes, we determine the values of $x$ and $y$ for which the resulting task-dependency graph is guaranteed to have exactly $x$ initial vertices and $y$ terminal vertices, and we also find the extremal values for the number of edges in the resulting task-dependency graphs as a function of $x$, $y$, and the number of vertices. Furthermore, we asymptotically bound the expected number of edges in the resulting task-dependency graphs. Finally, we define a random process using only edge-addition and edge-removal, and we show that with high probability this random process generates an $(x, y)$ task-dependency graph of order $n$ with $m$ edges.
\end{abstract}

\section{Introduction}

A \textit{task-dependency graph} is a directed acyclic graph where the vertices represent tasks and the edge $(a, b)$ represents that the completion of task $b$ depends on the completion of task $a$. Task-dependency graphs are a useful tool in the design of parallel programs. Each task represents a part of the program specified by the programmer, and the task-dependency graph $G$ provides a representation of the interdependencies between the parts of the program. Understanding the structure of $G$ can be useful for scheduling parts of the corresponding program on multiple processors and bounding the total running time. Several paradigms have been developed for scheduling the tasks in a task-dependency graph given multiple processors, including \textit{work-sharing} \cite{blumofe1, eager, mirchandaney, vanhoudt}, \textit{work-stealing} \cite{burton, halstead, blumofe, blumofe0, feldmann, finkel, halbherr, karp, kuszmaul, mohr, rudolph, vandevoorde}, and \textit{parallel depth-first scheduling} \cite{blelloch, blelloch0, chen}. %When tasks are spawned by a processor in work-sharing, a scheduler moves those tasks to other processors. In \textit{work-stealing}, processors that are not doing work attempt to steal tasks from other processors. Various work-stealing algorithms have been proposed in the literature, e.g., \cite{}. In parallel depth-first scheduling, the scheduler gives higher priority to tasks that would have been completed earlier in a sequential program .

In this paper, we focus on task-dependency graphs with a specified number of initial vertices and terminal vertices. An \textit{$(x, y)$ task-dependency graph} is a task-dependency graph with $x$ initial vertices and $y$ terminal vertices, where initial and terminal vertices are those with zero in-degree and out-degree, respectively. A \textit{minimal $(x, y)$ task-dependency graph} is an $(x, y)$ task-dependency graph for which removing any edge produces a task-dependency graph that is not an $(x, y)$ task-dependency graph. The family of $(1, 1)$ task-dependency graphs have been used to analyze project management methods for military programs \cite{randgbsd}. Specifically, Sentinel (formerly called Ground Based Strategic Deterrent) is a program of the United States Air Force which designs and produces more than 600 missiles, refurbishes 450 silos, and designs and produces at least 24 launch centers, among many other tasks \cite{sentinel}. In a recent report \cite{randgbsd}, task-dependency graphs were used to represent the interdependent tasks in Sentinel. In order to analyze project management methods for Sentinel including the Program Evaluation Review Technique (PERT) \cite{pert} and the Critical Path Method (CPM) \cite{cpm}, the report \cite{randgbsd} modeled Sentinel with $(1, 1)$ task-dependency graphs and applied the project management methods to these task-dependency graphs. They noted that for any task-dependency graph $G$, it is possible to turn $G$ into a $(1, 1)$ task-dependency graph. Simply add a new initial vertex $u$ and terminal vertex $v$, make $u$ the only initial vertex by adding edges from $u$ to all of the initial vertices of $G$, and make $v$ the only terminal vertex by adding edges from all of the terminal vertices of $G$ to $v$.

As part of the analysis in \cite{randgbsd}, random $(1, 1)$ task-dependency graphs with $n$ vertices and $m$ edges were generated using the following process. Start with a maximally connected task-dependency graph, and then remove edges uniformly at random until $m$ edges remain. Whenever an edge removal causes the task-dependency graph to have more than one initial vertex or more than one terminal vertex, cancel the edge removal. Note that if any round of this process produces a minimal $(1, 1)$ task-dependency graph, then no more edges can be removed. Clearly the minimum possible number of edges in a $(1, 1)$ task-dependency graph of order $n$ is $n-1$, but in \cite{randgbsd} it was noted that the random process can generate minimal $(1, 1)$ task-dependency graphs of order $n$ that have more than $n-1$ edges. So, if you pick a value of $m$ close to $n-1$, it is possible that the random process in \cite{randgbsd} does not terminate. Based on this observation, a natural problem is to determine how many edges can be in a minimal $(1, 1)$ task-dependency graph of order $n$. More generally, how many edges can be in a minimal $(x,y)$ task-dependency graph of order $n$? Note that these questions do not require the underlying graph to be connected, so we also investigate the corresponding questions for connected graphs.

In order to answer these questions, we define another random process for generating task-dependency graphs called the \textit{$(x, y)$ edge-removal process} which is almost the same as the random process described in the last paragraph, except it has a different condition for termination. Start with a maximally connected task-dependency graph, and then remove edges uniformly at random. Whenever an edge removal causes the task-dependency graph to have more than $x$ initial vertices or more than $y$ terminal vertices, cancel the edge removal. The process terminates when it is impossible to remove any more edges. Note that this differs from the process in \cite{randgbsd} in that the $(x, y)$ edge-removal process terminates when no more edges can be removed, while the process in \cite{randgbsd} terminated when the number of edges was equal to a specified value. We prove for $n > x+y$ that the minimum possible number of edges in any $(x,y)$ task-dependency graph produced by the $(x,y)$ edge-removal process is $n-\min(x,y)$. Then we prove for $n > x+y$ that the maximum possible number of edges in a task-dependency graph produced by the $(x, y)$ edge-removal process on $n$ vertices is $2n-x-y-2$, which we show is equal to the maximum possible number of edges in a minimal $(x,y)$ task-dependency graph of order $n$ for all $x, y \ge 1$. In particular, we show that the maximum possible number of edges of a minimal $(1,1)$ task-dependency graph of order $n$ is $2n-4$. Thus for values of $m$ less than $2n-4$, it is possible for the random process in \cite{randgbsd} not to yield a $(1, 1)$ task-dependency graphs with $n$ vertices and $m$ edges. On the other hand, for $m \ge 2n-4$, the random process in \cite{randgbsd} is guaranteed to yield a $(1, 1)$ task-dependency graphs with $n$ vertices and $m$ edges. We also characterize the minimal $(x,y)$ task-dependency graphs of order $n$ which attain the maximum number of edges, and we show in particular that any directed paths in these graphs have length at most $2$. 

We also consider a random process for generating $(x,y)$ task-dependency graphs which adds edges instead of removing them. Suppose that we begin with an empty directed graph with vertex set $\left\{1,2,\dots, n\right\}$, and then we start uniformly at random adding edges $(a, b)$ with $a < b$. If any edge causes there to be fewer than $x$ initial vertices or fewer than $y$ terminal vertices, then we cancel the addition of that edge. We halt the process if the result after any edge addition is an $(x,y)$ task-dependency graph, or if we cannot add any more edges. We call this the \textit{$(x, y)$ edge-addition process}. We prove for all $x, y \ge 1$ and $n > x+y$ that the minimum possible number of edges in any $(x,y)$ task-dependency graph produced by the $(x,y)$ edge-addition process on $n$ vertices is $n-\min(x, y)$ and the maximum possible number of edges in any task-dependency graph produced by the $(x,y)$ edge-addition process is $\binom{n}{2}+1-\binom{\max(x,y)}{2}-\binom{\min(x,y)+1}{2}$. We also prove that the expected number of edges in the $(x, y)$ edge-addition process on $n$ vertices is $\Theta(n^2)$, where the constant in the lower bound depends on $(x,y)$. For both the $(x,y)$ edge-addition process and $(x,y)$ edge-removal process, we also characterize the pairs $(x,y)$ for which each process is guaranteed to terminate with an $(x,y)$ task-dependency graph for $n > \max(x,y)$. For every pair $(x, y)$, we show that the probability that the $(x,y)$ edge-addition process on $n$ vertices terminates with an $(x,y)$ task-dependency graph goes to $1$ as $n$ goes to infinity.

In Section~\ref{ex_bounds}, we determine for $x,y\geq1$ that the maximum possible number of edges of a minimal $(x,y)$ task-dependency graph of order $n$ is $0$ if $n = \max(x, y)$ (in which case $x = y$), $2n-x-y-1$ if $n=\max(x, y)+1$, and $2n-x-y-2$ if $n>\max(x, y)+1$. Furthermore, we characterize the minimal $(x,y)$ task-dependency graphs of order $n$ with the maximum possible number of edges. As a corollary, we show that every minimal $(x,y)$ path-dependency graph of order $n$ with the maximum number of edges has no directed path with length more than $2$. In Section~\ref{eap}, we focus on the edge-addition process on $n$ vertices. We determine the pairs $(x,y)$ for which the $(x, y)$ edge-addition process terminates with an $(x,y)$ task-dependency graph, we prove extremal bounds for the number of edges in the task-dependency graphs generated by the $(x, y)$ edge-addition process, we show that the expected number of isolated vertices is $o(1)$, and we also bound the expected number of edges. 

In Section~\ref{s:experiment}, we present experimental results obtained from implementing the $(x, y)$ edge-removal and $(x, y)$ edge-addition processes in Python. We make several conjectures on the expected number of edges, expected maximum directed path length, and probability of generating an $(x, y)$ task-dependency graph for both processes. In particular, our experimental results suggest that the probability of generating an $(x, y)$ task-dependency graph approaches $1$ for both the $(x, y)$ edge-removal process and $(x, y)$ edge-addition process as the number of vertices increases, and we prove that this is true for the $(x, y)$ edge-addition process. In Section~\ref{open_prob}, we define a random process using only edge-addition and edge-removal which with high probability generates an $(x, y)$ task-dependency graph of order $n$ with $m$ edges. This process combines the $(x, y)$ edge-addition process with the $(x, y)$ edge-removal process. Note that this process provides a method different from the one in \cite{randgbsd} for randomly generating $(1, 1)$ task-dependency graphs of order $n$ with $m$ edges. In the same section, we determine the maximum possible number of orderings of the tasks in a $(1,1)$ task-dependency graph of order $n$. Interestingly, this maximum is attained by minimal $(1,1)$ task-dependency graphs of order $n$ with the maximum possible number of edges. We also determine the maximum possible number of orderings of the tasks in an $(x,y)$ task-dependency graph of order $n$ when $x,y \ge 1$ and $\max(x, y) \le n \le \max(x,y)+2$. Furthermore, we discuss some open problems and future directions.

\section{Edge-removal process}\label{ex_bounds}

In this section, we focus on the $(x, y)$ edge-removal process. We show that the process does not always result in an $(x, y)$ task-dependency graph. We also examine the extremal behavior of the $(x, y)$ edge-removal process. Specifically, we determine the maximum possible number of edges in a minimal $(x, y)$ task-dependency graph of order $n$ for all $n \ge \max(x, y)$, which is equivalent to the maximum possible number of edges in an $(x, y)$ task-dependency graph obtained from the $(x, y)$ edge-removal process on $n$ vertices. We also characterize the minimal $(x, y)$ task-dependency graphs of order $n$ with the maximum possible number of edges. 

In order to obtain the results in this section, we introduce some terminology. Given a task-dependency graph $G$, we call a vertex $v \in V(G)$ an \textit{initial vertex} if there is no $u \in V(G)$ with $(u, v) \in E(G)$. We call $v \in V(G)$ a \textit{terminal vertex} if there is no $u \in V(G)$ with $(v, u) \in E(G)$. Initial vertices are the same as roots and sources, and terminal vertices are the same as tips and sinks. Note that it is possible for a vertex $v \in V(G)$ to be both initial and terminal, in which case $v$ is isolated. A vertex $v \in V(G)$ is \textit{exterior} if it is an initial vertex or a terminal vertex, and it is \textit{interior} otherwise.

It is clear that the $(1, 1)$ edge-removal process always results in a $(1,1)$ task-dependency graph, since a maximally-connected directed acyclic graph is a $(1,1)$ task-dependency graph. In the next proposition, we prove that this generalizes to the $(x, x)$ edge-removal process for all $x \ge 1$. However, in the proposition after that, we show that the $(x, y)$ edge-removal process does not necessarily result in an $(x, y)$ task-dependency graph when $x \neq y$.

\begin{prop}\label{xxedgeremove}
    For all $x \geq 1$ and $n \ge x$, the $(x, x)$ edge-removal process always results in an $(x, x)$ task-dependency graph.
\end{prop}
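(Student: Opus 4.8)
The plan is to track the two quantities $a = \#\{\text{initial vertices}\}$ and $b = \#\{\text{terminal vertices}\}$ throughout the process and to show that whatever configuration the process halts at must satisfy $a = b = x$. First I would record the invariants built into the process. Since the maximally connected task-dependency graph is a $(1,1)$ task-dependency graph, we start with $a = b = 1 \le x$, and since we only ever delete edges that keep both counts at most $x$, every intermediate graph satisfies $a \le x$ and $b \le x$. Moreover, each successful deletion strictly decreases the number of edges, so the process always halts. The content of the proposition is therefore entirely about the halting configuration.

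The first real step is to argue that at the halting configuration we cannot have both $a < x$ and $b < x$. Indeed, if the graph still contains an edge $(u,v)$, then deleting it raises $a$ to at most $a+1 \le x$ and $b$ to at most $b+1 \le x$, so the deletion would be legal and the process would not have halted. And the graph must contain an edge, since an edgeless graph on $n \ge x$ vertices has $a = n \ge x$, contradicting $a < x$. Hence at the halting configuration we have $a = x$ or $b = x$.

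The crux is to rule out a halting configuration with, say, $a = x$ and $b < x$; the reverse case then follows by reversing all edges, which interchanges initial and terminal vertices. In such a configuration every remaining edge must be blocked, and since $b < x$ the only way deleting $(u,v)$ can be illegal is by creating a new initial vertex, i.e. $\indeg(v) = 1$. I would turn this into the structural statement that every vertex with positive in-degree in fact has in-degree exactly $1$, so that the number of edges equals the number of non-initial vertices, namely $n - a = n - x$. On the other hand, each of the $n - b$ non-terminal vertices contributes at least $1$ to the sum of out-degrees, so the number of edges is at least $n - b$. Comparing the two counts gives $n - x \ge n - b$, hence $b \ge x$, contradicting $b < x$.

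Putting these together, the halting configuration satisfies $a = x$ and $b = x$, so it is an $(x,x)$ task-dependency graph, as claimed. I expect the degree-counting inequality in the third step to be the main obstacle: the subtlety is to convert the local condition that every deletion is blocked into the exact global edge count $n - x$, and then to pair it with the easy lower bound $n - b$ in order to force the number of terminal vertices to match the number of initial vertices.
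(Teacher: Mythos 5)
Your proof is correct, and its key step differs genuinely from the paper's. Both arguments open the same way: at the halting configuration the counts satisfy $a,b\le x$, and if one of them is below $x$ the other must equal $x$, since otherwise any remaining edge could be deleted (with the edgeless case ruled out by $n\ge x$). The divergence is in how the contradiction is reached in the asymmetric case. The paper (in the mirror case $a<x$, $b=x$) uses a component-counting argument: if every component had at most one terminal vertex then $a\ge b$, so some component has two non-isolated terminal vertices, hence contains a vertex $u$ of out-degree at least $2$; removing one of its out-edges keeps the terminal count fixed and raises the initial count to at most $a+1\le x$, so a legal move exists, contradicting halting. You instead extract global degree structure from the blocking condition: with $a=x$ and $b<x$, a deletion can only be blocked by creating an $(x{+}1)$-st initial vertex, so every vertex with positive in-degree has in-degree exactly $1$, forcing the edge count to equal $n-x$; comparing with the out-degree lower bound $n-b$ yields $b\ge x$, a numerical contradiction. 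Your route buys something concrete: it sidesteps the paper's implicit claim that a connected component with two sinks must contain a vertex of out-degree at least $2$ (true, but left unjustified there \textemdash{} it needs the observation that out-degrees all at most $1$ with two sinks would give at most $k-2$ edges in a connected $k$-vertex component), and it handles isolated vertices and degenerate cases uniformly through the counting identity. The paper's route, in exchange, is more local and constructive: it exhibits an explicit legal move, which is the style reused in its later Proposition on the $(x,x)$ edge-addition process.
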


\begin{proof}
Suppose that we completely run the $(x, x)$ edge-removal process and obtain the task-dependency graph $G$ with $r$ initial vertices and $s$ terminal vertices. To show that $r = s = x$, assume for contradiction that $r < x$ or $s < x$.  Without loss of generality, let $r < x$. Note that we must have $s = x$, or else we could remove another edge and the resulting task-dependency graph would still have at most $x$ initial vertices and at most $x$ terminal vertices. We first argue that $G$ must have at least $2$ non-isolated terminal vertices in the same component. Indeed, if every component of $G$ had at most one terminal vertex, then we would have $r \geq s$ since every component must also have at least one initial vertex, contradicting the assumption that $r < x$ and $s = x$. Thus some component $C$ of $G$ has multiple non-isolated terminal vertices. 

Since $C$ has multiple non-isolated terminal vertices, there must exist some vertex $u$ in $C$ with at least two outgoing edges. Let $u$ have edges to vertices $v$ and $w$. If we remove the edge $(u, v)$ from $G$, the resulting task-dependency graph has at most $r+1$ initial vertices and $s$ terminal vertices. This contradicts our assumption that the $(x, x)$ edge-removal process terminated on the task-dependency graph $G$, since it was still possible to remove an edge while keeping the number of initial vertices at most $x$ and the number of terminal vertices at most $x$. Thus the $(x, x)$ edge-removal process must result in an $(x, x)$ task-dependency graph.
\end{proof}

In the next proposition, we consider the $(x, y)$ edge-removal process for $x \neq y$, and we show that it is possible for the process to halt on task-deperndency graphs which do not have $x$ initial vertices and $y$ terminal vertices.

\begin{prop}
For all $x, y\ge 1$ with $x \neq y$ and $n \ge \max(x, y)$, the $(x, y)$ edge-removal process does not necessarily result in an $(x, y)$ task-dependency graph.
\end{prop}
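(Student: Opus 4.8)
The plan is to exhibit, for each admissible triple $(x,y,n)$, a single run of the $(x,y)$ edge-removal process that halts on a graph which is not an $(x,y)$ task-dependency graph. The whole argument rests on one monotonicity observation: deleting an edge $(u,v)$ can only turn $v$ into a new initial vertex (if its in-degree drops to $0$) and $u$ into a new terminal vertex (if its out-degree drops to $0$); it can never destroy an existing initial or terminal vertex. Hence, as the process runs, the set of initial vertices and the set of terminal vertices only grow. Consequently, if $H$ is any sub-DAG of the starting maximally connected graph $K$ with at most $x$ initial and at most $y$ terminal vertices, then every intermediate graph $G'$ with $H \subseteq G' \subseteq K$ has its initial vertices contained in those of $H$ and its terminal vertices contained in those of $H$, so it too has at most $x$ initial and at most $y$ terminal vertices. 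Therefore no deletion in the sequence from $K$ down to $H$ is ever cancelled, and $H$ is reachable with positive probability by \emph{any} deletion order.

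This reduces the proposition to building a reachable ``stuck'' configuration that is not an $(x,y)$ graph. Let $k=\min(x,y)$ and take $H$ to be a disjoint union of $k$ directed paths whose vertex sets partition $\{1,\dots,n\}$; this is possible since $n\ge\max(x,y)\ge k$, and after a topological relabeling $H$ is a subgraph of the transitive tournament $K$. Each path contributes exactly one initial vertex and one terminal vertex, so $H$ has exactly $k$ initial and $k$ terminal vertices. Because $x\neq y$ we have $k<\max(x,y)$, so $H$ is a $(k,k)$ task-dependency graph and in particular is not an $(x,y)$ task-dependency graph. Since $k\le x$ and $k\le y$, the first paragraph guarantees $H$ is reachable.

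It remains to check that the process is forced to halt at $H$, that is, that no edge can be removed. Every edge of $H$ lies on one of the paths, so for any edge $(u,v)$ we have $\indeg(v)=1$ and $\outdeg(u)=1$; deleting it simultaneously creates a new initial vertex $v$ and a new terminal vertex $u$, raising both counts from $k$ to $k+1$. As $k=\min(x,y)$, whichever of $x,y$ equals $k$ yields $k+1>x$ or $k+1>y$, so the deletion would exceed the allowed number of initial or terminal vertices and is cancelled. Thus no edge of $H$ can be removed, the process halts at $H$, and with positive probability the output is not an $(x,y)$ task-dependency graph. The same construction handles both $x<y$ and $x>y$ without modification, since $k=\min(x,y)$ is symmetric in $x$ and $y$.

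The only genuinely delicate point is the reachability claim, and the monotonicity observation dispatches it cleanly by making every deletion order admissible; once that is in hand, the disjoint-paths configuration is visibly a fixed point of the process. I would double-check the boundary case $n=\max(x,y)$, where $H$ still has $n-k\ge1$ edges so that the halting check is nonvacuous, and confirm that a $(k,k)$ graph can never accidentally be an $(x,y)$ graph, which would force $x=y$ and is therefore excluded by hypothesis.
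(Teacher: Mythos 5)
Your proof is correct and takes essentially the same approach as the paper: the paper's counterexample $G_{y,n}$ (a single directed path plus $y-1$ isolated vertices, with WLOG $x>y$) is exactly a special case of your disjoint union of $k=\min(x,y)$ paths, and the argument that such a configuration is a stuck state with only $k$ initial and $k$ terminal vertices is identical. Your explicit monotonicity/reachability lemma spells out a step the paper leaves implicit (that the process can actually reach the stuck configuration with positive probability), which is a useful addition but not a different method.
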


\begin{proof}
Without loss of generality, suppose that $x > y$, so $x \geq 2$. Let $G_{y, n}$ be the task-dependency graph on $\left\{1,2,\dots, n\right\}$ where $\left\{1,\dots, n-y+1\right\}$ form a directed path and there are $y-1$ isolated vertices $n-y+2, \dots, n$ (and there are no isolated vertices if $y = 1$). Then $G_{y,n}$ has $y$ initial vertices and $y$ terminal vertices. If we remove any edge, it would result in a task-dependency graph with $y+1$ terminal vertices. Thus it is impossible for the $(x, y)$ edge-removal process to result in an $(x, y)$ task-dependency graph if any round produces $G_{y,n}$. 
\end{proof}

Next we focus on finding the extremal values for the number of edges in $(x, y)$ task-dependency graphs obtained from the $(x, y)$ edge-removal process on $n$ vertices. We start with the minimum, which is clearly $n-1$ when $x = y = 1$. There cannot be fewer than $n-1$ edges, since otherwise there would be multiple components, which would imply multiple initial vertices and multiple terminal vertices. Furthermore the directed path of order $n$ has $n-1$ edges. We generalize this to all $x, y \ge 1$ by showing that the minimum possible number of edges in any $(x,y)$ task-dependency graph produced by the $(x,y)$ edge-removal process is $n-\min(x, y)$ for all $n > \max(x, y)$. We start with a lemma.

\begin{lem}\label{minedgelemma}
  For all $x, y \ge 1$ and $n > \max(x,y)$, the minimum possible number of edges in any $(x,y)$ task-dependency graph is $n-\min(x,y)$.  
\end{lem}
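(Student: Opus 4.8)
The plan is to establish matching lower and upper bounds on the number of edges, with the value $n-\min(x,y)$ as the target. First I would observe that both the quantity $n-\min(x,y)$ and the class of $(x,y)$ task-dependency graphs are invariant under reversing the orientation of every edge: this operation turns an $(x,y)$ task-dependency graph with a given number of edges into a $(y,x)$ task-dependency graph with the same number of edges. Hence it suffices to treat the case $x \le y$, where $\min(x,y)=x$, and to show that the minimum possible number of edges is exactly $n-x$.

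For the lower bound I would count edges by their endpoints. Since $|E(G)| = \sum_{v} \indeg(v) = \sum_{v} \outdeg(v)$, and every non-initial vertex has in-degree at least $1$ while every non-terminal vertex has out-degree at least $1$, the $n-x$ non-initial vertices give $|E(G)| \ge n-x$ and the $n-y$ non-terminal vertices give $|E(G)| \ge n-y$. Combining these yields $|E(G)| \ge \max(n-x,\,n-y) = n-\min(x,y)$. This direction is immediate and needs no case analysis.

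The substance lies in the matching construction. Equality in the in-degree bound forces every non-initial vertex to have in-degree exactly $1$, so an extremal graph must be a disjoint union of out-arborescences; the number of trees then equals the number of roots (the initial vertices) and the terminal vertices are exactly the vertices of out-degree $0$. Guided by this, I would take $x-1$ isolated vertices (each simultaneously an initial and a terminal vertex) together with one ``broom'': a directed path $v_1 \to v_2 \to \cdots \to v_{n-y}$ whose final vertex $v_{n-y}$ has edges to $y-x+1$ additional new vertices. One checks this uses $n$ vertices and $n-x$ edges, has exactly $x$ initial vertices (the $x-1$ isolated ones plus the root $v_1$) and exactly $y$ terminal vertices (the $x-1$ isolated ones plus the $y-x+1$ leaves), and is acyclic since it is a forest of arborescences.

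Finally I would verify that the construction is well defined precisely under the stated hypotheses: we need $y-x+1 \ge 1$, which holds because $x \le y$, and we need the handle of the broom to be nonempty, i.e.\ $n-y \ge 1$, which is exactly where $n > \max(x,y)=y$ is used. The step I expect to require the most care is confirming that the target terminal count $y$ is realizable for a forest of $x$ arborescences on $n$ vertices—equivalently that $y$ lies in the feasible leaf range $[x,\,n-1]$—and handling the degenerate cases $x=1$ (no isolated vertices) and $x=y$ (the broom collapses to a single directed path), so that all the vertex, edge, initial-vertex, and terminal-vertex counts still come out exactly.
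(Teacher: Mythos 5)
Your proposal is correct, and its extremal construction is essentially the paper's own: the paper, working under the opposite normalization $x \ge y$, takes $y-1$ isolated vertices, a directed path, and $x-y+1$ initial vertices all pointing into the head of that path, which is exactly your broom with every edge reversed (your reduction to $x \le y$ via edge reversal is the same symmetry the paper invokes as ``without loss of generality''). The genuine difference is the lower bound. The paper counts components: a graph of order $n$ with at most $n-y-1$ edges has at least $y+1$ components, and each component of a directed acyclic graph contains at least one terminal vertex, so an $(x,y)$ task-dependency graph must have more than $n-y-1$ edges. You instead count degrees: $|E(G)|=\sum_{v}\indeg(v)\ge n-x$ because every non-initial vertex has in-degree at least $1$, and $|E(G)|=\sum_{v}\outdeg(v)\ge n-y$ likewise, hence $|E(G)|\ge n-\min(x,y)$. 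Your version is marginally more self-contained---it uses only the definitions of initial and terminal vertices, whereas the paper's relies on the edge--component inequality plus the fact that every component of a DAG has a sink---and it yields both bounds at once, so the symmetry reduction is needed only for the construction. The paper's component-counting viewpoint is, however, the tool it reuses later (e.g., in the forest case analysis of Theorem~\ref{mainthm}), so it is the more economical choice in context. Your attention to the degenerate cases ($x=1$, $x=y$, and $n-y\ge 1$ forced by $n>\max(x,y)$) matches what the construction genuinely requires, and the proof is complete.
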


\begin{proof}
Without loss of generality, let $x \ge y$. For the upper bound, let $G$ be the task-dependency graph on $1, \dots, n$ with isolated vertices $n-y+2, \dots, n$ (and no isolated vertices if $y = 1$), a directed path on $x-y+2, \dots, n-y+1$, and edges $(a, x-y+2)$ for each $a$ with $1 \le a \le x-y+1$. Since the $y-1$ isolated vertices count as both initial and terminal vertices, $G$ has $x$ initial vertices, $y$ terminal vertices, and $n-y$ edges. Thus the minimum possible number of edges in any $(x,y)$ task-dependency graph is at most $n-y$.

For the lower bound, suppose that $H$ is a task-dependency graph of order $n$ with at most $n-y-1$ edges. Then $H$ has at least $y+1$ components, so $H$ has at least $y+1$ terminal vertices. Thus $H$ cannot be an $(x,y)$ task-dependency graph, so the number of edges in any $(x,y)$ task-dependency graph is always greater than $n-y-1$.
\end{proof}

\begin{cor}\label{eamin}
    For all $x, y \ge 1$ and $n > \max(x,y)$, the minimum possible number of edges in any $(x,y)$ task-dependency graph produced by the $(x,y)$ edge-removal process is $n-\min(x,y)$.  
\end{cor}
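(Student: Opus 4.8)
The plan is to derive the corollary from Lemma~\ref{minedgelemma} together with an explicit check that the extremal construction in that lemma can actually be produced by the process. The lower bound is immediate: any $(x,y)$ task-dependency graph produced by the $(x,y)$ edge-removal process is in particular an $(x,y)$ task-dependency graph, so by Lemma~\ref{minedgelemma} it has at least $n-\min(x,y)$ edges. It then remains to exhibit one $(x,y)$ task-dependency graph with exactly $n-\min(x,y)$ edges that arises as an output of the process. I would take $G$ to be the graph constructed in the proof of Lemma~\ref{minedgelemma}, assuming without loss of generality that $x \ge y$, so that $\min(x,y)=y$ and $G$ has $n-y$ edges.

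First I would verify that $G$ is a legitimate halting state, that is, a minimal $(x,y)$ task-dependency graph. Removing a source edge $(a,x-y+2)$ turns its tail, a vertex of out-degree $1$, into a new terminal vertex, while removing a path edge $(i,i+1)$ turns its tail into a new terminal vertex and its head, a vertex of in-degree $1$, into a new initial vertex; in either case a count is pushed above its cap. Hence once the process reaches $G$ no further removal is legal and it halts there.

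The main work is to show that $G$ is reachable from the maximally connected DAG $K$ (all edges $(i,j)$ with $i<j$) through a sequence of legal removals, each keeping at most $x$ initial and at most $y$ terminal vertices; here I would exploit that the numbers of initial and terminal vertices never decrease under edge removal, so it suffices to schedule the count-increasing removals carefully. In the first phase, remove every long edge of $K$ except the source edges $(a,x-y+2)$ with $a \le x-y+1$; because every vertex keeps its short out-edge and short in-edge throughout, the tail of each removed edge still has out-degree at least $2$ and its head still has in-degree at least $2$ at the moment of removal, so no new exterior vertex appears and the counts stay $(1,1)$. In the second phase, remove the short edges $(i,i+1)$ for $1 \le i \le x-y$; since each such tail still carries its source edge to $x-y+2$, no new terminal vertex is created, while the heads $2,\dots,x-y+1$ become the required new initial vertices, raising the initial count from $1$ to $x-y+1 \le x$. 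In the third phase, delete the path edges from right to left to isolate the vertices $n-y+2,\dots,n$ one at a time, each deletion creating exactly one new initial and one new terminal vertex, so the counts climb monotonically and reach precisely $(x,y)$.

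I expect the third phase to be the delicate step, since one must confirm that the running totals hit the caps $x$ and $y$ exactly at the end rather than overshooting partway through, and that isolating each vertex (together with exposing the path endpoint $n-y+1$ as terminal) contributes the intended single increment to each count. Once this reachability is established, $G$ is a positive-probability output of the process with $n-y=n-\min(x,y)$ edges, which combined with the lower bound yields the claimed value.
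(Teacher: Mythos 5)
Your proof is correct and takes essentially the same approach as the paper: the lower bound is cited from Lemma~\ref{minedgelemma}, and the upper bound uses the very same extremal graph $G$ from that lemma's proof. The only difference is that the paper merely asserts that $G$ ``can be used,'' whereas you additionally verify that $G$ is a halting state and exhibit an explicit legal removal schedule from the maximally connected graph down to $G$ --- detail the paper leaves implicit, and your three-phase schedule (including the count bookkeeping in the final phase, which ends exactly at $(x,y)$ without overshooting) does close that gap correctly.
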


\begin{proof}
    The lower bound follows from Lemma~\ref{minedgelemma}. For the upper bound, note that we can use the same task-dependency graph $G$ from the proof of Lemma~\ref{minedgelemma}.
\end{proof}

As a second corollary, we show the same result for any task-dependency graph produced by the $(x,y)$ edge-removal process. Note that we have removed the requirement from Corollary~\ref{eamin} that the task-dependency graph is an $(x, y)$ task-dependency graph.

\begin{cor}\label{eamintask}
  For all $x, y \ge 1$ and $n > \max(x,y)$, the minimum possible number of edges in any task-dependency graph produced by the $(x,y)$ edge-removal process is $n-\min(x,y)$.  
\end{cor}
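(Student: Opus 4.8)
The plan is to reduce this corollary to results already in hand, observing that it differs from Corollary~\ref{eamin} only in that the output is no longer required to be an $(x,y)$ task-dependency graph. First I would normalize by symmetry: reversing the direction of every edge interchanges initial and terminal vertices and converts the $(x,y)$ edge-removal process into the $(y,x)$ edge-removal process, while preserving the number of edges and the value of $\min(x,y)$. Hence it suffices to treat the case $x \ge y$, so that $\min(x,y) = y$.

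For the upper bound I would simply reuse the task-dependency graph $G$ constructed in Lemma~\ref{minedgelemma}. That graph is an $(x,y)$ task-dependency graph with exactly $n-y$ edges, and it is produced by the process; this is precisely the content of the upper bound in Corollary~\ref{eamin}. Since $G$ is in particular a task-dependency graph producible by the process, the minimum over all producible task-dependency graphs is at most $n-y = n-\min(x,y)$.

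The lower bound is where the weakened hypothesis matters, but it remains short. The key observation is that the $(x,y)$ edge-removal process never permits more than $y$ terminal vertices at any stage, so every task-dependency graph $H$ that it produces has at most $y$ terminal vertices, whether or not $H$ is an $(x,y)$ task-dependency graph. Because $H$ is a directed acyclic graph, following a maximal directed path in any weakly connected component shows that each component contains at least one terminal vertex; hence $H$ has at most $y$ components. A graph on $n$ vertices with $c$ components has at least $n-c$ edges, so $H$ has at least $n-y$ edges. Combining the two bounds yields exactly $n-\min(x,y)$.

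There is essentially no obstacle here beyond bookkeeping, and the genuinely new ingredient is just the short lower-bound argument, since the earlier lower bound in Lemma~\ref{minedgelemma} was phrased for $(x,y)$ task-dependency graphs but in fact only used the bound on terminal vertices. The one point deserving care is the claim, inherited from Corollary~\ref{eamin}, that the witness graph $G$ is genuinely reachable by the process, i.e.\ that some removal order from the maximally connected graph reaches $G$ without transiently exceeding $x$ initial or $y$ terminal vertices; since this was already required for Corollary~\ref{eamin}, I would cite it rather than reprove it.
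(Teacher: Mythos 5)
Your proof is correct and takes essentially the same route as the paper: both rest on the fact that the process never lets the number of initial vertices exceed $x$ or the number of terminal vertices exceed $y$, so the output has at most $\min(x,y)$ components (each component of a DAG containing a terminal vertex), giving the lower bound $n-\min(x,y)$, with the witness graph from Corollary~\ref{eamin} supplying the upper bound. The only cosmetic differences are that you inline the component-counting argument (plus an unneeded but harmless symmetry normalization), whereas the paper reaches the same count by a case split that cites Corollary~\ref{eamin} and Lemma~\ref{minedgelemma}.
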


\begin{proof}
Suppose that $G$ is a task-dependency graph produced by the $(x, y)$ edge-removal process on $n$ vertices. Since the edge-removal process never decreases the number of initial vertices or terminal vertices in any round, $G$ must be an $(r, s)$ task-dependency graph for some $r \le x$ and $s \le y$. If $r = x$ and $s = y$, then $G$ has at least $n-\min(x,y)$ edges by Corollary~\ref{eamin}. Otherwise $r \le x-1$ or $s \le y-1$. In either case, $G$ has at least $n-\min(x,y)$ edges by Lemma~\ref{minedgelemma}. 
\end{proof}

For the remainder of this section, we focus on finding the maximum possible number of edges in $(x, y)$ task-dependency graphs obtained from the $(x, y)$ edge-removal process on $n$ vertices, which is an equivalent problem to determining the maximum possible number of edges in a minimal $(x,y)$ task-dependency graph of order $n$. As a corollary, we obtain asymptotic bounds on the expected number of edges in the random task-dependency graph obtained from the $(x, y)$ edge-removal process on $n$ vertices. In order to determine the maximum possible number of edges in a minimal $(x,y)$ task-dependency graph of order $n$, we prove several lemmas. 

\begin{lem}\label{edge_remove}
Graph $G$ is a minimal $(x, y)$ task-dependency graph if and only if for every $(u,v) \in E(G)$, $\outdeg(u) = 1$ or $\indeg(v) = 1$.
\end{lem}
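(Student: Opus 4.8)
The plan is to reduce minimality, which is a statement quantified over all edge removals, to a purely local degree condition at the endpoints of each edge. First I would observe that removing any edge from a directed acyclic graph leaves it acyclic, so every single-edge deletion still yields a task-dependency graph; the only way $G-(u,v)$ can fail to be an $(x,y)$ task-dependency graph is for the number of initial vertices or the number of terminal vertices to change away from $x$ or $y$.

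The key observation I would isolate is that deleting $(u,v)$ alters exactly two degrees: it decreases $\indeg(v)$ by one and $\outdeg(u)$ by one, while leaving every other in-degree and out-degree untouched. Since initial vertices are exactly those of in-degree $0$ and we only ever decrease in-degrees, the set of initial vertices can never shrink under an edge removal; it grows if and only if $v$ drops to in-degree $0$, that is, if and only if $\indeg(v)=1$, in which case it grows by the single vertex $v$. The symmetric statement holds for terminal vertices and $\outdeg(u)$. I would state this monotonicity explicitly, since it is what lets me conclude that the count remains equal to $x$ (resp. $y$) precisely when the relevant degree is at least $2$, rather than merely bounding it. It follows that $G-(u,v)$ is again an $(x,y)$ task-dependency graph if and only if $\indeg(v)\ge 2$ and $\outdeg(u)\ge 2$.

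Taking the contrapositive gives the local criterion: removing $(u,v)$ destroys the $(x,y)$ property if and only if $\indeg(v)=1$ or $\outdeg(u)=1$. Quantifying over all edges then finishes the argument, since $G$ is minimal exactly when every edge removal destroys the $(x,y)$ property, which is precisely the condition that for every $(u,v)\in E(G)$ one has $\outdeg(u)=1$ or $\indeg(v)=1$. Both directions of the biconditional come directly from this equivalence.

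There is no substantial obstacle in this argument; it is essentially a careful bookkeeping of which degrees change under a single deletion. The only point deserving a moment's care is the monotonicity remark above, namely that an edge removal can never decrease the count of initial or terminal vertices, which is what upgrades the naive ``at least $x$'' type bounds into the clean equalities needed for the ``if and only if.''
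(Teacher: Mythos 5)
Your proof is correct and takes essentially the same approach as the paper's: both directions reduce to the observation that deleting $(u,v)$ only affects $\outdeg(u)$ and $\indeg(v)$, so the deletion preserves the $(x,y)$ property exactly when both of those degrees are at least $2$. Your explicit monotonicity remark (initial/terminal sets can only grow under deletion) is a cleaner substitute for the paper's path-tracing argument in the forward direction, but the substance is identical.
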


\begin{proof}
Suppose that $G$ is a minimal $(x, y)$ task-dependency graph and  $\outdeg(u) > 1$ and $\indeg(v) > 1$ for some $(u,v)\in E(G)$. Then there exist vertices $w, z \in V(G)$ such that $(u, w) \in E(G)$ and $(z, v) \in E(G)$. Since $G$ is an $(x, y)$ task-dependency graph, there is a path from $w$ to at least one of the terminal vertices of  $G$, and there is a path from at least one of the initial vertices of $G$ to $z$. Thus if we remove the edge $(u, v)$ from $E(G)$, then there is still a path from at least one of the initial vertices to $v$ that passes through $z$, and there is still a path from $u$ to at least one of the terminal vertices that passes through $w$.

If for every $(u,v)\in E(G)$ we have $\outdeg(u) = 1$ or $\indeg(v) = 1$, then removing $(u,v)$ makes $u$ a new initial vertex or $v$ a new terminal vertex. Thus $G$ is a minimal $(x,y)$ task-dependency graph.
\end{proof}

We call a path $v_1,v_2,\ldots,v_k$ with $k \geq 3$ \textit{removable} if $\indeg(v_i)=\outdeg(v_i)=1$ for $1<i<k$ and $\outdeg(v_1)>1,\indeg(v_k)>1$. By \textit{removing} a removable path we mean to delete the edges $(v_1,v_2),\ldots,(v_{k-1},v_k)$ and the vertices $v_2,\ldots,v_{k-1}$. 

\begin{lem}\label{remove_p}
If $G$ is a minimal $(x, y)$ task-dependency graph with a removable path $P$, then the graph $G'$ obtained from $G$ by removing $P$ is also a minimal $(x,y)$ task-dependency graph.
\end{lem}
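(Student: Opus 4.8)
The plan is to verify the three properties that make $G'$ a minimal $(x,y)$ task-dependency graph: that $G'$ is acyclic, that it has exactly $x$ initial and $y$ terminal vertices, and that it satisfies the minimality criterion of Lemma~\ref{edge_remove}. Since $G'$ is obtained from $G$ by deleting edges and vertices, it is automatically a directed acyclic graph, so acyclicity is immediate and the bulk of the work is bookkeeping on degrees. First I would record exactly how degrees change. The deleted edges are the path edges $(v_1,v_2),\dots,(v_{k-1},v_k)$, which are incident only to $v_1,\dots,v_k$, and each deleted interior vertex $v_2,\dots,v_{k-1}$ had $\indeg=\outdeg=1$ in $G$, so its only incident edges were path edges. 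Consequently every vertex of $G'$ other than $v_1$ and $v_k$ keeps the same in- and out-degree it had in $G$; the vertex $v_1$ loses exactly one outgoing edge, so its out-degree drops from $\outdeg_G(v_1)\ge 2$ to $\outdeg_G(v_1)-1\ge 1$ while its in-degree is unchanged; and $v_k$ loses exactly one incoming edge, so its in-degree drops from $\indeg_G(v_k)\ge 2$ to $\indeg_G(v_k)-1\ge 1$ while its out-degree is unchanged.

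For the vertex count, the deleted interior vertices are neither initial nor terminal in $G$, so removing them does not disturb the tally. Among the surviving vertices, all those different from $v_1$ and $v_k$ retain their status. Since $v_1$ keeps its in-degree it keeps its initial/non-initial status, and since its out-degree stays positive it remains non-terminal exactly as in $G$; symmetrically $v_k$ keeps its terminal status and remains non-initial. Hence the initial vertices of $G'$ are exactly the initial vertices of $G$, and likewise for terminal vertices, so $G'$ is an $(x,y)$ task-dependency graph.

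For minimality I would apply Lemma~\ref{edge_remove} to $G$ and transfer the condition edge by edge. Every edge $(u,v)\in E(G')$ is also an edge of $G$, so minimality of $G$ gives $\outdeg_G(u)=1$ or $\indeg_G(v)=1$. If $\outdeg_G(u)=1$, then $u\neq v_1$ because $\outdeg_G(v_1)>1$, and the out-degree of every vertex other than $v_1$ is preserved, so $\outdeg_{G'}(u)=1$. Symmetrically, if $\indeg_G(v)=1$, then $v\neq v_k$ because $\indeg_G(v_k)>1$, and the in-degree of every vertex other than $v_k$ is preserved, so $\indeg_{G'}(v)=1$. Thus every edge of $G'$ satisfies the criterion of Lemma~\ref{edge_remove}, and $G'$ is a minimal $(x,y)$ task-dependency graph.

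The main obstacle, though modest, is the careful handling of the two endpoints $v_1$ and $v_k$, since these are the only vertices whose degrees actually change. One must check that the single edge lost at each endpoint neither converts it into a new exterior vertex, which would spoil the $(x,y)$ count, nor destroys the out-degree-one or in-degree-one witness needed for minimality. The hypotheses $\outdeg(v_1)>1$ and $\indeg(v_k)>1$ in the definition of a removable path are precisely what guarantee both conclusions, and the argument reduces to observing that a degree of at least $2$ drops to a degree of at least $1$.
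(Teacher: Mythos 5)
Your proof is correct and takes essentially the same approach as the paper: both argue that the sufficient condition of Lemma~\ref{edge_remove} continues to hold for every edge of $G'$. The paper's proof is a one-sentence assertion that this is clear, while you fill in the degree bookkeeping at $v_1$ and $v_k$ and the check that the $(x,y)$ counts are preserved, details the paper leaves implicit.
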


\begin{proof}
Clearly the sufficient condition for $G'$ to be a minimal $(x,y)$ task-dependency graph in Lemma~\ref{edge_remove} still holds, so $G'$ is a minimal task-dependency graph.
\end{proof}

\begin{lem}\label{p_to_rp}
Suppose that $G$ is a minimal $(x, y)$ task-dependency graph that has a path $(v_1,\ldots,v_k)$ with $k\geq 3$, $\outdeg(v_1)>1$, and $\indeg(v_k)>1$. Then $G$ has a removable path.
\end{lem}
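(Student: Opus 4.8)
The plan is to produce the required removable path as a subpath of the given path $(v_1,\dots,v_k)$, extracted between a carefully chosen pair of ``branching'' vertices. The only structural fact I need is the characterization of minimality from Lemma~\ref{edge_remove}: since $G$ is a minimal $(x,y)$ task-dependency graph, no edge $(u,v)\in E(G)$ can have both $\outdeg(u)>1$ and $\indeg(v)>1$ simultaneously. I will apply this repeatedly to individual edges of the path.

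First I would let $i$ be the \emph{largest} index in $\{1,\dots,k-1\}$ with $\outdeg(v_i)>1$; this exists because $\outdeg(v_1)>1$. Then I would let $j$ be the \emph{smallest} index in $\{i+1,\dots,k\}$ with $\indeg(v_j)>1$; this exists because $\indeg(v_k)>1$ and $i\le k-1$ gives $k\ge i+1$. I claim $v_i,v_{i+1},\dots,v_j$ is a removable path. Its endpoints satisfy $\outdeg(v_i)>1$ and $\indeg(v_j)>1$ by construction, so it remains to check that every interior vertex $v_l$ with $i<l<j$ has $\indeg(v_l)=\outdeg(v_l)=1$, and that there is at least one such interior vertex, i.e.\ $j\ge i+2$.

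For the interior degrees, each $v_l$ with $i<l<j$ lies strictly before $v_k$ (since $l<j\le k$), so it has the outgoing path-edge $(v_l,v_{l+1})$, giving $\outdeg(v_l)\ge 1$; the maximality of $i$ forces $\outdeg(v_l)\le 1$, hence $\outdeg(v_l)=1$. Symmetrically, $v_l$ has the incoming path-edge $(v_{l-1},v_l)$, so $\indeg(v_l)\ge 1$, while the minimality of $j$ forces $\indeg(v_l)\le 1$, hence $\indeg(v_l)=1$. Finally, to rule out $j=i+1$ I would apply Lemma~\ref{edge_remove} to the edge $(v_i,v_{i+1})$: since $\outdeg(v_i)>1$, minimality forces $\indeg(v_{i+1})=1$, so $v_{i+1}$ cannot be the first in-branching vertex after $v_i$, which yields $j\ge i+2$. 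Together these checks show $v_i,\dots,v_j$ is removable.

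I expect the main obstacle to be organizing things so that a single pair $(i,j)$ works, rather than having to chase a walk dynamically through $G$. Choosing $i$ as the \emph{last} out-branching vertex among $v_1,\dots,v_{k-1}$ and $j$ as the \emph{first} in-branching vertex strictly after $i$ is the crucial move: it automatically makes every intermediate vertex non-branching in both directions, and the interaction of these two extremal choices with the minimality condition on the single edge $(v_i,v_{i+1})$ is exactly what guarantees the interior is nonempty.
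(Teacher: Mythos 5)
Your proof is correct, and it reaches the conclusion by a genuinely more direct route than the paper. The paper proves Lemma~\ref{p_to_rp} by induction on $k$: it inspects $v_2$, recurses on the suffix $(v_2,\dots,v_{k+1})$ when $\outdeg(v_2)>1$, and otherwise walks forward to the first vertex where a degree condition changes, again invoking the inductive hypothesis on a shorter path. Your argument replaces this recursion with a single extremal choice --- $i$ the \emph{last} index in $\{1,\dots,k-1\}$ with $\outdeg(v_i)>1$, then $j$ the \emph{first} index in $\{i+1,\dots,k\}$ with $\indeg(v_j)>1$ --- and verifies directly that $(v_i,\dots,v_j)$ is removable. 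Both proofs rest on the same key fact, Lemma~\ref{edge_remove}, and both use it at the same critical spot: your application of it to the edge $(v_i,v_{i+1})$ to force $\indeg(v_{i+1})=1$, hence $j\ge i+2$, plays exactly the role that the paper's deductions ``$\indeg(v_2)=1$'' and ``$i<k$'' play in keeping the candidate path at least three vertices long. What your version buys is economy and transparency: the two extremal choices make every intermediate vertex automatically have in-degree and out-degree $1$, so there is no case analysis and no induction; in effect, your static choice of $(i,j)$ is what the paper's induction computes dynamically. One could argue the paper's inductive phrasing is slightly more self-contained for a reader who wants to see the removable path ``grown'' step by step, but your argument is shorter, avoids the somewhat delicate case split in the paper's inductive step, and is the version I would recommend keeping.
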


\begin{proof}
We prove the lemma by induction on $k$. If $k = 3$, then it follows immediately from Lemma~\ref{edge_remove} that any path $(v_1,v_2,v_3)$ with $\outdeg(v_1) > 1$ and $\indeg(v_3) = 1$ must be a removable path.

Now suppose for some $k \geq 3$ that the lemma is true for all $3 \leq j \leq k$ and that $G$ has a path $(v_1,..,v_{k+1})$ with $\outdeg(v_1) > 1$ and $\indeg(v_{k+1})> 1$. By Lemma~\ref{edge_remove}, $\indeg(v_2) = 1$. If $\outdeg(v_2) > 1$, then $(v_2, \dots, v_{k+1})$ contains a removable path by inductive hypothesis.

Otherwise if $\outdeg(v_2) = 1$, then let $i > 2$ be minimal such that $\indeg(v_i) > 1$ or $\outdeg(v_i) = 1$. If $\indeg(v_i) = 1$, then $(v_1, \dots, v_i)$ is a removable path. If $\outdeg(v_i) > 1$, then $i < k$ by Lemma~\ref{edge_remove}. Thus $(v_i, \dots, v_{k+1})$ contains a removable path by inductive hypothesis.    
\end{proof}

\begin{cor}\label{underlying}
    If $G$ is a minimal $(x, y)$ task-dependency graph with no removable path, then the underlying undirected graph of $G$ is a disjoint union of trees.
\end{cor}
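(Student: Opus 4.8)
The plan is to use the fact that ``disjoint union of trees'' is the same as ``underlying undirected graph has no cycle,'' and to show that the presence of any undirected cycle would force either a removable path (contradicting the hypothesis) or a violation of the minimality characterization in Lemma~\ref{edge_remove}. So I would argue by contradiction: assume the underlying undirected graph of $G$ is not a disjoint union of trees, hence it contains some undirected cycle $C$, which has at least three vertices since $G$ is a simple directed acyclic graph.

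Since $G$ is acyclic as a directed graph, $C$ cannot be a consistently oriented directed cycle, so it must contain a vertex whose two $C$-edges both leave it (a source on $C$) and a vertex whose two $C$-edges both enter it (a sink on $C$). First I would make this precise using a topological ordering of $V(G)$: the vertex of $C$ that is smallest in the ordering has both of its $C$-edges oriented outward, and the largest has both oriented inward. Starting at such a source $u$ and walking along $C$ while the traversed $C$-edges remain forward-oriented, I reach a first vertex $w$ whose next $C$-edge points back into it. This yields a directed path $(u = v_1, v_2, \ldots, v_k = w)$ along $C$ with $k \ge 2$, and because the two $C$-edges at $u$ both leave it while the two $C$-edges at $w$ both enter it, we get $\outdeg(v_1) > 1$ and $\indeg(v_k) > 1$.

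Then I would split into two cases according to the length of this path. If $k \ge 3$, then $(v_1, \ldots, v_k)$ is exactly a path of the form required by Lemma~\ref{p_to_rp}, so $G$ contains a removable path, contradicting the hypothesis that $G$ has none. If instead $k = 2$, then $(v_1, v_2)$ is an edge with $\outdeg(v_1) > 1$ and $\indeg(v_2) > 1$, which directly contradicts Lemma~\ref{edge_remove}, the characterization of minimal $(x,y)$ task-dependency graphs. Either way we reach a contradiction, so $G$ contains no undirected cycle and its underlying undirected graph is a disjoint union of trees.

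The main obstacle is extracting the source-to-sink directed path cleanly and making sure the strict degree inequalities $\outdeg(v_1) > 1$ and $\indeg(v_k) > 1$ genuinely come from two distinct $C$-edges at each endpoint rather than from double-counting a single edge; the topological-ordering argument for the existence of a local source and a local sink is what makes the extraction rigorous, and the acyclicity of $G$ guarantees the walk terminates at a sink before returning to $u$. Once the path is in hand, the two earlier lemmas dispose of both cases immediately.
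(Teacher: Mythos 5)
Your proof is correct and takes essentially the same approach as the paper: the paper partitions the undirected cycle's edges into maximal directed paths and notes that a one-edge maximal path contradicts Lemma~\ref{edge_remove} while a longer one yields a removable path via Lemma~\ref{p_to_rp}, which is exactly your case split at $k=2$ versus $k\ge 3$. Your topological-ordering walk from a local source to a local sink is simply a more explicit construction of the maximal directed path that the paper extracts from its partition.
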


\begin{proof}
Suppose for contradiction that the underlying graph of $G$ has a cycle $C$. Partition the edges of this cycle $C$ into maximal directed paths, i.e., directed paths of maximum possible length. Since $G$ is acyclic, the partition includes at least two maximal directed paths. If any of the maximal directed paths has only one edge $e=(u,v)$, then $\outdeg(u)>1$ and $\indeg(v)>1$ by maximality, contradicting Lemma~\ref{edge_remove}. So $G$ has a path $(v_1,\ldots,v_k)$ with $k\geq 3$, $\outdeg(v_1)>1$, and $\indeg(v_k)>1$ by maximality. By Lemma~\ref{p_to_rp}, $G$ must have a removable path, contradicting the assumption.
\end{proof}

We define two families of minimal $(x,y)$ task-dependency graphs which we will use in the following proofs. For each $x \geq 1$, $y \geq 1$, and $n \geq x+2$, define $S_{x,y,n}$ to be the minimal $(x,y)$ task-dependency graph of order $n$ with $y-1$ isolated vertices, $x-y+1$ non-isolated initial vertices $u_1, \dots, u_{x-y+1}$, $1$ non-isolated terminal vertex $v$, $n-x-1$ interior vertices $i_1, \dots, i_{n-x-1}$, and edges $(u_1, i_t)$ for $1 \leq t \leq n-x-1$, $(i_t, v)$ for $1 \leq t \leq n-x-1$, and $(u_t, v)$ for $2 \leq t \leq x-y+1$. For each $x \geq 1$, $y \geq 1$, and $n \geq x+y+1$, define $T_{x,y,n}$ to be the minimal $(x,y)$ task-dependency graph of order $n$ with $x$ initial vertices $u_1, \dots, u_x$, $y$ terminal vertices $v_1, \dots, v_y$, $n-x-y$ interior vertices $i_1, \dots, i_{n-x-y}$, and edges $(u_1, i_t)$ for $1 \leq t \leq n-x-y$, $(i_t, v_1)$ for $1 \leq t \leq n-x-y$, $(u_1, v_t)$ for $2 \leq t \leq y$, and $(u_t, v_1)$ for $2 \leq t \leq x$.

\begin{thm}\label{mainthm}
For $x,y\geq1$, the maximum possible number of edges of a minimal $(x,y)$ task-dependency graph of order $n$ is $0$ if $n = \max(x, y)$ (in which case $x=y$), $2n-x-y-1$ if $n=\max(x, y)+1$, and $2n-x-y-2$ if $n>\max(x, y)+1$.
\end{thm}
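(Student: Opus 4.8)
The plan is to handle the three regimes for $n$ separately, and throughout I would reduce to the case $x \ge y$ using edge-reversal symmetry: reversing every edge turns a minimal $(x,y)$ task-dependency graph into a minimal $(y,x)$ one with the same number of edges, swapping the roles of initial and terminal vertices. The two boundary regimes are essentially forced and should be quick. When $n=\max(x,y)=x$, having $x=n$ initial vertices means every vertex has in-degree $0$, so there are no edges, whence every vertex also has out-degree $0$ and $y=n=x$; the maximum is $0$. When $n=\max(x,y)+1=x+1$, exactly one vertex $w$ is non-initial, and since every other vertex has in-degree $0$, every edge must point into $w$. Thus $w$ is the unique non-isolated sink, its in-degree $d$ equals the number of edges, and the terminal vertices are $w$ together with the $x-d$ sources of out-degree $0$. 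Counting terminal vertices gives $y=1+(x-d)$, so $d=x-y+1=2n-x-y-1$ is forced; moreover every such graph is minimal by Lemma~\ref{edge_remove} (each edge has a tail of out-degree $1$), so the maximum equals $2n-x-y-1$.

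The substance is the regime $n>\max(x,y)+1$, i.e.\ $n\ge x+2$ with $x\ge y$. For the lower bound I would exhibit $S_{x,y,n}$: its minimality is immediate from Lemma~\ref{edge_remove} since every edge has a tail of out-degree $1$ or a head of in-degree $1$, and a direct count gives exactly $2n-x-y-2$ edges. For the upper bound I would count via excess out-degree. Writing $\Phi=\sum_{v}\max(\outdeg(v)-1,0)$ and noting that the non-terminal vertices are exactly those of positive out-degree, I get $m=(n-y)+\Phi$, so the goal reduces to showing $\Phi\le n-x-2$.

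Let $\alpha$ be the number of vertices of out-degree at least $2$ (the ``out-hubs''). By Lemma~\ref{edge_remove} every out-neighbor of an out-hub has in-degree $1$, and since an in-degree-$1$ vertex receives only one edge, the $\alpha+\Phi$ edges leaving the out-hubs land on $\alpha+\Phi$ distinct non-initial vertices; hence $\alpha+\Phi\le n-x$. If $\alpha=0$ then $\Phi=0\le n-x-2$, and if $\alpha\ge 2$ then $\Phi\le n-x-\alpha\le n-x-2$, so the only way the bound can fail is $\alpha=1$ with $\Phi=n-x-1$. The hard part is exactly this equality case, which I would eliminate by a rigidity argument. If $\alpha=1$ with out-hub $u^{\ast}$ and $\Phi=n-x-1$, then $u^{\ast}$ has out-degree $n-x$ and its out-neighbors are precisely the $n-x$ non-initial vertices, each of in-degree $1$; in particular $u^{\ast}$ is initial. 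Then no vertex other than $u^{\ast}$ can carry an out-edge, since its head would be a non-initial vertex already saturated by $u^{\ast}$. Hence every vertex except $u^{\ast}$ is terminal, giving $y=n-1$, so $\max(x,y)=n-1$ and $n=\max(x,y)+1$, contradicting $n>\max(x,y)+1$.

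Therefore $\Phi\le n-x-2$ and $m\le 2n-x-y-2$, matching the construction, which completes the regime $n>\max(x,y)+1$. I expect the bookkeeping in the rigidity step---verifying that all competing out-edges are forced away and that the extremal configuration collapses into the $n=\max(x,y)+1$ regime---to be the one genuinely delicate point; everything else is degree counting together with the minimality criterion of Lemma~\ref{edge_remove}. An alternative route for the upper bound is to strip off removable paths via Lemma~\ref{remove_p} until no removable path remains, invoke Corollary~\ref{underlying} to reduce to a forest, and bound the forest directly; but that approach runs into the same boundary subtlety (reductions can land in the $n=\max(x,y)$ or $n=\max(x,y)+1$ cases, where the per-vertex edge bound is weaker), so I would prefer the self-contained degree count above.
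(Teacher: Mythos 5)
Your proof is correct, but it takes a genuinely different route from the paper's for the main regime $n>\max(x,y)+1$. The paper proceeds by induction on $n$: it handles $n=x+2$ as a base case, then either finds a removable path and deletes it (Lemma~\ref{remove_p}, with a separate argument that the reduced graph does not fall into the $n=x+1$ regime), or, when no removable path exists, invokes Corollary~\ref{underlying} to reduce to a disjoint union of trees and counts components. You instead give a direct, non-inductive degree count: writing $m=(n-y)+\Phi$ with $\Phi=\sum_v\max(\outdeg(v)-1,0)$, using only the forward direction of Lemma~\ref{edge_remove} to show the $\alpha+\Phi$ edges leaving out-hubs have distinct non-initial heads (so $\alpha+\Phi\le n-x$), and then eliminating the equality case $\alpha=1$, $\Phi=n-x-1$ by a rigidity argument that collapses the configuration into the $n=\max(x,y)+1$ regime. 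Your approach buys brevity and uniformity: it needs none of the removable-path machinery (Lemmas~\ref{remove_p} and \ref{p_to_rp}, Corollary~\ref{underlying}), treats all $n\ge x+2$ at once with no base case, and avoids exactly the boundary subtlety you flag at the end. What the paper's heavier machinery buys is reuse: the induction via removable paths and the forest analysis are the backbone of Theorem~\ref{structure-thm}, which characterizes the extremal graphs; to get that characterization from your method one would have to push the equality analysis further (classifying when $\alpha+\Phi=n-x-2$ can hold, etc.), which you have not done, though your $\alpha=1$ analysis suggests it is feasible. Your treatment of the two boundary regimes ($n=\max(x,y)$ and $n=\max(x,y)+1$) matches the paper's in substance, with the nice extra observation that at $n=\max(x,y)+1$ the edge count is forced for every $(x,y)$ task-dependency graph, not just the extremal ones.
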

\begin{proof}
Without loss of generality, suppose that $x \ge y$. If $n = x$, then all vertices in any $(x,y)$ task-dependency graph of order $n$ are both initial vertices and terminal vertices, so $x = y$ and there are no edges. If $y < x$,  then any $(x,y)$ task-dependency graph $G$ of order $n$ has at most $y-1$ isolated vertices, each of which counts as both an initial vertex and a terminal vertex, i.e., $G$ has at least $x+1$ exterior vertices. Thus if the order of $G$ is $x+1$, then $G$ has $y-1$ isolated vertices, one non-isolated terminal vertex, $x-y+1$ non-isolated initial vertices, no interior vertex, and $x-y+1=2n-x-y-1$ edges, with one edge from each non-isolated initial vertex to the single non-isolated terminal vertex.

For $n>x+1$ we prove the statement by induction on $n$. If $G$ is an $(x,y)$ task-dependency graph of order $n = x+2$, then $G$ either has (a) $y-1$ isolated vertices, $x-y+1$ non-isolated initial vertices, one non-isolated terminal vertex, and one interior vertex, or (b) $y-2$ isolated vertices, $x-y+2$ non-isolated initial vertices, two non-isolated terminal vertices, and no interior vertex. Note that case (b) can only happen when $y > 1$. In either case $G$ has $2n-x-y-2$ edges. 

Now suppose that $n>x+2$, and let $G$ be a minimal $(x,y)$ task-dependency graph of order $n$. If $G$ has a removable path $P=(v_1,\ldots,v_k)$, then
we remove it to obtain a new $(x,y)$ task-dependency graph $G'$ by Lemma~\ref{remove_p}. $G'$ cannot have exactly $x+1$ vertices, as otherwise every non-isolated vertex in $G'$ is exterior and $(v_1,v_k)\in E(G)$, i.e., $G$ is still an $(x,y)$ task-dependency graph after removing $(v_1,v_k)$, a contradiction of its minimality. Therefore $G'$ has at least $x+2$ vertices and thus by inductive assumption at most $2(n-k+2)-x-y-2$ edges. Thus, graph $G$ has at most $2(n-k+2)-x-y-2+k-1\leq 2n-x-y-2$ edges since $k \ge 3$.

If $G$ does not have a removable path, then the underlying undirected graph of $G$ must be a disjoint union of trees by Corollary~\ref{underlying}. If $G$ has $z$ isolated vertices, then it must have at least $z+1$ components since $n \geq x+3$.

Suppose that $G$ has $z+1$ components, and let $C$ be the component of $G$ that is not an isolated vertex. Then the underlying undirected graph of $C$ must be a tree. Thus this component must have at least one interior vertex since $n \geq x+3$ and $x \geq y$. Furthermore, $G$ has $n-(z+1)$ edges. Clearly $x+y-z\leq n-1$ in this case, so $G$ has at most $2n-x-y-2$ edges.

Otherwise $G$ has at least $z+2$ components. In this case, $G$ has at most $n-(z+2)$ edges. Since $x+y-z\leq n$, we conclude that $G$ has at most $2n-x-y-2$ edges.

Note that the upper bound of $2n-x-y-2$ edges is attained by the family of graphs $S_{x,y,n}$ for all $n \geq x+2$. 
\end{proof}

Using the last theorem, we can also determine the maximum possible number of edges in any task-dependency graph produced by the $(x, y)$ edge-removal process on $n$ vertices. Note that we do not require the produced task-dependency graph to have $x$ initial vertices and $y$ terminal vertices in the next corollary.

\begin{cor}
For $x, y \ge 1$ and $n > x+y$, the maximum possible number of edges in a task-dependency graph produced by the $(x, y)$ edge-removal process on $n$ vertices is $2n-x-y-2$.
\end{cor}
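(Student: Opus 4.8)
The plan is to analyze a graph $G$ produced by the $(x,y)$ edge-removal process exactly as in the proof of Corollary~\ref{eamintask}: since no round ever decreases the number of initial or terminal vertices, $G$ is an $(r,s)$ task-dependency graph for some $r \le x$ and $s \le y$, and because the process has halted, removing any edge of $G$ would push the number of initial vertices above $x$ or the number of terminal vertices above $y$. I would then split into cases according to whether $r = x$ and $s = y$, and prove the upper bound in each.

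If $r = x$ and $s = y$, then for every edge $(u,v)$ the removal must create either a new initial vertex or a new terminal vertex, i.e.\ $\outdeg(u)=1$ or $\indeg(v)=1$; by Lemma~\ref{edge_remove} this means $G$ is a minimal $(x,y)$ task-dependency graph. Since $x,y\ge 1$ force $n > x+y \ge \max(x,y)+1$, Theorem~\ref{mainthm} applies and gives at most $2n-x-y-2$ edges. If instead $r < x$, then removing a single edge can never by itself create more than $x$ initial vertices (as $r+1\le x$), so the halting condition forces $s = y$ and $\indeg(v)=1$ for every edge $(u,v)$; hence every vertex has in-degree at most $1$, so $G$ has exactly $n-r$ edges. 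Since $r\ge 1$ and $n\ge x+y+1$, we have $n-r\le 2n-x-y-2$, equivalently $x+y+2-r\le n$, which holds. The case $s<y$ is symmetric: it forces $r=x$ and out-degree at most $1$ at every vertex, giving $n-s\le 2n-x-y-2$ edges. Thus every produced graph has at most $2n-x-y-2$ edges.

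For attainment I would exhibit a run ending at $S_{x,y,n}$, which by Theorem~\ref{mainthm} is a minimal $(x,y)$ task-dependency graph with exactly $2n-x-y-2$ edges (and is defined, since $n > x+y \ge x+1$ gives $n \ge x+2$). Relabel its vertices along a topological order so that $S_{x,y,n}$ becomes a subgraph of the maximally connected starting graph $K$ on $\{1,\dots,n\}$. The key observation is monotonicity: for any $H'$ with $S_{x,y,n}\subseteq H'\subseteq K$, every initial (resp.\ terminal) vertex of $H'$ is also initial (resp.\ terminal) in $S_{x,y,n}$, so $H'$ has at most $x$ initial and at most $y$ terminal vertices. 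Hence the edges of $E(K)\setminus E(S_{x,y,n})$ can be deleted one at a time in any order without ever violating the constraints, so with positive probability the process removes exactly these edges and then halts at $S_{x,y,n}$ (no further deletion is possible, as $S_{x,y,n}$ is minimal $(x,y)$). This shows $2n-x-y-2$ is achievable, and combined with the upper bound it is the maximum.

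The main obstacle I anticipate is the attainment half rather than the upper bound: one must confirm that a maximum-edge graph is genuinely reachable by the process, and the clean way to do so is the monotonicity argument above, which reduces ``reachable'' to the easily checked fact that the target and all of its intermediate supergraphs respect the caps of $x$ sources and $y$ sinks. By contrast, the upper bound is routine once the halting condition is translated through Lemma~\ref{edge_remove} and handled by degree-sum counts in the two off-diagonal cases.
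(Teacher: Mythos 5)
Your upper-bound argument follows the paper's proof almost exactly: split on whether $(r,s)=(x,y)$, invoke minimality via Lemma~\ref{edge_remove} and apply Theorem~\ref{mainthm} in the diagonal case, and use a degree argument in the off-diagonal cases. However, in the case $r<x$ you have the degree condition backwards. Since the initial-vertex cap can never be violated there (as $r+1\le x$), every attempted removal of an edge $(u,v)$ must be blocked by the \emph{terminal}-vertex cap, and removing $(u,v)$ creates a new terminal vertex precisely when $\outdeg(u)=1$ (the tail $u$ becomes a sink); the condition $\indeg(v)=1$ that you state governs the creation of new \emph{initial} vertices, which is exactly the constraint that cannot block a removal in this case. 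The corrected conclusion is that halting forces $s=y$ and $\outdeg(u)=1$ for every edge $(u,v)$, so every vertex has out-degree at most $1$ and $G$ has exactly $n-s=n-y$ edges; the arithmetic still closes, since $n-y\le 2n-x-y-2$ is equivalent to $n\ge x+2$, which holds because $n>x+y\ge x+1$. So the slip does not break the proof, but as written that step is false and must be repaired. (The paper's treatment of this case likewise concludes out-degree at most $1$ and then bounds the edge count by $n-1$.)

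On the attainment half you go beyond what the paper writes down. The paper's own proof of this corollary establishes only the upper bound, leaving attainment implicit in the earlier assertion that the maximum over process outputs equals the maximum over minimal $(x,y)$ task-dependency graphs, with $S_{x,y,n}$ as the extremal example. Your monotonicity argument --- relabel $S_{x,y,n}$ along a topological order so that it is a subgraph of the maximally connected starting graph $K$, and note that any $H'$ with $S_{x,y,n}\subseteq H'\subseteq K$ has at most $x$ initial and at most $y$ terminal vertices, so the edges of $E(K)\setminus E(S_{x,y,n})$ can be removed in any order without ever being blocked, an event of positive probability, after which minimality of $S_{x,y,n}$ halts the process --- is correct and supplies the reachability justification that the paper takes for granted. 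This is a genuine, if modest, strengthening of the paper's argument.
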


\begin{proof}
    Suppose that $G$ is a task-dependency graph produced by the $(x, y)$ edge-removal process on $n$ vertices, so $G$ must be an $(r, s)$ task-dependency graph for some $r \le x$ and $s \le y$, If $r = x$ and $s = y$, then $G$ has at most $2n-x-y-2$ edges. Otherwise, we have (a) $r \le x-1$ and $s = y$ or (b) $r = x$ and $s \le y-1$. In case (a), there cannot exist vertices $u, v, w$ with edges $(u, v)$ and $(u, w)$, or else we could remove the edge $(u, v)$ to increase the number of initial vertices by at most one without increasing the number of terminal vertices. Then the underlying undirected graph of $G$ is a disjoint union of trees, so $G$ has at most $n-1 \le 2n-x-y-2$ edges since $n > x+y$. The proof for case (b) is analogous.
\end{proof}

Note that any task-dependency graph resulting from the $(x, y)$ edge-removal process must be a minimal $(r, s)$ task-dependency graph for some $r \leq x$ and $s \leq y$. If it was not minimal, then we would be able to remove another edge and still obtain a task-dependency graph with at most $x$ initial vertices and at most $y$ terminal vertices. Thus the $(x, y)$ edge-removal process on $n$ vertices must result in a task-dependency graph with at most $2n$ edges by Theorem~\ref{mainthm} and at least $n - \min(x, y)$ edges by Corollary~\ref{eamintask}. This implies the following corollary about the expected number of edges.

\begin{cor}
    For all $x, y \ge 1$, the expected number of edges in the random task-dependency graph resulting from the $(x, y)$ edge-removal process on $n$ vertices is $\Theta(n)$.
\end{cor}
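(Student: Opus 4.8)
The plan is to observe that the number of edges produced by the $(x,y)$ edge-removal process is a random variable that is sandwiched, with probability $1$, between two deterministic quantities that are both $\Theta(n)$, so that its expectation must also be $\Theta(n)$ by monotonicity of expectation. This reduces the statement entirely to the extremal bounds already established, with essentially no probabilistic input required.

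First I would let $M$ denote the random number of edges in the task-dependency graph output by the $(x,y)$ edge-removal process on $n$ vertices, and I would restrict attention to $n > x+y$ so that the earlier extremal results apply. For the lower bound, I would invoke Corollary~\ref{eamintask}, which guarantees that every task-dependency graph the process can produce has at least $n-\min(x,y)$ edges; hence $M \ge n-\min(x,y)$ holds surely. For the upper bound, I would use the observation recorded just before the statement, namely that any output of the process is a minimal $(r,s)$ task-dependency graph for some $r \le x$ and $s \le y$, together with Theorem~\ref{mainthm}, which bounds the number of edges of such a graph of order $n$ by at most $2n-r-s-2 \le 2n$ in every case. Thus $M \le 2n$ holds surely as well.

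Finally, taking expectations of the almost-sure inequality $n-\min(x,y) \le M \le 2n$ and using monotonicity of expectation gives $n-\min(x,y) \le \mathbb{E}[M] \le 2n$. Since both the left-hand and right-hand sides are $\Theta(n)$, I conclude that $\mathbb{E}[M] = \Theta(n)$, as desired.

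I do not expect any serious obstacle: all of the genuine work is already carried out in Theorem~\ref{mainthm} (the upper extremal bound) and Corollary~\ref{eamintask} (the lower extremal bound). The only point requiring slight care is ensuring that the deterministic bounds hold for \emph{every} possible outcome of the process rather than merely in expectation, which is precisely why the minimality characterization preceding the statement is invoked: it lets us apply Theorem~\ref{mainthm} even on those runs where the process halts with fewer than $x$ initial vertices or fewer than $y$ terminal vertices.
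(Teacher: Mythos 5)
Your proposal is correct and matches the paper's own argument: the paper likewise notes that any output of the process is a minimal $(r,s)$ task-dependency graph with $r \le x$, $s \le y$, then sandwiches the edge count between $n-\min(x,y)$ (Corollary~\ref{eamintask}) and $2n$ (Theorem~\ref{mainthm}) before taking expectations. The only cosmetic difference is that you state the sandwich explicitly as an almost-sure inequality with monotonicity of expectation, which the paper leaves implicit.
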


In the next theorem, we characterize the minimal $(x,y)$ task-dependency graphs with the maximum possible number of edges for all $n \ge \max(x, y)$.

\begin{thm}\label{structure-thm}
A graph $G$ of order $n$ has the maximum possible number of edges among all minimal $(x,y)$ task-dependency graphs of order $n$ if and only if it has at most two components that are not isolated vertices, and 
\begin{enumerate}
    \item if $G$ consists of all isolated vertices, then $n = x = y$, \item if $G$ has two components that are not isolated vertices, then $G$ has no interior vertex, and
    \item if $G$ has exactly one component $C$ that is not an isolated vertex, then either
    \begin{enumerate}
        \item $C$ has no interior vertex, 
        \item there exist an initial vertex $u$, a terminal vertex $v$, and multiple interior vertices in $C$ such that all interior vertices of $C$ are adjacent to both $u$ and $v$, $u$ may be adjacent to more terminal vertices, and $v$ may be adjacent to more initial vertices, or 
        \item $C$ has exactly one interior vertex, which is adjacent to $p>0$ initial vertices and $q>0$ terminal vertices. If $p=1$, then this initial vertex may be adjacent to more terminal vertices. If $q=1$, then this terminal vertex may be adjacent to more initial vertices. If $p > 1$, then these initial vertices are only adjacent to the interior vertex. If $q > 1$, then these terminal vertices are only adjacent to the interior vertex.
    \end{enumerate}
\end{enumerate}
\end{thm}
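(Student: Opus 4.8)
The plan is to prove both directions of the equivalence, assuming without loss of generality that $x \ge y$, so that $\max(x,y)=x$ and $\min(x,y)=y$. The sufficiency direction (``if'') is a direct verification: for each structural type I would first confirm that the graph is a minimal $(x,y)$ task-dependency graph by checking the degree condition of Lemma~\ref{edge_remove} on every edge, and then count edges and compare against the maximum from Theorem~\ref{mainthm}. For instance, the two-component case and subcase (a) are bipartite trees whose edge counts are immediate, subcase (c) is a tree with one interior vertex (so $n_C-1$ edges), and subcase (b) is essentially the graph $S_{x,y,n}$, whose edge count $2n-x-y-2$ was already recorded after Theorem~\ref{mainthm}. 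This direction is routine bookkeeping across the regimes $n=x$, $n=x+1$, and $n\ge x+2$.

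For the necessity direction (``only if'') the key device is to decompose an extremal graph into its components. Each non-isolated component $C$ of a minimal $(x,y)$ graph $G$ is itself a connected minimal $(a,b)$ task-dependency graph, since the degree characterization in Lemma~\ref{edge_remove} is local and minimality is therefore inherited, where $a,b$ are the numbers of initial and terminal vertices of $C$. A non-isolated connected component has $n_C>\max(a,b)$, so Theorem~\ref{mainthm} bounds its edges by $2n_C-a-b-1$. Writing $z$ for the number of isolated vertices and summing over the $d$ non-isolated components, the identity $\sum(2n_C-a-b)=2n-x-y$ yields the clean bound $|E(G)|\le 2n-x-y-d$. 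Hence extremality forces $d\le 2$, which is exactly the statement that $G$ has at most two non-isolated components. The case $d=0$ gives Case~1 immediately ($z=n$ and $x=z=y$), and for $d=2$ equality forces each component to attain its own maximum with $n_C=\max(a,b)+1$ (since a component with $n_C>\max(a,b)+1$ cannot reach the value $2n_C-a-b-1$); the $n=\max(x,y)+1$ analysis in Theorem~\ref{mainthm} then shows such a component has no interior vertex, yielding Case~2.

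The heart of the argument is the single-component case $d=1$, where $C$ is a connected extremal minimal $(a,b)$ graph with $a=x-z$ and $b=y-z$. In the main regime $n\ge x+2$ one checks directly that $n_C=n-z>\max(a,b)+1$, so $C$ falls in the $2n_C-a-b-2$ regime and extremality means $C$ attains $2n_C-a-b-2=2n-x-y-2$ edges; in particular $C$ must contain an interior vertex, since a connected graph with none is a bipartite tree with one too few edges. I would then run the removable-path induction of Theorem~\ref{mainthm} while tracking equality: the estimate $|E(G)|\le 2(n-k+2)-x-y-2+(k-1)$ is tight only when $k=3$, so an extremal $C$ possessing a removable path has a length-two path $u\to w\to v$ whose removal (Lemma~\ref{remove_p}) leaves a connected extremal graph on $n_C-1$ vertices, to which the inductive characterization applies; reassembling forces the peeled interior vertices to attach to a common initial vertex $u$ and terminal vertex $v$, which is subcase (b). If instead $C$ has no removable path, then its underlying graph is a tree by Corollary~\ref{underlying}, and the forest equality analysis pins the interior count to exactly one, giving subcase (c). The remaining regime $n=\max(x,y)+1$, where the maximum is the larger value $2n-x-y-1$, is a boundary case handled by the same component bound and produces subcase (a).

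I expect the main obstacle to be the reconstruction step in the connected case: after removing a length-two path and applying the inductive hypothesis, one must argue that the path's endpoints must coincide with the distinguished vertices $u$ and $v$ of the smaller extremal graph, rather than attaching elsewhere, so that all interior vertices genuinely share a single source and sink. Equivalent care is needed to show that the single interior vertex of subcase (c) cannot be replaced by a more elaborate tree without losing an edge, and to verify the exact ranges of $n$ in which each subcase and Case~2 occur, which are governed by the forest constraints $z=x+y-n$ (two components, no interior) versus $z=x+y+1-n$ (one interior vertex). These are precisely the places where the value $2n-x-y-2$ is attained in more than one combinatorial way, and disentangling them is the crux of the characterization.
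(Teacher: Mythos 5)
Your sufficiency direction coincides with the paper's (verify minimality via Lemma~\ref{edge_remove}, count edges, compare with Theorem~\ref{mainthm}), so no issue there. Your necessity argument opens with a genuinely different and attractive device: minimality localizes to components by Lemma~\ref{edge_remove}, each non-isolated component is a connected minimal $(a_i,b_i)$ task-dependency graph of order $n_i\ge\max(a_i,b_i)+1$, and summing the bounds of Theorem~\ref{mainthm} over components gives $|E(G)|\le 2n-x-y-d$ when there are $d$ non-isolated components and $z$ isolated ones. This cleanly yields $d\le 2$, Case (1), Case (2) (equality pins each component to order $\max(a_i,b_i)+1$ with no interior vertex), and subcase (3)(a) in the regime $n=\max(x,y)+1$; the paper obtains all of this instead inside a single induction on $n$ over possibly disconnected graphs, so your reduction is a real structural simplification as far as it goes.

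The gap is in the heart of your argument, the case $d=1$ with a removable path. It is false that removing a removable path $u\to w\to v$ from a connected extremal $C$ ``leaves a connected extremal graph on $n_C-1$ vertices,'' and false that the removable-path case always reassembles into subcase (3)(b). Concretely, let $C$ have vertices $u_0,i,v_0,t_1,s_1$ and edges $(u_0,i),(i,v_0),(u_0,t_1),(s_1,v_0)$. This is a connected minimal $(2,2)$ task-dependency graph of order $5$ (every edge satisfies Lemma~\ref{edge_remove}) with $4=2\cdot 5-2-2-2$ edges, hence extremal; the path $u_0\to i\to v_0$ is removable since $\outdeg(u_0)=\indeg(v_0)=2$; yet its removal disconnects $C$ into the two components $u_0\to t_1$ and $s_1\to v_0$, and $C$ has the structure of subcase (3)(c) with $p=q=1$, not (3)(b). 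So your dichotomy --- removable path implies (3)(b), no removable path implies tree and hence (3)(c) --- misclassifies exactly this family, and the inductive hypothesis you invoke (stated for connected graphs) cannot even be applied to the disconnected graph left after the removal. The repair is to run the equality-tracking induction over possibly disconnected extremal graphs, keeping Case (2) available as an inductive outcome: when the endpoints $v_1,v_k$ of the removed path fall into different components of $G'$, the inductive characterization forces both components to have no interior vertex, and reassembly yields (3)(c) with $p=q=1$; the conclusion (3)(b) is only valid when $v_1$ and $v_k$ lie on a common cycle of the underlying undirected graph of $G$. This two-branch analysis is precisely what the paper's proof carries out. (A minor slip in the same case: a connected minimal graph with no interior vertex is a tree with $n_C-1$ edges, which is one \emph{more}, not one fewer, than $2n_C-a-b-2$; your conclusion that an extremal $C$ in this regime has an interior vertex still follows, because such a tree would exceed the maximum of Theorem~\ref{mainthm}, which is impossible.)
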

\begin{proof}
We prove the forward direction by induction, following the induction in the proof of Theorem~\ref{mainthm}. Suppose that $G$ is a graph of order $n$ with the maximum possible number of edges among all minimal $(x,y)$ task-dependency graphs of order $n$. Without loss of generality we assume that $x\geq y$. If $n = x$, then $x = y$ and the graph consists of all isolated vertices, i.e., the structure in (1). If $n = x+1$, then $G$ has exactly one component $C$ that is not an isolated vertex, and $C$ has no interior vertex, i.e., the structure in (3)(a). If $n=x+2$, then $G$ either has exactly two  non-isolated terminal vertices and no interior vertices, i.e., the structure in (2), or exactly one non-isolated terminal vertex and exactly one interior vertex. These two vertices must belong to the same component, so $G$ has the structure in (3)(c) with $q = 1$.

Now suppose that $n > x+2$. If $G$ does not have a removable path, then from the last two paragraphs of the proof of Theorem~\ref{mainthm} $G$ should either have no interior vertex and exactly two components that are not isolated vertices, i.e., the structure in (2), or exactly one component $C$ that is not an isolated vertex with exactly one interior vertex. If $C$ has multiple initial vertices adjacent to its interior vertex, then all of these initial vertices are only adjacent to the interior vertex, or else $G$ would not be a minimal $(x,y)$ task-dependency graph. Similarly if $C$ has multiple terminal vertices adjacent to its interior vertex, then all of these terminal vertices are only adjacent to the interior vertex, or else $G$ would not be a minimal $(x,y)$ task-dependency graph. Thus $G$ has the structure in (3)(c). 

If $G$ has a removable path $P=(v_1,\ldots,v_k)$ whose removal produces $G'$, then from the proof of Theorem~\ref{mainthm} we know that $k=3$. Moreover, $v_1$ and $v_k$ are not adjacent since $G$ is a minimal $(x,y)$ task-dependency graph, so either in $G'$ they belong to different components or in the underlying undirected graph of $G$ they belong to a cycle. If they belong to a cycle in the underlying undirected graph of $G$, then we partition the edges of this cycle into maximal directed paths. We claim that this partition consists of exactly two $3$-vertex directed paths, and thus by inductive hypothesis $G'$ has the structure in (3)(b) or (3)(c). First note that the partition consists of at least two maximal directed paths, since $G$ is acyclic. If any of these directed paths has only two vertices, then the edge connecting them violates Lemma~\ref{edge_remove}, so each of the directed paths has at least three vertices and thus at least one interior vertex. Since it is impossible to have exactly three maximal directed paths in a cycle, suppose for contradiction that the partition consists of four or more directed paths. Then $G'$ has three interior vertices in a structure not present in (3)(b) and (3)(c), contradicting the inductive hypothesis. Hence the partition consists of two maximal directed paths. Neither directed path has more than three vertices, or else $G'$ would contain a directed path with more than three vertices, contradicting the inductive hypothesis. Thus both maximal directed paths have exactly three vertices. Therefore $G'$ has the structure in (3)(b) or (3)(c).

If $G’$ has the structure in (3)(b), then the initial vertex adjacent to the interior vertices in $G’$ must be $v_1$ and the terminal vertex adjacent to the interior vertices in $G’$ must be $v_k$, since the only vertices with distance $2$ in $G’$ are $v_1$ and $v_k$. Since $G$ is obtained from $G’$ by adding a single vertex $v_2$ that is adjacent to only $v_1$ and $v_k$, we conclude that $G$ has the structure in (3)(b). Now suppose that $G’$ has the structure in (3)(c). If multiple initial vertices were adjacent to the interior vertex in $G'$, then $G$ would not be a minimal $(x,y)$ task-dependency graph since we would be able to remove the edge from $v_1$ to the interior vertex in $G'$ and the resulting graph would still be an $(x,y)$ task-dependency graph. Thus only one initial vertex is adjacent to the interior vertex in $G'$. Similarly, only one terminal vertex is adjacent to the interior vertex in $G'$. Therefore $G$ has the structure in (3)(b). 

If in $G'$ the vertices $v_1$ and $v_k$ belong to different components, then both components are not isolated vertices because $\outdeg(v_1) > 1$ and $\indeg(v_k) > 1$ in $G$. Hence by inductive hypothesis, $G'$ has no interior vertices, $v_1$ is an initial vertex, and $v_k$ is a terminal vertex. Thus $G$ has exactly one component that is not an isolated vertex, and that component has the structure described in (3)(c) with $p = q = 1$.

For the backward direction, first suppose that $G$ consists of all isolated vertices, and $n = x = y$. Then clearly $G$ is a graph of order $n$ with the maximum possible number of edges among all minimal $(x,y)$ task-dependency graphs of order $n$. Next, suppose that $G$ has two components that are not isolated vertices, and $G$ has no interior vertex. If the components of $G$ that are not isolated vertices have size $j$ and $k$, then $G$ has $n-j-k$ isolated vertices and $j+k-2$ edges, and $x+y = 2n-j-k$. Thus $n \ge x+2$ and $G$ has $2n-x-y-2$ edges, so $G$ is a graph of order $n$ with the maximum possible number of edges among all minimal $(x,y)$ task-dependency graphs of order $n$. Next, suppose that $G$ has exactly one component $C$ that is not an isolated vertex and $C$ has no interior vertex, i.e., $G$ has the structure in (3)(a). If $C$ has size $j$, then $G$ has $n-j$ isolated vertices and $j-1$ edges, and $x+y = 2n-j$. Thus $n = x+1$ and $G$ has $2n-x-y-1$ edges, so $G$ is a graph of order $n$ with the maximum possible number of edges among all minimal $(x,y)$ task-dependency graphs of order $n$. 

Next, suppose that $G$ has the structure in (3)(b), i.e., it has $i\geq 0$ isolated vertices and exactly one component $C$ that is not an isolated vertex, and $C$ has $j>1$ interior vertices all adjacent to an initial vertex $u$ and terminal vertex $v$, $u$ is adjacent to $k\geq 0$ other terminal vertices, and $v$ is adjacent to $l\geq 0$ initial vertices.
Then $x=i+1+l$, $y=i+1+k$, and $n=i+2+k+l+j$. Clearly $n\geq\max(x,y)+3$ and $G$ has $2j+k+l=2n-x-y-2$ edges. Hence by Theorem~\ref{mainthm}, $G$ is a graph of order $n$ with the maximum possible number of edges among all minimal $(x,y)$ task-dependency graphs of order $n$. 

Finally, suppose that $G$ has the structure in (3)(c), so the component $C$ has exactly one interior vertex which is adjacent to $p > 0$ initial vertices and $q > 0$ terminal vertices. We consider three cases. First, suppose that $p > 1$, $q > 1$, and the initial vertices and terminal vertices in $C$ are only adjacent to the interior vertex. Then $G$ has $n-p-q-1$ isolated vertices, $x = n-q-1$, $y = n-p-1$, and $G$ has $p+q$ edges. Thus $G$ has $2n-x-y-2$ edges, so $G$ is a graph of order $n$ with the maximum possible number of edges among all minimal $(x,y)$ task-dependency graphs of order $n$. 

Second, suppose that $p = 1$, $q = 1$, the initial vertex adjacent to the interior vertex in $C$ is adjacent to $r \geq 0$ terminal vertices, and the terminal vertex adjacent to the interior vertex in $C$ is adjacent to $s \geq 0$ initial vertices. Then $G$ has $n-r-s-3$ isolated vertices, $x = n-r-2$, $y = n-s-2$, and $G$ has $r+s+2$ edges. Then $G$ has $2n-x-y-2$ edges, so $G$ is a graph of order $n$ with the maximum possible number of edges among all minimal $(x,y)$ task-dependency graphs of order $n$.

Third, suppose that $p = 1$, $q > 1$, and the initial vertex in $C$ is adjacent to $r \geq 0$ terminal vertices. Then $G$ has $n-q-r-2$ isolated vertices, $x = n-q-r-1$, $y = n-2$, and $G$ has $q+r+1$ edges. Then $G$ has $2n-x-y-2$ edges, so $G$ is a graph of order $n$ with the maximum possible number of edges among all minimal $(x,y)$ task-dependency graphs of order $n$. The proof is analogous for the case when $q = 1$ and $p > 1$.
\end{proof}

In Theorem~\ref{structure-thm}, note that the directed graphs with the structure in case (1) have maximum directed path length $0$ and the directed graphs with the structures in cases (2) and (3)(a) have maximum directed path length $1$. The directed graphs with the structures in cases (3)(b) and (3)(c) have maximum directed path length $2$. Thus we have the following corollary.

\begin{cor}
    Every minimal $(x,y)$ path-dependency graph of order $n$ with the maximum number of edges has no directed path with length more than $2$.
\end{cor}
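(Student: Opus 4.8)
The plan is to reduce the statement to a short, finite case analysis by invoking the structural characterization already established in Theorem~\ref{structure-thm}. Since that theorem gives an if-and-only-if description of the minimal $(x,y)$ task-dependency graphs of order $n$ attaining the maximum number of edges, every such graph must have exactly one of the structures enumerated in its cases (1), (2), (3)(a), (3)(b), or (3)(c). So I would begin by applying Theorem~\ref{structure-thm} to obtain this exhaustive list, and then bound the maximum directed path length within each structure separately.

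For the cases containing no interior vertex -- namely (1), (2), and (3)(a) -- the argument is immediate. In case (1) the graph has no edges, so its maximum directed path length is $0$. In cases (2) and (3)(a) every non-isolated vertex is exterior, so any edge $(a,b)$ forces $a$ to be a non-terminal exterior vertex (hence initial) and $b$ to be a non-initial exterior vertex (hence terminal); thus every edge runs directly from an initial vertex to a terminal vertex. Since a terminal vertex has out-degree $0$ and an initial vertex has in-degree $0$, no such edge can be extended on either side, and the maximum directed path length is $1$.

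For cases (3)(b) and (3)(c), which involve one or more interior vertices, I would argue that the longest directed path has the form initial vertex $\to$ interior vertex $\to$ terminal vertex, giving length exactly $2$. The key observation is that in both structures each interior vertex has only an initial vertex as an in-neighbor and only a terminal vertex as an out-neighbor, so no directed path can pass through two interior vertices consecutively. Moreover, the additional edges permitted by the characterization -- from the distinguished initial vertex to extra terminal vertices, or from extra initial vertices to the distinguished terminal vertex -- all run directly between exterior vertices and hence contribute only paths of length $1$. Because a terminal vertex has out-degree $0$, no directed path can be extended past it, and we conclude that the maximum directed path length is $2$.

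This statement presents no genuine obstacle, as it is a direct consequence of the structural characterization in Theorem~\ref{structure-thm}; the only point requiring care is confirming, in cases (3)(b) and (3)(c), that the extra edges incident to the distinguished vertices cannot splice together with the interior paths to form a longer directed path. This is ruled out precisely because each such extra edge either terminates at a terminal vertex or originates at an initial vertex, and neither kind of vertex can appear in the interior of a directed path.
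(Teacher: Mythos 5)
Your proof is correct and takes essentially the same route as the paper: the paper derives this corollary directly from the case analysis of Theorem~\ref{structure-thm}, observing that structures (1), (2), (3)(a) have maximum directed path length at most $1$ while structures (3)(b) and (3)(c) have maximum directed path length $2$. Your write-up simply fills in the details of that observation (why edges in interior-free cases run initial-to-terminal, and why interior vertices cannot be chained), which the paper leaves implicit.
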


For minimal task-dependency graphs with only one initial task, we obtain the maximum possible number of edges as a function of the order and the number of terminal tasks as a corollary of Theorem~\ref{mainthm}. We also obtain the corresponding result for task-dependency graphs with only one terminal task.

\begin{cor}
The maximum number of edges of a minimal $(x,1)$ task-dependency graph of order $n$ is $0$ if $n = x$ (in which case $x = 1$), $x$ if $n=x+1$, and $2n-x-3$ if $n>x+1$. The maximum number of edges of a minimal $(1,y)$ task-dependency graph of order $n$ is $0$ if $n = y$ (in which case $y = 1$), $y$ if $n=y+1$, and $2n-y-3$ if $n>y+1$.   
\end{cor}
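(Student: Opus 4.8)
The plan is to derive both statements directly from Theorem~\ref{mainthm} by substituting the appropriate value into the general formula. First I would treat the $(x,1)$ case by setting $y = 1$. Since $x \ge 1$, we have $\max(x, y) = \max(x, 1) = x$, so the three regimes $n = \max(x,y)$, $n = \max(x,y)+1$, and $n > \max(x,y)+1$ appearing in Theorem~\ref{mainthm} become exactly $n = x$, $n = x+1$, and $n > x+1$, matching the three cases of the corollary.

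Next I would check each regime in turn. When $n = x$, Theorem~\ref{mainthm} gives maximum $0$, and its side condition $x = y$ forces $x = 1$ once $y = 1$; this reproduces the first case. Substituting $y = 1$ into the edge-count formula for $n = x+1$ gives $2n - x - y - 1 = 2n - x - 2$, which simplifies to $x$ using $n = x+1$. Substituting $y = 1$ into the formula for $n > x+1$ gives $2n - x - y - 2 = 2n - x - 3$. These match the remaining two cases of the $(x,1)$ statement.

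Finally, for the $(1, y)$ statement I would invoke the left-right symmetry of task-dependency graphs: reversing the direction of every edge sends a directed acyclic graph to a directed acyclic graph, interchanges initial vertices (in-degree $0$) with terminal vertices (out-degree $0$), and preserves both the order and the number of edges. The minimality criterion of Lemma~\ref{edge_remove} is symmetric under this reversal, since the reversed edge $(v,u)$ satisfies $\outdeg(v) = 1$ or $\indeg(u) = 1$ in the reversed graph precisely when $(u,v)$ satisfies $\indeg(v) = 1$ or $\outdeg(u) = 1$ in the original. Hence edge reversal gives a bijection between minimal $(1, y)$ and minimal $(y, 1)$ task-dependency graphs of order $n$, so the two maxima coincide, and the formula for minimal $(1, y)$ graphs follows from the $(x,1)$ computation with the roles of $x$ and $y$ exchanged. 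I expect no genuine obstacle here; the only care required is to confirm that, after each substitution, the three $n$-regimes of Theorem~\ref{mainthm} align with the correct case of the corollary.
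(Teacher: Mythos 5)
Your proposal is correct and takes essentially the same route as the paper, which presents this result as a direct corollary of Theorem~\ref{mainthm} obtained by substitution. The only remark worth making is that your edge-reversal symmetry argument for the $(1,y)$ case is unnecessary: Theorem~\ref{mainthm} is stated for all $x, y \ge 1$ with its case conditions written in terms of $\max(x,y)$, so substituting $x = 1$ directly yields the second statement just as substituting $y = 1$ yields the first.
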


The next corollary covers the task-dependency graphs that were investigated in \cite{randgbsd}.

\begin{cor}\label{11_edge}
The maximum number of edges of a minimal $(1,1)$ task-dependency graph of order $n$ is $0$ if $n = 1$, $1$ if $n=2$, and $2n-4$ if $n>2$.
\end{cor}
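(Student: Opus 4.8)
The plan is to obtain this corollary as a direct specialization of Theorem~\ref{mainthm} to the case $x = y = 1$. With these values we have $\max(x,y) = 1$ and $x + y = 2$, so it suffices to substitute into each of the three cases of Theorem~\ref{mainthm} and simplify.

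First I would dispatch the two boundary cases. When $n = \max(x,y) = 1$, Theorem~\ref{mainthm} gives a maximum of $0$ edges (and the required condition $x = y$ indeed holds), matching the claim for $n = 1$. When $n = \max(x,y) + 1 = 2$, the relevant formula $2n - x - y - 1$ evaluates to $2\cdot 2 - 1 - 1 - 1 = 1$, matching the claim for $n = 2$. For the generic case $n > \max(x,y) + 1$, that is $n > 2$, the formula $2n - x - y - 2$ evaluates to $2n - 1 - 1 - 2 = 2n - 4$, as claimed. This exhausts all three cases.

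Since every step is a direct substitution into an already-established formula, there is no real obstacle here: the entire weight of the argument is carried by Theorem~\ref{mainthm}, and the corollary is essentially a bookkeeping exercise in the arithmetic $x = y = 1$. If I wanted to make the extremal value concrete rather than merely quote it, I could exhibit the graph $S_{1,1,n}$ from the construction in Theorem~\ref{mainthm}: for $x = y = 1$ this consists of a single initial vertex and a single terminal vertex joined by $n - 2$ internally disjoint directed paths of length $2$, which has exactly $2(n-2) = 2n - 4$ edges and so certifies that the bound is attained for every $n > 2$.
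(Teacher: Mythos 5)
Your proposal is correct and matches the paper's intent exactly: the paper states this corollary without a separate proof, as an immediate specialization of Theorem~\ref{mainthm} to $x = y = 1$, which is precisely the substitution you carry out. Your added remark exhibiting $S_{1,1,n}$ (one source, one sink, and $n-2$ internally disjoint directed paths of length $2$, giving $2n-4$ edges) is consistent with the paper's own extremal construction and is a nice, though not required, concretization.
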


% Define a \textit{connected $(x,y)$ task-dependency graph} to be a $(x,y)$ task-dependency graph whose underlying undirected graph is connected. 

Define a \textit{connected minimal $(x,y)$ task-dependency graph} to be a minimal $(x,y)$ task-dependency graph whose underlying undirected graph is connected. 

\begin{thm}\label{connthm}
For $x,y\geq1$, the maximum number of edges of a connected minimal $(x,y)$ task-dependency graph of order $n$ is $\max(x, y)$ if $n = x+y$ (in which case $\min(x, y) = 1$) and $2n-x-y-2$ if $n>x+y$.
\end{thm}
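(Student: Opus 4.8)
The plan is to split into the boundary case $n = x+y$ and the generic case $n > x+y$, handling the upper bounds by leaning on Theorem~\ref{mainthm} and the lower bounds by exhibiting explicit connected constructions. First I would record the observation that fixes the domain: in a connected minimal $(x,y)$ task-dependency graph of order $n \ge 2$, there are no isolated vertices, so no vertex is simultaneously initial and terminal; hence the $x$ initial vertices and $y$ terminal vertices are distinct and $n \ge x+y$. This justifies why the theorem ranges over $n \ge x+y$, and it tells us that when $n = x+y$ the graph has no interior vertices at all.

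For the case $n = x+y$, since there are no interior vertices, every vertex is either initial (in-degree $0$) or terminal (out-degree $0$), and every edge $(u,v)$ runs from an initial vertex $u$ to a terminal vertex $v$; thus the underlying graph is bipartite between the initial and terminal sets. The key step is to prove that a connected such graph satisfying the minimality condition of Lemma~\ref{edge_remove} must be a star. By Lemma~\ref{edge_remove}, every edge is incident to an out-degree-$1$ initial vertex or an in-degree-$1$ terminal vertex, i.e.\ to a leaf of the underlying graph; consequently the non-leaf vertices form an independent set, and since a leaf cannot be an internal vertex of any path, connectivity forces at most one non-leaf vertex. So the graph is a star $K_{1,t}$. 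If the center is the unique terminal vertex we get $y=1$ and $x=t$; if it is the unique initial vertex we get $x=1$ and $y=t$. Either way $\min(x,y)=1$, the number of edges is $t=\max(x,y)$, and this value is attained by the star itself. This simultaneously establishes the maximum $\max(x,y)$ and the parenthetical claim that $\min(x,y)=1$.

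For the case $n > x+y$, the upper bound is immediate: every connected minimal $(x,y)$ task-dependency graph is in particular a minimal $(x,y)$ task-dependency graph, and since $n > x+y \ge \max(x,y)+1$ forces $n > \max(x,y)+1$, Theorem~\ref{mainthm} bounds the number of edges by $2n-x-y-2$. For the matching lower bound I would exhibit the family $T_{x,y,n}$, which is defined precisely for $n \ge x+y+1$: it is connected (every vertex is joined through $u_1$, the interior vertices, and $v_1$), it is minimal by Lemma~\ref{edge_remove} (each edge $(u_1,i_t)$ and $(u_1,v_t)$ is incident to an in-degree-$1$ vertex, and each edge $(i_t,v_1)$ and $(u_t,v_1)$ to an out-degree-$1$ vertex), it has exactly $x$ initial and $y$ terminal vertices, and a direct count gives $2(n-x-y)+(x-1)+(y-1)=2n-x-y-2$ edges. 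Hence the maximum is exactly $2n-x-y-2$.

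The main obstacle is the $n=x+y$ case: one must argue cleanly that connectivity together with the edge-minimality condition forces a star, and therefore $\min(x,y)=1$, rather than merely bounding the edge count. The generic case $n > x+y$ is then essentially bookkeeping once $T_{x,y,n}$ is verified, since the upper bound is inherited directly from Theorem~\ref{mainthm}.
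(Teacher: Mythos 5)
Your proof is correct, and for the generic case $n > x+y$ it coincides with the paper's argument: inherit the upper bound $2n-x-y-2$ from Theorem~\ref{mainthm} (valid since $n > x+y \ge \max(x,y)+1$) and attain it with the connected family $T_{x,y,n}$. Where you genuinely diverge is the boundary case $n = x+y$. The paper handles it purely numerically: assuming $x \ge y$, if $y > 1$ then Theorem~\ref{mainthm} caps any minimal $(x,y)$ task-dependency graph at $2n-x-y-2 = x+y-2 < n-1$ edges, too few for connectivity, so no connected example exists; if $y = 1$ it reads off the value $2n-x-y-1 = x$ from the $n = \max(x,y)+1$ case of Theorem~\ref{mainthm} and notes the star attains it. You instead argue structurally from Lemma~\ref{edge_remove}: with no interior vertices the graph is bipartite from initial to terminal vertices, the minimality condition forces every edge to touch a leaf, so non-leaves form an independent set, and connectivity then permits at most one non-leaf, making the graph a star $K_{1,t}$. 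Your route is slightly longer but buys strictly more: it classifies \emph{all} connected minimal $(x,y)$ task-dependency graphs of order $x+y$ (they are exactly the stars), from which both the parenthetical claim $\min(x,y)=1$ and the edge count $\max(x,y)$ fall out at once, whereas the paper's counting argument establishes nonexistence for $\min(x,y)>1$ without describing the extremal graphs. The one point worth tightening in your write-up is the step ``connectivity forces at most one non-leaf vertex'': spell out that on any path joining two non-leaves every internal vertex has degree at least $2$, so some edge of the path would join two non-leaves, contradicting independence. With that detail added, your argument is complete.
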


\begin{proof}
    Without loss of generality, suppose that $x \ge y$. First, note that there are no connected minimal $(x,y)$ task-dependency graphs of order $n$ if $n < x+y$. This is because if $n < x+y$, then some initial vertex must also be a terminal vertex, so there must be an isolated vertex and the resulting directed graph is not connected. 
    
    Next, suppose that $n = x+y$. If $y > 1$, then the maximum possible number of edges in a (not necessarily connected) minimal $(x,y)$ task-dependency graph of order $n$ is $2n-x-y-2 = x+y-2$ by Theorem~\ref{mainthm}. This directed graph cannot be connected since it does not have enough edges, so there is no connected minimal $(x,y)$ task-dependency graphs of order $n$ if $n = x+y$ and $y > 1$. If $y = 1$, then the maximum possible number of edges in a (not necessarily connected) minimal $(x,y)$ task-dependency graph of order $n$ is $2n-x-y-1 = x+y-1$ by Theorem~\ref{mainthm}. Note that this upper bound is attained by the connected minimal $(x,y)$ task-dependency graphs of order $n$ with $x$ initial vertices, one terminal vertex, and edges from every initial vertex to the terminal vertex.

    Finally, suppose that $n > x+y$. By Thereom~\ref{mainthm}, the maximum possible number of edges in a (not necessarily connected) minimal $(x,y)$ task-dependency graph of order $n$ is $2n-x-y-2$. Note that this upper bound of $2n-x-y-2$ edges is attained by the directed graphs $T_{x, y, n}$ for $n \geq x+y+1$.
\end{proof}

\section{Edge-addition process}\label{eap}

The edge-addition process clearly results in an $(x, y)$ task-dependency graph when $x = y = 1$, since each edge addition will only decrease or maintain the numbers of initial vertices and terminal vertices until both are $1$. However, the process does not necessarily result in an $(x, y)$ task-dependency graph whenever $y \neq x$. 

\begin{prop}
 For $x \neq y$ and $n > \max(x,y)$, the $(x, y)$ edge-addition process does not necessarily result in an $(x, y)$ task-dependency graph.    
\end{prop}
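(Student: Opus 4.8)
The plan is to exhibit, for each admissible triple $(x,y,n)$, a single task-dependency graph $H$ that the $(x,y)$ edge-addition process can reach, at which point the process halts because no further edge can be added, and which is not an $(x,y)$ task-dependency graph. This mirrors the role of $G_{y,n}$ in the edge-removal setting. I would first reduce to one case using the symmetry that reverses every edge and relabels vertex $i$ as $n+1-i$: this sends a forward edge $(a,b)$ to the forward edge $(n+1-b,\,n+1-a)$, interchanges initial and terminal vertices, and thus converts a run of the $(x,y)$ process into a run of the $(y,x)$ process. Hence I may assume without loss of generality that $x>y$, so in particular $x\ge 2$.

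Define $H$ to be the directed graph on $\{1,\dots,n\}$ consisting of the transitive tournament on $\{1,\dots,n-x+1\}$, that is, all edges $(a,b)$ with $a<b\le n-x+1$, together with the $x-1$ isolated vertices $n-x+2,\dots,n$. I would then record the basic counts: among $\{1,\dots,n-x+1\}$ the unique vertex of in-degree $0$ is $1$ and the unique vertex of out-degree $0$ is $n-x+1$, so together with the $x-1$ isolated vertices the graph $H$ has exactly $x$ initial vertices and exactly $x$ terminal vertices. Since $y<x$, the graph $H$ has more than $y$ terminal vertices, so it is not an $(x,y)$ task-dependency graph.

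Next I would check that $H$ is a halting state, i.e.\ that every absent edge is cancelled. Every edge $(a,b)$ with $b\le n-x+1$ is already present, so any absent edge $(a,b)$ has $b\ge n-x+2$; such a $b$ is isolated, hence initial, and since $H$ has exactly $x$ initial vertices, adding $(a,b)$ would leave only $x-1<x$ initial vertices and is therefore cancelled. To see that the process actually reaches $H$ with positive probability, I would use the monotonicity of the process: the numbers of initial and terminal vertices never increase as edges are added, and in $H$ both equal $x$. Thus, adding the tournament edges on $\{1,\dots,n-x+1\}$ one at a time in any order keeps the number of initial vertices at least $x$ and the number of terminal vertices at least $x>y$, so no addition is cancelled and no intermediate graph is an $(x,y)$ task-dependency graph. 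At each stage before $H$ is complete some tournament edge is still absent and addable, so with positive probability the random process selects exactly these edges, reaches $H$, and halts there on a graph that is not an $(x,y)$ task-dependency graph.

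The step I expect to require the most care is the reachability argument: confirming that the conditioned sequence of additions is legal at every intermediate step and, crucially, that the halting rule ``stop upon producing an $(x,y)$ task-dependency graph'' is never triggered early. Both facts rest on making the monotonicity of the initial- and terminal-vertex counts explicit and on the observation that the terminal count stays strictly above $y$ throughout the construction, which is exactly where the assumption $x>y$ is used and where the symmetric case $y>x$ would instead call for the dual tournament on the last $n-y+1$ vertices.
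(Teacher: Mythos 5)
Your proof is correct and takes essentially the same approach as the paper: both exhibit a reachable halting state of the process---a graph in which every absent edge would drop the initial-vertex count below $x$---that has more than $y$ terminal vertices, so the process gets stuck on a non-$(x,y)$ graph. The paper's witness (a transitive tournament on $\{x-y+1,\dots,n-y\}$ fed by $x-y$ extra sources, plus $y$ isolated vertices, giving $y+1$ terminals) differs only cosmetically from your tournament-plus-$(x-1)$-isolated-vertices construction, and your explicit treatment of the reversal symmetry and of reachability just fills in details the paper leaves implicit.
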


\begin{proof}
    Without loss of generality, suppose that $n > x > y$. Let $G_{x, y, n}$ be the task-dependency graph on $\left\{1,2,\dots, n\right\}$ where vertices $n-y+1, \dots, n$ are isolated, $\left\{x-y+1,\dots, n-y\right\}$ are maximally connected, and there are edges $(a, b)$ for all $a \in \left\{1, \dots, x-y\right\}$ and $b \in \left\{x-y+1,\dots, n-y\right\}$. Then $G_{x,y,n}$ has $x$ initial vertices and $y+1$ terminal vertices, since the $y$ vertices $n-y+1, \dots, n$ are both initial and terminal. If we add any edge, it would result in a task-dependency graph with only $x-1$ initial vertices. Thus it is impossible for the $(x, y)$ edge-addition process to result in an $(x, y)$ task-dependency graph if any round produces $G_{x,y,n}$. 
\end{proof}

Given the last result, it is natural to ask whether the $(x, x)$ edge-addition process does not necessarily result in an $(x, x)$ task-dependency graph for $x > 1$. We show in the next proposition that as in the case when $x = y = 1$, the $(x, x)$ edge-addition process always results in an $(x, x)$ task-dependency graph for all $x > 1$. Note how this is the same as with the $(x, x)$ edge-removal process, which always results in an $(x, x)$ task-dependency graph for all $x \ge 1$.

\begin{prop}\label{xxedgeadd}
    For all $x \geq 1$, the $(x, x)$ edge-addition process always results in an $(x, x)$ task-dependency graph.
\end{prop}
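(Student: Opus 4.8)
The plan is to mirror the argument used for the $(x,x)$ edge-removal process in Proposition~\ref{xxedgeremove}, but with the roles of addition and removal -- and hence the direction of all the inequalities -- reversed. First I would record the invariant that the process maintains: since an edge is cancelled whenever it would drop the number of initial or terminal vertices below $x$, every graph arising in the process has at least $x$ initial and at least $x$ terminal vertices, and the empty starting graph has $n \geq x$ of each. Because edges are only ever added, the process terminates after at most $\binom n2$ successful additions, halting either exactly at an $(x,x)$ graph or because no further edge can be added. So it suffices to rule out the second possibility when the current graph $G$, with $r$ initial and $s$ terminal vertices, satisfies $r,s \geq x$ but $(r,s)\neq (x,x)$. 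Using the edge-reversal symmetry $i \mapsto n+1-i$, which preserves the process and interchanges initial and terminal vertices, I may assume $s \geq r$, so that $s > x$.

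Next I would try to exhibit a single edge whose addition the process could not have cancelled, contradicting the assumption that $G$ is stuck. The key observation is that an edge $(a,b)$ with $a<b$ issuing from a terminal vertex $a$ is automatically not already present, reduces the terminal count by exactly one to $s-1 \geq x$, and changes the initial count only through $b$: it drops by one if $b$ is initial and is unchanged otherwise. Hence the crucial dichotomy is whether some terminal vertex lies below some non-initial vertex. If it does, then choosing a terminal vertex $a$ and a non-initial vertex $b>a$ and adding $(a,b)$ keeps the initial count at $r \geq x$ and the terminal count at $s-1 \geq x$, which is a legal move and gives the desired contradiction.

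The main obstacle is the remaining case, in which every vertex lying above a terminal vertex is initial. Here I would analyze the forced structure of $G$: letting $a^*$ be the smallest terminal vertex, every vertex above $a^*$ is initial, and a short argument (an outgoing edge from such a vertex would create a non-initial vertex still further to the right) shows that in fact every vertex above $a^*$ is isolated. Counting then gives $s = (n-a^*)+1$ terminal vertices, while the $n-a^*$ isolated vertices together with vertex $1$, which is always initial, force $r \geq s$; the degenerate possibility $a^*=1$ simply collapses $G$ to the empty graph, where again $r=s=n$. In either case $r \geq s > x$, so $r-1\geq x$, and since $s\geq 2$ there is an isolated vertex $b>a^*$. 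Adding $(a^*,b)$ then lowers each count by exactly one, to $r-1\geq x$ and $s-1\geq x$, which is again a legal move the process could not have cancelled. This contradiction shows the process cannot halt away from an $(x,x)$ graph, completing the proof.
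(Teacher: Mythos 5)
Your proof is correct, but the key step is carried out differently from the paper's. Both arguments share the same skeleton: assume the process halts at a graph $G$ with $r, s \ge x$ and $(r,s) \neq (x,x)$, then exhibit an edge whose addition keeps both counts at least $x$, contradicting that $G$ is stuck. The paper takes WLOG $r > x$, notes this forces $s = x$, and then counts components: every component of a DAG contains at least one initial and at least one terminal vertex, so $r > s$ forces some component to contain two initial vertices, which are necessarily non-isolated; adding the edge $(u,v)$ between them (with $u < v$) drops the initial count by exactly one and leaves the terminal count unchanged. You instead exploit the linear order on the labels: with $s > x$ by your reversal symmetry, either some terminal vertex precedes a non-initial vertex --- giving a legal move at once --- or else every vertex above the least terminal vertex $a^*$ is forced to be isolated, whence $r \ge s > x$ and the edge from $a^*$ to an isolated vertex lowers both counts by one, again a legal move. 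Your route costs an extra case and the structural claim about the isolated suffix, but it avoids the component-counting observation and never needs to pin the second count down to exactly $x$; the paper's argument is shorter, with a single move joining two non-isolated initial vertices. Both are valid, and your dichotomy incidentally gives a sharper picture of what a hypothetically stuck configuration would have to look like.
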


\begin{proof}
    Suppose that we completely run the $(x, x)$ edge-addition process and obtain the task-dependency graph $G$ with $r$ initial vertices and $s$ terminal vertices. To show that $r = s = x$, assume for contradiction that $r > x$ or $s > x$.  Without loss of generality, let $r > x$. Note that we must have $s = x$, or else we could add another edge and the resulting task-dependency graph would still have at least $x$ initial vertices and at least $x$ terminal vertices. Analogously to Proposition~\ref{xxedgeremove}, we first argue that $G$ must have at least $2$ non-isolated initial vertices in the same component. Indeed, if every component of $G$ had at most one initial vertex, then we would have $r \leq s$ since every component must also have at least one terminal vertex, contradicting the assumption that $r > x$ and $s = x$. Thus some component of $G$ has multiple non-isolated initial vertices. 

Label these non-isolated initial vertices $u, v$ with $u < v$. If we add the edge $(u, v)$ to $G$, then we obtain a task-dependency graph $G’$ with $r-1$ initial vertices and $s$ terminal vertices. Since $r > x$ and $s = x$, $G’$ has at least $x$ initial vertices and $x$ terminal vertices. This contradicts our assumption that the $(x, x)$ edge-addition process terminated on the task-dependency graph $G$, since it was still possible to add an edge while keeping the number of initial vertices at least $x$ and the number of terminal vertices at least $x$. Thus the $(x, x)$ edge-addition process must result in an $(x, x)$ task-dependency graph.
\end{proof}

When $x = y = 1$, the minimum possible number of edges until the $(x, y)$ edge-addition process on $n$ vertices terminates is $n-1$, since there must be at least $n-1$ edges to include all of the vertices, and a directed path of order $n$ has $n-1$ edges. Note that this is the same as the minimum possible number of edges until the $(x, y)$ edge-removal process on $n$ vertices terminates, and the justification is the same. Using the same proof as Proposition~\ref{eamin} except with \textit{edge-addition} replacing \textit{edge-removal}, we obtain the following generalization. 

\begin{prop}\label{minedgeadd}
    For all $n > \max(x, y)$, the minimum possible number of edges in any $(x,y)$ task-dependency graph produced by the $(x,y)$ edge-addition process on $n$ vertices is $n-\min(x, y)$.
\end{prop}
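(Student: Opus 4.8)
The plan is to follow the proof of Corollary~\ref{eamin} line for line, splitting into a lower bound and an upper bound, with \emph{edge-addition} in place of \emph{edge-removal}. Without loss of generality I would assume $x \ge y$, so that $\min(x,y) = y$. This reduction is legitimate because the map that reverses every edge and relabels each vertex $i$ as $n+1-i$ is an involution on task-dependency graphs: it preserves the orientation $a<b$ of every edge (if $a<b$ then $n+1-b < n+1-a$), it interchanges initial vertices with terminal vertices, and it therefore converts the $(x,y)$ edge-addition process into the $(y,x)$ edge-addition process. For the lower bound, any graph produced by the process that is an $(x,y)$ task-dependency graph is in particular an $(x,y)$ task-dependency graph, so Lemma~\ref{minedgelemma} immediately gives that it has at least $n-y = n-\min(x,y)$ edges.

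For the upper bound I would reuse the explicit graph $G$ constructed in the proof of Lemma~\ref{minedgelemma}, which is an $(x,y)$ task-dependency graph of order $n$ with exactly $n-y$ edges, all oriented from a smaller label to a larger one. It then suffices to exhibit a single run of the process that assembles $G$ and halts there. I would add the $n-y$ edges of $G$ in an arbitrary order and verify the two legality conditions by a degree count in $G$. First, every vertex that is the tail of an edge (namely $1,\dots,n-y$) has out-degree exactly $1$ in $G$, so each edge addition destroys exactly one terminal vertex; after $k$ additions there are $n-k \ge y$ terminal vertices, with equality only after the final edge. Dually, summing the in-degrees shows that the number of additions destroying an initial vertex is exactly $n-x$, and since each addition destroys at most one initial vertex, the number of initial vertices never falls below $n-(n-x)=x$. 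Hence every intermediate addition keeps at least $x$ initial and at least $y$ terminal vertices and is therefore a legal move.

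It remains to argue that this run neither halts early nor stalls before $G$ is complete. Every proper prefix has fewer than $n-y$ edges, so by Lemma~\ref{minedgelemma} it is not an $(x,y)$ task-dependency graph, and the process's halting test is never triggered prematurely; moreover, since the next edge of $G$ is always a legal addition, the process is never forced to stop for want of an addable edge. Once all $n-y$ edges are present the graph equals $G$, which is an $(x,y)$ task-dependency graph, so the process halts with exactly $n-y$ edges. Together with the lower bound this shows the minimum possible number of edges is $n-\min(x,y)$.

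I expect the only genuine obstacle to be this upper-bound reachability check. Unlike the edge-removal setting, where one only needs $G$ to be a legal stopping configuration, here one must confirm that $G$ can be built up incrementally without ever breaching the lower thresholds and without accidentally passing through an $(x,y)$ graph too soon. The degree-counting observation is what keeps this clean, since it pins the total decrements of the initial and terminal counts to exactly their available budgets $n-x$ and $n-y$, forcing both counts to stay monotonically within the allowed range throughout the construction.
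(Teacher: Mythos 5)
Your proposal is correct and follows essentially the same route as the paper: the paper proves this proposition by invoking the proof of Corollary~\ref{eamin} with \emph{edge-addition} in place of \emph{edge-removal}, i.e., the lower bound comes from Lemma~\ref{minedgelemma} and the upper bound from realizing the same extremal graph $G$ as an output of the process. Your degree-counting verification that $G$ can actually be assembled legally (and without halting early) is exactly the detail the paper leaves implicit, so it is a welcome elaboration rather than a different argument.
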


It is not difficult to use the last proposition to show that the minimum possible number of edges in any task-dependency graph produced by the $(x,y)$ edge-addition process is $n-\min(x, y)$ for $n > \min(x, y)+1$. In the next result, note that we are no longer requiring the task-dependency graph to be an $(x, y)$ task-dependency graph. 

\begin{prop}
    For all $n > \max(x, y)+1$, the minimum possible number of edges in any task-dependency graph produced by the $(x,y)$ edge-addition process on $n$ vertices is $n-\min(x, y)$.
\end{prop}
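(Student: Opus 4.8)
The plan is to reduce the statement to the facts already established for the edge-addition process, splitting on whether the produced graph happens to be an $(x,y)$ task-dependency graph. First I would record the two observations that make the upper bound immediate: every graph produced by the process is an $(r,s)$ task-dependency graph for some $r\ge x$ and $s\ge y$ (the process never lets the number of initial or terminal vertices fall below $x$ or $y$), and the explicit $(x,y)$ task-dependency graph with $n-\min(x,y)$ edges built in the proof of Proposition~\ref{minedgeadd} is itself producible by the process. The latter shows the minimum is at most $n-\min(x,y)$, so the real content is the matching lower bound: every produced graph $G$ has at least $n-\min(x,y)$ edges.

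For the lower bound I would first dispose of the case where $G$ actually is an $(x,y)$ task-dependency graph, which is exactly Proposition~\ref{minedgeadd}. The remaining case is that $G$ is a \emph{stuck} configuration: an $(r,s)$ task-dependency graph with $(r,s)\neq(x,y)$ in which no further forward edge can be added. Here I would analyze the halting condition directly. Adding a forward edge $(a,b)$ decreases the number of initial vertices exactly when $b$ is initial, and decreases the number of terminal vertices exactly when $a$ is terminal; hence a missing forward edge is blocked if and only if ($b$ is initial and $r=x$) or ($a$ is terminal and $s=y$). Since $G$ is stuck with $(r,s)\neq(x,y)$, we cannot have both $r>x$ and $s>y$ (otherwise no edge is blocked, and $G$ is not the transitive tournament because $r>x\ge 1$, so some forward edge is addable). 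Taking $x\ge y$ without loss of generality, this leaves two subcases: either $r>x$ and $s=y$, or $r=x$ and $s>y$.

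In the first subcase $s=y=\min(x,y)$, and the bound is free: the $n-s$ non-terminal vertices each have $\outdeg\ge 1$, so $G$ has at least $n-s=n-\min(x,y)$ edges. The hard part will be the second subcase, $r=x$ and $s>y$, where the same degree count only gives $n-r=n-x$ edges, weaker than the desired $n-\min(x,y)=n-y$. To close this gap I would exploit the structure forced by being stuck: blockedness forces every non-initial vertex $b$ to receive an edge from every smaller vertex, so $\indeg(b)=b-1$, and since vertex $1$ is always initial, summing over the $n-x$ non-initial vertices gives at least $\binom{n-x+1}{2}$ edges. This transitive-type count grows quadratically in $n$ and so eventually dominates $n-\min(x,y)$.

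I therefore expect the second subcase to be the genuine obstacle, and I would flag that the inequality $\binom{n-x+1}{2}\ge n-\min(x,y)$ only holds once $n$ is large enough relative to $x-y$; it can fail, for example, at $n=x+2$ when $x-y\ge 2$, where the structure permits a stuck state with only $\binom{3}{2}=3$ edges. Consequently the delicate step is not the combinatorics of any single case but pinning down the exact hypothesis on $n$ under which the stuck $(x,s)$ configurations with $s>y$ are guaranteed to carry at least $n-\min(x,y)$ edges; a reduction to Proposition~\ref{minedgeadd} alone does not suffice, and I would want to verify whether the hypothesis $n>\max(x,y)+1$ needs to be strengthened.
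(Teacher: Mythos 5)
Your analysis is correct, and the concern you raise in your final paragraph is not a defect of your own argument but a genuine error in the paper: the proposition as stated is false, and the paper's proof breaks down at exactly the step your transitive-tournament configuration exposes. The paper disposes of stuck states with $(r,s)\neq(x,y)$ by claiming such a graph can have no interior vertex, since ``we could add an edge between two vertices adjacent to'' an interior vertex $u$ without changing either count; but the only pairs for which that count-preservation argument works are (in-neighbor of $u$, out-neighbor of $u$) pairs, and such a pair may already be joined by an edge, in which case nothing can be added. Your stuck states are precisely of this kind. Concretely, take $x=3$, $y=1$, $n=5$, so $n>\max(x,y)+1$. The process may add $(1,2)$, then $(1,3)$, then $(2,3)$: each addition keeps at least $3$ initial and at least $1$ terminal vertex, and no intermediate graph is a $(3,1)$ task-dependency graph (every intermediate graph has at least as many terminal as initial vertices, while $y<x$). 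The resulting graph has initial vertices $\{1,4,5\}$ and terminal vertices $\{3,4,5\}$, and every missing edge points into vertex $4$ or $5$, so any further addition would leave only $2<x$ initial vertices and is cancelled. The process halts with $3$ edges, whereas $n-\min(x,y)=4$. In general, for $x\ge y+2$ and $n=x+2$, the transitive tournament on $\{1,2,3\}$ together with $x-1$ isolated vertices is a reachable stuck state with $3<n-y$ edges; your quantity $\binom{n-x+1}{2}$ is exactly the minimum edge count over stuck states of this type, so the claimed lower bound only becomes valid once $\binom{n-x+1}{2}\ge n-y$, equivalently $\binom{n-\max(x,y)}{2}\ge \max(x,y)-\min(x,y)$, which is strictly stronger than $n>\max(x,y)+1$ when $x-y\ge 2$.

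As for the comparison: your case split is the same as the paper's ($r>x$, $s=y$ versus $r=x$, $s>y$, after ruling out $r>x$, $s>y$), and your handling of the first case by counting out-degrees of the $n-s$ non-terminal vertices is correct --- and more robust than the paper's treatment of that case, which again leans on the faulty interior-vertex claim. Your characterization of the second-case stuck states (every non-initial vertex receives every edge from below, and vertex $1$ is always initial) is the correct description that the paper's argument misses. The only thing needed to make your refutation fully airtight is the reachability check sketched above: it is not enough that the stuck configuration exists as a graph; one must verify that the edge-addition process can actually reach it without first halting at an $(x,y)$ task-dependency graph, which is what the terminal-count-at-least-initial-count observation provides.
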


\begin{proof}
    Let $G$ be a task-dependency graph produced by the $(x,y)$ edge-addition process. Then $G$ is an $(r, s)$ task-dependency graph for some $r \ge x$ and $s \ge y$. If $r = x$ and $s = y$, then $G$ has at least $n-y$ edges by Proposition~\ref{minedgeadd}. Otherwise suppose that $r \ge x+1$ or $s \ge y+1$. If $G$ had any interior vertex $u$, then we could add an edge between two vertices adjacent to $u$ and maintain the number of initial vertices and terminal vertices, so $G$ would not be a task-dependency graph produced by the $(x,y)$ edge-addition process. Thus $G$ has no interior vertices, so all vertices in $G$ are isolated vertices, non-isolated initial vertices, or non-isolated terminal vertices. 
    
    First consider the case when $r \ge x+1$. In this case, we must have $s = y$. $G$ cannot have more than one non-isolated initial vertex, or else we could put an edge between two non-isolated initial vertices and decrease the number of initial vertices by $1$ without changing the number of terminal vertices. Thus the total number of vertices in $G$ is at most $y+1$, contradicting our assumption that $n > \max(x,y)+1$. Now consider the case when $s \ge y+1$. In this case, we must have $r = x$. Analogously to the last case, $G$ cannot have more than one non-isolated terminal vertex, or else we could put an edge between two non-isolated terminal vertices and decrease the number of terminal vertices by $1$ without changing the number of initial vertices. Thus the total number of vertices in $G$ is at most $x+1$, contradicting our assumption that $n > \max(x,y)+1$.
\end{proof}

The maximum possible number of edges until the $(1,1)$ edge-addition process on $n$ vertices terminates is $\binom{n}{2}$, which is much greater than the maximum possible number of edges until the $(1,1)$ edge-removal process on $n$ vertices terminates. Clearly the upper bound follows since there are only $\binom{n}{2}$ edges of the form $(a, b)$ with $a < b$ in a directed graph with vertex set $\left\{1,2,\dots, n\right\}$. To see the lower bound of $\binom{n}{2}$, consider the task-dependency graph $G$ obtained from a complete task-dependency graph with vertices $1,2,\dots, n$ by removing the edge $(n-1,n)$. $G$ has two terminal vertices $n-1$ and $n$, so any instance of the random process that produces the graph $G$ would terminate with $\binom{n}{2}$ edges. In the next few proofs, we generalize this to all $x, y \ge 1$ by showing that the maximum possible number of edges in any $(x,y)$ task-dependency graph produced by the $(x,y)$ edge-addition process is $\binom{n}{2}+1-\binom{\max(x,y)}{2}-\binom{\min(x,y)+1}{2}$ for all $n > x+y$. We start with a definition and a lemma. We define $Q_{x,y,n}$ to be the task-dependency graph on $n$ vertices $1, 2, \dots, n$ with $k = \max{(0,x+y-n)}$ isolated vertices $1, \dots, k$ and all possible edges of the form $(a, b)$ with $k+1 \le a < b \le n$ except when $k+1 \le a < b \le x$ or $n-y+k+1 \le a < b \le n$.

\begin{lem}\label{maxedgelemma}
For all $x,y\geq 1$ and $n\geq \max(x,y)$, the maximum possible number of edges in an $(x,y)$ task-dependency graph of order $n$ is $\binom{n-\max{(0,x+y-n)}}{2}-\binom{x-\max{(0,x+y-n)}}{2}-\binom{y-\max{(0,x+y-n)}}{2}$.
\end{lem}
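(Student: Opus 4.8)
The plan is to prove matching lower and upper bounds, with the construction $Q_{x,y,n}$ realizing the claimed value and a short counting argument establishing optimality. Throughout I write $k=\max(0,x+y-n)$, so that the target quantity is $\binom{n-k}{2}-\binom{x-k}{2}-\binom{y-k}{2}$.

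For the lower bound, I would first check that $Q_{x,y,n}$ is a genuine $(x,y)$ task-dependency graph. Since every edge points from a smaller label to a larger one, $Q_{x,y,n}$ is acyclic. The vertices $1,\dots,k$ are isolated, while the ``low'' block of size $x-k$ and the ``high'' block of size $y-k$ partition the remaining labels; one checks that these two blocks are disjoint and that together with the isolated vertices they exhaust $\{1,\dots,n\}$. Deleting exactly the within-block edges is what forces every low-block vertex to have $\indeg=0$ and every high-block vertex to have $\outdeg=0$, whereas a vertex outside these blocks receives an edge from vertex $1$ and sends an edge to vertex $n$, so it is interior. Hence $Q_{x,y,n}$ has exactly $k+(x-k)=x$ initial and $k+(y-k)=y$ terminal vertices. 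Its edge count is the number of forward pairs among the $n-k$ non-isolated vertices minus the within-block pairs, namely $\binom{n-k}{2}-\binom{x-k}{2}-\binom{y-k}{2}$, which is the asserted formula. (In the degenerate case $n=\max(x,y)$ one necessarily has $x=y$ and the value is $0$; I would dispatch this separately.)

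For the upper bound, let $G$ be an arbitrary $(x,y)$ task-dependency graph and let $k'$ denote its number of isolated vertices, i.e., those that are simultaneously initial and terminal. Applying inclusion–exclusion to the initial set $A$ and terminal set $B$ shows $G$ has $n-x-y+k'$ interior vertices, and nonnegativity gives $k'\ge\max(0,x+y-n)=k$. The key observations are that all edges of $G$ lie among its $n-k'$ non-isolated vertices, that $G$ is acyclic so each unordered pair of vertices carries at most one edge, that no edge joins two initial vertices (the head would acquire positive in-degree) and none joins two terminal vertices (the tail would acquire positive out-degree), and that these two forbidden families of pairs are disjoint, since a pair lying in both would consist of isolated vertices. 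Therefore the number of edges of $G$ is at most
\[
f(k'):=\binom{n-k'}{2}-\binom{x-k'}{2}-\binom{y-k'}{2}.
\]

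It then remains to show that $f$ is maximized at the smallest admissible value $k'=k$. Using $\binom{m}{2}-\binom{m-1}{2}=m-1$, I would compute $f(k')-f(k'+1)=n-x-y+k'+1$, which is at least $1$ for every $k'\ge k$ (it equals $1$ exactly when $x+y>n$ and $k'=k$, and is larger otherwise); hence $f$ is strictly decreasing on the relevant range and $f(k')\le f(k)$. Combined with the construction this proves the bound. I expect the main obstacle to be the bookkeeping around the variable quantity $k'$: bounding the edge count for a \emph{fixed} number of isolated vertices is routine, but one must argue that having fewer isolated vertices is always at least as good, which is precisely the monotonicity computation above, and then confirm that the optimal value $k'=k$ is actually attainable — which is exactly what $Q_{x,y,n}$ furnishes.
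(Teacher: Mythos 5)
Your proposal is correct and follows essentially the same route as the paper's proof: the lower bound via the construction $Q_{x,y,n}$, and the upper bound by counting edges as a function $f(z)$ of the number $z$ of isolated vertices, then showing $f(z)-f(z+1)=n-x-y+z+1>0$ so that the minimum admissible $z=\max(0,x+y-n)$ is optimal. The only difference is that you spell out a few details the paper leaves implicit (acyclicity giving at most one edge per pair, and the disjointness of the forbidden initial--initial and terminal--terminal pairs), which is a welcome but not substantive refinement.
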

\begin{proof}
The lower bound follows by using $Q_{x,y,n}$. For $k = \max{(0,x+y-n)}$, the $x-k$ non-isolated initial vertices in $Q_{x,y,n}$ are $k+1, \dots, x$ and the $y$ non-isolated terminal vertices are $n-y+k+1, \dots, n$. $Q_{x,y,n}$ has $\binom{n-k}{2}-\binom{x-k}{2}-\binom{y-k}{2}$ edges. For the upper bound, note that in any task-dependency graph there can be no edges between the initial vertices and no edges between the terminal vertices. Suppose that an $(x,y)$ task-dependency graph $G$ of order $n$ has $z$ isolated vertices, then it has $x-z$ non-isolated initial vertices and $y-z$ non-isolated terminal vertices. Note that $z\geq x+y-n$, since the number of interior vertices in $G$ is $n-(x+y-z)$. The maximum possible number of edges in $G$ is $f(n,x,y,z)=\binom{n-z}{2}-\binom{x-z}{2}-\binom{y-z}{2}$. Since
$$
f(n,x,y,z)-f(n,x,y,z+1)=n-z-1-(x-z-1)-(y-z-1)=z-(x+y-n)+1>0,
$$
the number of edges is maximized when $z$ is minimized.
\end{proof}

\begin{prop}\label{eamax}
    For all $x\ge y \ge 1$ and $n > x+y$, the maximum possible number of edges in any $(x,y)$ task-dependency graph of order $n$ produced by the $(x,y)$ edge-addition process is $\binom{n}{2}+1-\binom{x}{2}-\binom{y+1}{2}$.
\end{prop}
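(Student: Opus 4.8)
The plan is to prove matching upper and lower bounds, letting Lemma~\ref{maxedgelemma} do the heavy lifting on the upper side and exhibiting an explicit producible graph on the lower side. Throughout I use the hypotheses $x \ge y \ge 1$ and $n > x+y$, which guarantee that $\max(0, x+y-n) = 0$ in the applications of Lemma~\ref{maxedgelemma} that arise, so that the extremal count for an $(x, y+1)$ task-dependency graph of order $n$ is simply $\binom{n}{2}-\binom{x}{2}-\binom{y+1}{2}$.

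\emph{Upper bound.} Let $G$ be an $(x,y)$ task-dependency graph produced by the process and let $e=(a,b)$ be the last edge added. Since the process halts the instant an $(x,y)$ task-dependency graph appears, the graph $G'=G-e$ is a task-dependency graph that is \emph{not} an $(x,y)$ task-dependency graph but still has at least $x$ initial and at least $y$ terminal vertices (an invariant maintained at every step, since no addition that would violate it is ever made). Adding the single edge $e$ lowers the number of initial vertices by at most one and the number of terminal vertices by at most one, so if $(r',s')$ denotes the counts for $G'$, then $(r',s') \in \{(x+1,y),(x,y+1),(x+1,y+1)\}$. In each case $G'$ is an $(r',s')$ task-dependency graph of order $n$, so Lemma~\ref{maxedgelemma} bounds $|E(G')|$ and hence $|E(G)|=|E(G')|+1$. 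Comparing binomials shows that the case $(r',s')=(x,y+1)$ is the binding one, giving exactly $\binom{n}{2}+1-\binom{x}{2}-\binom{y+1}{2}$; the case $(x+1,y)$ is smaller by $\bigl(\binom{x+1}{2}-\binom{x}{2}\bigr)-\bigl(\binom{y+1}{2}-\binom{y}{2}\bigr)=x-y\ge 0$, and the case $(x+1,y+1)$ is smaller still. The only delicate point is the boundary $n=x+y+1$, where the $(x+1,y+1)$ subcase forces $\max(0,r'+s'-n)=1$; there one uses the full formula of Lemma~\ref{maxedgelemma}, and the resulting bound again falls below the target by $x\ge 0$.

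\emph{Lower bound.} To attain the bound, I take the maximum $(x,y+1)$ task-dependency graph $Q_{x,y+1,n}$, which has $\binom{n}{2}-\binom{x}{2}-\binom{y+1}{2}$ edges by Lemma~\ref{maxedgelemma}, and add one edge $(t_1,t_2)$ between two of its $y+1$ terminal vertices. This makes $t_1$ non-terminal while leaving the initial count unchanged (since $t_2$ already has in-edges), producing an $(x,y)$ task-dependency graph with exactly $\binom{n}{2}+1-\binom{x}{2}-\binom{y+1}{2}$ edges. The key is that this graph is producible: every subgraph of $Q_{x,y+1,n}$ has at least as many initial and terminal vertices as $Q_{x,y+1,n}$ itself, hence at least $x$ initial and at least $y+1$ terminal vertices, so any ordering of the edges of $Q_{x,y+1,n}$ keeps the process alive (no cancellations) and never reaches an $(x,y)$ task-dependency graph prematurely because the terminal count stays at least $y+1$. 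The final edge $(t_1,t_2)$ then drops the terminal count to exactly $y$, and the process halts on the desired graph.

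I expect the upper-bound case analysis to be where care is needed: correctly identifying the three possible states $(r',s')$ of $G-e$ and verifying, through the binomial identities (together with the small-$n$ use of the general formula in Lemma~\ref{maxedgelemma}), that $(x,y+1)$ is the unique extremal case. The producibility check in the lower bound is conceptually simple but should be stated explicitly, since it is precisely the halting rule—stopping as soon as $(x,y)$ is reached—that prevents the absolute maximum $\binom{n}{2}-\binom{x}{2}-\binom{y}{2}$ of Lemma~\ref{maxedgelemma} from being producible, accounting for the gap of $y-1$ between the two bounds.
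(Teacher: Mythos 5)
Your proof is correct and follows essentially the same route as the paper's: the same lower-bound construction ($Q_{x,y+1,n}$ plus one edge between two of its terminal vertices) and the same upper bound via the penultimate graph, whose initial/terminal counts must be $(x+1,y)$, $(x,y+1)$, or $(x+1,y+1)$, combined with Lemma~\ref{maxedgelemma} and the comparison $\binom{x+1}{2}-\binom{x}{2}\ge\binom{y+1}{2}-\binom{y}{2}$. In fact you are slightly more careful than the paper, which glosses over both the producibility of the extremal run and the boundary case $n=x+y+1$, where Lemma~\ref{maxedgelemma} must be invoked with $\max(0,r'+s'-n)=1$ in the $(x+1,y+1)$ subcase.
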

\begin{proof}
    For the lower bound, consider $Q_{x,y+1,n}$. The $x$ initial vertices in $Q_{x,y+1,n}$ are $1, \dots, x$ and the $y+1$ terminal vertices are $n-y, \dots, n$. $Q_{x,y+1,n}$ has $\binom{n}{2}-\binom{x}{2}-\binom{y+1}{2}$ edges. If we randomly add a new edge $(a, b)$ to $G$, then either $1 \le a < b \le x$ or $n-y \le a < b \le n$. In the former case, the edge addition is cancelled since it would result in a task-dependency graph with fewer than $x$ initial vertices. In the latter case, the resulting task-dependency graph has $x$ initial vertices and $y$ terminal vertices, which would cause the $(x, y)$ edge-addition process to halt. Thus the maximum possible number of edges until the $(x, y)$ edge-addition process on $n$ vertices terminates is at least $\binom{n}{2}+1-\binom{y+1}{2}-\binom{x}{2}$.

    For the upper bound, suppose that we are running the $(x,y)$ edge-addition process and it has just terminated with the task-dependency graph $H$. Let $H'$ be the task-dependency graph in the round before the process terminated. Then $H'$ either has (a) $x$ initial vertices and $y+1$ terminal vertices, (b) $x+1$ initial vertices and $y$ terminal vertices, or (c) $x+1$ initial vertices and $y+1$ terminal vertices. If $I$ is the set of initial vertices and $T$ is the set of terminal vertices in $H'$, then the number of edges in $H'$ is at most $\binom{n}{2}-\binom{|I|}{2}-\binom{|J|}{2}$ by Lemma~\ref{maxedgelemma}. Among the cases (a), (b), and (c), this quantity is maximized in case (a) when $|I| = x$ and $|J| = y+1$, which follows from the fact that $\binom{x+1}{2}-\binom{x}{2} \ge \binom{y+1}{2}-\binom{y}{2}$. Thus $H$ has at most $\binom{n}{2}+1-\binom{y+1}{2}-\binom{x}{2}$ edges.
\end{proof}

From the last result, we can obtain the same value for the maximum possible number of edges in any task-dependency graph produced by the $(x,y)$ edge-addition process, without the restriction that the task-dependency graph must be an $(x, y)$ task-dependency graph.

\begin{thm}
For all $x\geq y \ge 1$ and $n > x+y$, the maximum possible number of edges in any task-dependency graph produced by the $(x,y)$ edge-addition process is $\binom{n}{2}+1-\binom{x}{2}-\binom{y+1}{2}$.
\end{thm}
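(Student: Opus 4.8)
The plan is to deduce the theorem from Proposition~\ref{eamax} and Lemma~\ref{maxedgelemma} by classifying the graphs on which the process can halt. Because no edge addition ever raises the number of initial or terminal vertices, and because every addition that would drop either count below its threshold is cancelled, any task-dependency graph $G$ produced by the $(x,y)$ edge-addition process is an $(r,s)$ task-dependency graph with $r \ge x$ and $s \ge y$. The lower bound is then immediate, since the produced graphs with exactly $x$ initial and $y$ terminal vertices form a subfamily of all produced graphs and Proposition~\ref{eamax} already exhibits one with $\binom{n}{2}+1-\binom{x}{2}-\binom{y+1}{2}$ edges.

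For the upper bound, the first step is a structural observation: the process cannot halt on a graph with $r > x$ and $s > y$ simultaneously. Indeed, if both counts strictly exceed their thresholds, then adding any missing edge $(a,b)$ with $a<b$ leaves at least $r-1 \ge x$ initial and $s-1 \ge y$ terminal vertices, so the addition is never cancelled; the process could only have halted if no edge were missing, but the complete task-dependency graph has a single initial and single terminal vertex, contradicting $r,s \ge 2$. Hence every produced graph is either an $(x,y)$ task-dependency graph, or an $(x,s)$ task-dependency graph with $s \ge y+1$, or an $(r,y)$ task-dependency graph with $r \ge x+1$.

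In the first case Proposition~\ref{eamax} bounds the number of edges by the claimed value directly. In the other two cases I would bound $G$ by the maximum number of edges of any $(r,s)$ task-dependency graph of order $n$, supplied in closed form by Lemma~\ref{maxedgelemma}, and then invoke that this maximum is non-increasing as the terminal count (resp.\ initial count) grows; this reduces the worst case to $(x,y+1)$ and $(x+1,y)$. Since $n > x+y$ gives $\max(0,x+(y+1)-n)=\max(0,(x+1)+y-n)=0$, Lemma~\ref{maxedgelemma} evaluates at these arguments to $\binom{n}{2}-\binom{x}{2}-\binom{y+1}{2}$ and $\binom{n}{2}-\binom{x+1}{2}-\binom{y}{2}$. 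Using $\binom{x+1}{2}-\binom{x}{2}=x$ and $\binom{y+1}{2}-\binom{y}{2}=y$, these fall short of the claimed value by exactly $1$ and by $1+x-y$ respectively, both positive because $x \ge y$. Thus only the $(x,y)$ case attains the bound, which proves the theorem.

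The step I expect to be the main obstacle is the monotonicity of the Lemma~\ref{maxedgelemma} bound in the number of initial and terminal vertices, which is not the same as the monotonicity in the isolated-vertex count $z$ proved within Lemma~\ref{maxedgelemma}. I would verify it directly from the closed form $\binom{n-k}{2}-\binom{r-k}{2}-\binom{s-k}{2}$ with $k=\max(0,r+s-n)$, treating the regime $k=0$ (where the bound is $\binom{n}{2}-\binom{r}{2}-\binom{s}{2}$ and the claim is transparent) and the regime $k>0$ (where incrementing $s$ or $r$ decreases the bound by $n-r$ or $n-s$, which is positive in the relevant range) separately. The structural observation in the second paragraph keeps the case analysis to single-argument monotonicity, which is what makes this route manageable.
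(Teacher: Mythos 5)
Your proposal is correct and takes essentially the same route as the paper: the lower bound comes from Proposition~\ref{eamax}, and the upper bound from classifying the halting graph as an $(r,s)$ task-dependency graph with $r\ge x$, $s\ge y$, then invoking Lemma~\ref{maxedgelemma} and the comparison $\binom{x+1}{2}-\binom{x}{2}\ge\binom{y+1}{2}-\binom{y}{2}$. If anything, you are more careful than the paper, which silently reduces the off-target cases to $(x,y+1)$ and $(x+1,y)$; the monotonicity of the Lemma~\ref{maxedgelemma} bound in $r$ and $s$ (and your structural observation that the process cannot halt with both counts strictly above their thresholds) is exactly the detail needed to justify that reduction, and your verification of it in both regimes of $k=\max(0,r+s-n)$ is sound.
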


\begin{proof}
    Suppose that $G$ is a task-dependency graph produced by the $(x,y)$ edge-addition process on $n$ vertices. Since the edge-addition process never increases the number of initial vertices or terminal vertices in any round, $G$ must be an $(r,s)$ task-dependency graph for some $r \ge x$ and $s \ge y$. If $r = x$ and $s = y$, then $G$ has at most $\binom{n}{2}+1-\binom{x}{2}-\binom{y+1}{2}$ edges by Proposition~\ref{eamax}. Otherwise $r \ge x+1$ or $s \ge y+1$. Then $G$ has at most $\binom{n}{2}-\binom{x}{2}-\binom{y+1}{2}$ edges or at most $\binom{n}{2}-\binom{x+1}{2}-\binom{y}{2}$ edges by Lemma~\ref{maxedgelemma}. Note that \[\binom{n}{2}-\binom{x}{2}-\binom{y+1}{2} \ge \binom{n}{2}-\binom{x+1}{2}-\binom{y}{2}\] since $\binom{x+1}{2}-\binom{x}{2} \ge \binom{y+1}{2}-\binom{y}{2}$. Thus if $G$ is not an $(x,y)$ task-dependency graph, the number of edges in $G$ is strictly less than $\binom{n}{2}+1-\binom{x}{2}-\binom{y+1}{2}$.
\end{proof}

For the $(x, y)$ edge-addition process on $n$ vertices, it is not difficult to see that the expected number of edges in the resulting task-dependency graph is $\Theta(n^2)$, as we prove in the following proposition. Note that this is much greater than the $\Theta(n)$ expected number of edges in the task-dependency graph obtained from the $(x, y)$ edge-removal process.

\begin{prop}\label{exp_edge_add}
For fixed positive integers $x$ and $y$, the expected number of edges in the $(x, y)$ edge-addition process on $n$ vertices is $\Theta(n^2)$, where the constant in the lower bound depends on $(x,y)$.
\end{prop}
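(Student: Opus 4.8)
The upper bound is immediate: a task-dependency graph on the vertex set $\{1,\dots,n\}$ has at most $\binom{n}{2}$ edges of the form $(a,b)$ with $a<b$, so the number of edges produced is at most $\binom{n}{2}=O(n^2)$ deterministically, hence in expectation. All the work is in the matching lower bound $\Omega(n^2)$, and the plan is a bottleneck argument. Define the edge sets $D=\{(a,b):1\le a<b\le x+1\}$ and $D'=\{(a,b):n-y\le a<b\le n\}$, so that $|D\cup D'|\le B:=\binom{x+1}{2}+\binom{y+1}{2}$, a constant depending only on $x$ and $y$. The first claim is that the process cannot halt until at least one edge of $D\cup D'$ is present. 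Indeed, any in-edge to a vertex $v\le x+1$ comes from a smaller vertex and so lies in $D$; thus if no edge of $D$ is present, each of $2,\dots,x+1$ is an initial vertex and, together with vertex $1$, the graph has at least $x+1$ initial vertices, and symmetrically for $D'$ and terminal vertices. Since a halted graph must have exactly $x$ initial vertices or exactly $y$ terminal vertices (if it had more than $x$ initial and more than $y$ terminal vertices, one could still add the edge $(1,v)$ into some initial vertex $v\ge 2$, so the process would not have halted), an edge of $D\cup D'$ must be present at the moment the process halts.

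The second, and crucial, observation is that cancellations are irrelevant during the entire relevant window. Let $Z$ be the first step at which the number of initial vertices equals $x$ or the number of terminal vertices equals $y$. Before step $Z$ the graph has strictly more than $x$ initial vertices and strictly more than $y$ terminal vertices, so adding any forward edge keeps both counts above their thresholds and no addition is ever cancelled; by the same reasoning the process cannot halt before $Z$. Consequently, up to step $Z$ the process simply adds uniformly random forward edges, so the order in which distinct edges are first added is distributed as a uniformly random permutation of the $M:=\binom{n}{2}$ forward edges. By the bottleneck claim some edge of $D\cup D'$ is present by step $Z$, so if $\tau$ denotes the first step at which a $D\cup D'$ edge is added, then $\tau\le Z$ and $\tau$ is exactly the position of the first ``special'' edge in this uniform random permutation.

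Finally, the number of edges in the final graph is at least $\tau$, since edges are never removed. For a uniform random permutation of $M$ items of which at most $B$ are special, the expected position of the first special item is at least $\frac{M+1}{B+1}$, so
\[
\mathbb{E}[\#\text{edges}]\;\ge\;\mathbb{E}[\tau]\;\ge\;\frac{\binom{n}{2}+1}{B+1}\;=\;\Omega(n^2),
\]
with the constant $\tfrac{1}{B+1}$ depending on $(x,y)$, which together with the upper bound yields $\Theta(n^2)$. The step I expect to be the main obstacle is not this routine first-occurrence computation but the setup preceding it: pinning down that no cancellation occurs before step $Z$ and that halting forces one of the counts to hit its target, together with the bottleneck structure of $D\cup D'$. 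A minor check is that the exact sampling rule (uniform among all forward edges not yet present) induces the uniform-permutation law on the order of first additions, which holds under either natural reading of the edge-addition process.
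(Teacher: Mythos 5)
Your proof is correct and takes essentially the same approach as the paper's: your sets $D$ and $D'$ are exactly the paper's bottleneck sets $P_x$ and $P_y$, your observation that no cancellation can occur while both counts still exceed their thresholds is precisely the paper's justification that the added edges form a uniform random permutation until the first special edge appears, and your first-occurrence bound $\frac{M+1}{B+1}$ matches the paper's symmetry computation giving $\frac{|S|-|P|}{|P|+1}$. If anything, your explicit treatment of the stopping time $Z$ and of why a halted graph must have exactly $x$ initial or exactly $y$ terminal vertices is slightly more careful than the paper's phrasing of the same steps.
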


\begin{proof}
The upper bound of $O(n^2)$ is immediate since there are only $\binom{n}{2}$ edges of the form $(a, b)$ with $a < b$ in a directed ordered graph of order $n$. For the lower bound of $\Omega(n^2)$, let $S$ be the set of edges of the form $(a, b)$ with $a < b$ in a directed graph with vertex set $\left\{1,2,\dots, n\right\}$, let $P_x$ be the set of edges of the form $(r, s)$ where $1\leq r<s\leq x+1$, let $P_y$ be the set of edges of the form $(p, q)$ where $n-y\leq p<q\leq n$, and let $P = P_x \cup P_y$. We first note that the process cannot terminate until some edge in $P$ is included, since otherwise all vertices in $[1, x+1]$ would be initial vertices and all vertices in $[n-y,n]$ would be terminal vertices. If that was true, then we would have $x+1$ initial vertices and $y+1$ terminal vertices, so the $(x, y)$ edge-addition process would not terminate yet since any edge addition can only decrease the number of initial vertices and terminal vertices by at most one each. To get a lower bound on the expected number of edges when the process terminates, we can calculate the expected number of rounds until some edge in $P$ is added to the graph. 

Observe that in our random edge-addition process, any permutation of the elements of $S$ is equally likely until the first round in which an element of $P$ is added to the task-dependency graph. In order to bound the expected number of rounds until the first element of $P$ is added to the task-dependency graph, we consider the modified process in which we generate any permutation of the elements of $S$ with equal likelihood, and then we bound the expected location of the first element of $P$ in the permutation. Note that the expected location of the first element of $P$ in the random permutation generated by the modified process is equal to the expected round in which the first element of $P$ is added to the task-dependency graph in the $(x, y)$ edge-addition process, since any permutation of the elements of $S$ is equally likely until the first round in which an element of $P$ is added to the task-dependency graph.

Given any permutation of $S$, the edges in $P$ partition the elements of $S-P$ into $|P|+1$ sequences of $(a_0,a_1,\ldots,a_{|P|})$ consecutive edges with \[\sum_{i = 0}^{|P|} a_i = |S|-|P|.\] Our goal is to determine the expectation of $a_0$. By symmetry, for any permutation $\pi$ of $|P|+1$ elements, the probability that the edges in $P$ partition the elements of $S-P$ into $|P|+1$ sequences of $(a_0,a_1,\ldots,a_{|P|})$ consecutive edges is equal to the probability that the edges in $P$ partition the elements of $S-P$ into $|P|+1$ sequences of $\pi(a_0,a_1,\ldots,a_{|P|})$ consecutive edges. Thus the expectation of $a_0$ is
\[\frac{|S|-|P|}{|P|+1}=\Omega(n^2).\]
\end{proof}

The upper bound in the last proof has a leading coefficient of $\frac{1}{2}$ for all $x, y \ge 1$. On the other hand, the lower bound has a leading coefficient of \[\frac{1}{(x+1)x+(y+1)y+2}.\] For example when $x = y = 1$, there is a multiplicative gap of $3$ between the leading coefficient of $\frac{1}{2}$ in the upper bound and the leading coefficient of $\frac{1}{6}$ in the lower bound.

As a result of the handshake lemma for directed graphs, we obtain the following corollary on the expected maximum in-degree and expected maximum out-degree for the edge-addition process on $n$ vertices.

\begin{cor}
    For fixed positive integers $x$ and $y$, the expected maximum in-degree of the task-dependency graph generated by the $(x, y)$ edge-addition process on $n$ vertices is $\Theta(n)$. Moreover, the expected maximum out-degree of the task-dependency graph generated by the $(x, y)$ edge-addition process on $n$ vertices is $\Theta(n)$.
\end{cor}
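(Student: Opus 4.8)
The plan is to deduce both statements directly from the $\Theta(n^2)$ estimate on the expected number of edges in Proposition~\ref{exp_edge_add}, using the two directed handshake identities $\sum_{v} \indeg(v) = |E(G)| = \sum_{v} \outdeg(v)$. The corollary is essentially an averaging consequence of the previous proposition, so I would organize it as a trivial upper bound together with a one-line lower bound.

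First I would dispose of the upper bound, which needs no probabilistic input. In any directed graph on $n$ vertices a single vertex can receive at most $n-1$ incoming edges, so the maximum in-degree of the generated task-dependency graph is at most $n-1$ with probability $1$; hence its expectation is $O(n)$. The identical reasoning bounds the expected maximum out-degree.

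For the lower bound I would use the fact that a maximum dominates an average at the level of each individual outcome. For any fixed graph $G$ produced by the process, the handshake identity gives $\sum_v \indeg(v) = |E(G)|$, so the maximum in-degree satisfies $\max_v \indeg(v) \ge |E(G)|/n$. Because this inequality holds pointwise over the sample space, taking expectations yields $\mathbb{E}[\max_v \indeg(v)] \ge \mathbb{E}[|E(G)|]/n$. By Proposition~\ref{exp_edge_add} the numerator is $\Omega(n^2)$, whence the expected maximum in-degree is $\Omega(n)$. Combining with the upper bound gives $\Theta(n)$, and the argument for the maximum out-degree is verbatim the same, since $\sum_v \outdeg(v) = |E(G)|$ as well.

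I do not anticipate a genuine obstacle. The only point requiring mild care is to apply the bound $\max \ge \mathrm{average}$ at the level of each realization \emph{before} taking expectations, rather than attempting to analyze the expected maximum directly; after that the result is immediate from the already-established $\Theta(n^2)$ edge count.
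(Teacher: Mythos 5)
Your proposal is correct and follows essentially the same route as the paper: the trivial $O(n)$ upper bound from the graph having order $n$, and a lower bound obtained by combining the directed handshake identity with the $\Theta(n^2)$ expected edge count of Proposition~\ref{exp_edge_add}. Your explicit remark that the inequality $\max_v \indeg(v) \ge |E(G)|/n$ should be applied pointwise before taking expectations is a slightly more careful phrasing of the averaging step the paper states in terms of expected average degree, but the argument is the same.
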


\begin{proof}
    The upper bound of $O(n)$ for both the maximum in-degree and maximum out-degree follows since the task-dependency graph has order $n$. For the lower bound, note that the expected average in-degree and expected average out-degree are both $\frac{m}{n}$, where $m$ is the expected number of edges. By Proposition~\ref{exp_edge_add}, the expected average in-degree and expected average out-degree are both $\Theta(n)$, so the expected maximum in-degree and expected maximum out-degree are both $\Theta(n)$.
\end{proof}

Next we show that the expected number of isolated vertices in the task-dependency graph generated by the $(x, y)$ edge-addition process on $n$ vertices is $o(1)$. In other words, the expected number of isolated vertices approaches $0$ as $n$ goes to infinity.

\begin{thm}\label{edge_add_isolated}
    For fixed positive integers $x$ and $y$, the expected number of isolated vertices in the task-dependency graph generated by the $(x, y)$ edge-addition process on $n$ vertices is $o(1)$.
\end{thm}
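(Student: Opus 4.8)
The plan is to use linearity of expectation, writing the expected number of isolated vertices at termination as $\sum_{v}\Pr[v\text{ is isolated when the process halts}]$, and then to control this sum by analyzing only the first phase of the process. The key structural observation is the following: as long as the current number of initial vertices is strictly greater than $x$ and the current number of terminal vertices is strictly greater than $y$, adding any candidate edge lowers each of these two counts by at most one, so the edge can never be cancelled and the process cannot halt. Consequently, the process begins by simply adding, in the order dictated by the uniformly random permutation of all $N=\binom{n}{2}$ candidate edges, an initial block of edges until the first moment one of the two counts drops to its floor. Let $\tau$ be the number of edges in this block. Since a vertex that is isolated at termination has never been incident to an added edge, it is in particular isolated in the graph formed by the first $\tau$ edges; writing $Z_\tau$ for the number of such vertices, we get that the expected number of isolated vertices at termination is at most $E[Z_\tau]$, so it suffices to prove $E[Z_\tau]=o(1)$.

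To estimate $E[Z_\tau]$ I would fix the threshold $m=\lceil n^{3/2}\rceil$ and split according to whether $\tau\ge m$ or $\tau<m$. On the event $\tau\ge m$, the graph at the end of the first phase contains the first $m$ edges of the permutation, so $Z_\tau$ is at most the number of vertices with no incident edge among a uniformly random set of $m$ edges. Each vertex is incident to exactly $n-1$ of the $N$ candidate edges, so this count has expectation at most $n\bigl(1-\tfrac{2}{n}\bigr)^{m}\le n\,e^{-2\sqrt{n}}$, which tends to $0$. On the event $\tau<m$, I would instead use the crude deterministic bound $Z_\tau\le\max(x,y)$: at the end of the first phase one of the two counts has just reached its floor, and every isolated vertex is simultaneously an initial and a terminal vertex, hence lies in the floored set, whose size is $x$ or $y$.

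The heart of the argument is therefore the estimate $\Pr[\tau<m]=o(1)$, and this is where I expect the main difficulty to lie, since the easy deterministic bound only gives $Z_\tau=O(1)$ and the whole point is to show that a count almost never reaches its floor prematurely. Here I would exploit the orientation of the edges: vertex $1$ is always initial and the lowest-labelled vertices are the hardest to de-initialize. To reach only $x$ initial vertices, at least two of the $x+1$ vertices $2,\dots,x+2$ must acquire an incoming edge, and every such edge belongs to the fixed set $E_0$ of all edges pointing into $\{2,\dots,x+2\}$, which has size $\binom{x+2}{2}=O(1)$. Thus $\tau<m$ forces some edge of the $O(1)$-element set $E_0$, or symmetrically of the corresponding set of outgoing edges from the top vertices $n-y-1,\dots,n-1$ governing the terminal floor, to occur among the first $m$ edges, an event of probability at most $O(1)\cdot m/N=O(n^{-1/2})$ by a union bound over these constantly many edges. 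Combining the two cases yields $E[Z_\tau]\le n\,e^{-2\sqrt{n}}+\max(x,y)\cdot O(n^{-1/2})=o(1)$, which completes the proof.
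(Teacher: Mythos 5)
Your proof is correct and follows essentially the same strategy as the paper's: both observe that the process adds edges in uniformly random permutation order until a constant-size set of ``dangerous'' edges is hit, split on whether this happens within roughly $n^{3/2}$ rounds (an event of probability $o(1)$ on which the number of isolated vertices is at most $\max(x,y)$), and otherwise bound the expected number of isolated vertices by $n e^{-\Theta(\sqrt{n})}$ since every vertex almost surely gains an incident edge among that many uniformly random edges. The only cosmetic difference is that you control the early-stopping probability by a union bound over the constant-size edge sets that force a count to reach its floor (edges into $\{2,\dots,x+2\}$ and their terminal-side analogue), whereas the paper computes the analogous probability for its set $P = P_x \cup P_y$ via a product over rounds.
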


\begin{proof}
    Suppose that $n$ is sufficiently large so that $\binom{n}{2}-(n-1)-\lceil n \sqrt{n} \rceil > \binom{x+1}{2}+\binom{y+1}{2}$ and $n > \binom{x+1}{2}+\binom{y+1}{2}+1$. As in the proof of Proposition~\ref{exp_edge_add}, let $P_x$ be the set of edges of the form $(r, s)$ where $1\leq r<s\leq x+1$, let $P_y$ be the set of edges of the form $(p, q)$ where $n-y\leq p<q\leq n$, and let $P = P_x \cup P_y$. Note that in the $(x, y)$ edge-addition process on $n$ vertices, every remaining edge is equally likely to be added in every round before the first round in which an element of $P$ is added. We first show that the probability of adding an edge in $P$ in the first $\lceil n \sqrt{n} \rceil$ rounds is $o(1)$. Then we use this to show that the expected number of isolated vertices is $o(1)$.

    First, note that $|P| = \binom{x+1}{2}+\binom{y+1}{2} = O(1)$. Thus, the probability that no edge of $P$ is added in the first $\lceil n \sqrt{n} \rceil$ rounds is \[\prod_{i = 1}^{\lceil n \sqrt{n} \rceil} \frac{\binom{n}{2}-|P|+1-i}{\binom{n}{2}+1-i} \ge \left(1-\frac{|P|}{\binom{n}{2}+1-\lceil n \sqrt{n} \rceil} \right)^{\lceil n \sqrt{n} \rceil} = 1-o(1).\] Therefore, the probability that an edge of $P$ is added in the first $\lceil n \sqrt{n} \rceil$ rounds is $o(1)$.

    The expected number of isolated vertices is $p_1 E_1 + p_2 E_2$ where $p_1$ is the probability of adding an edge in $P$ in the first $\lceil n \sqrt{n} \rceil$ rounds, $E_1$ is the expected number of isolated vertices conditioned on the fact that an edge in $P$ was added in the first $\lceil n \sqrt{n} \rceil$ rounds, $p_2$ is the probability of not adding an edge in $P$ in the first $\lceil n \sqrt{n} \rceil$ rounds (so $p_2 = 1-p_1$), and $E_2$ is the expected number of isolated vertices conditioned on the fact that an edge in $P$ was not added in the first $\lceil n \sqrt{n} \rceil$ rounds. We showed that $p_1 = o(1)$ and $p_2 = 1-o(1)$, and $E_1 \le \max(x, y) = O(1)$, so $p_1 E_1 = o(1)$. 

    Suppose that no edge in $P$ was added in the first $\lceil n \sqrt{n} \rceil$ rounds, so all remaining edges besides the edges in $P$ are equally likely to be added in each of the first $\lceil n \sqrt{n} \rceil$ rounds. Below we bound $E_2$ from above by the expected number of isolated vertices after the first $\lceil n \sqrt{n} \rceil$ rounds. For any vertex $v$ in the task-dependency graph, denote the set of edges in $P$ incident with $v$ by $Q_v$. Then the probability that no edge incident with $v$ is added in the first $\lceil n \sqrt{n} \rceil$ rounds is at most \[\prod_{i = 1}^{\lceil n \sqrt{n} \rceil} \frac{\binom{n}{2}-(n-1)-|P|+|Q_v|+1-i}{\binom{n}{2}-|P|+1-i} \le \left(1-\frac{n-1-|Q_v|}{\binom{n}{2}-|P|}\right)^{\lceil n \sqrt{n} \rceil} \le e^{-\frac{n \sqrt{n}(n-1-|P|)}{\binom{n}{2}-|P|}}.\] Thus by linearity of expectation, \[E_2 \le n e^{-\frac{n \sqrt{n}(n-1-|P|)}{\binom{n}{2}-|P|}} = o(1).\] Therefore, $p_1 E_1 + p_2 E_2 = o(1)$.
\end{proof}

%\begin{proof}
%We show that right before adding an edge in $P$, the probability of any given vertex $v$ being isolated is $o(1)$. Consider the set $S_v$ of edges incident with $v$, which intersects $P$ at $Q_v$. Vertex $v$ is not isolated right before an edge in $P$ is added if and only if an edge in $S_v\setminus  P$ has been added, i.e. the first edge selected from $S_v\cup P$ is from $S_v\setminus P$, which happens with probability
%$$
%\frac{|S_v\setminus P|}{|S_v\cup P|}=\frac{n-1-|Q_v|}{n-1+|P|-|Q_v|}=1-o(1)
%$$
%\end{proof}

\section{Experimental Results}\label{s:experiment}

We implemented the $(x, y)$ edge-removal and $(x, y)$ edge-addition processes in Python \cite{gt23code}. After running thousands of trials for each process, we have several conjectures about the expected number of edges, expected maximum directed path length, and probability of generating an $(x, y)$ task-dependency graph for both processes. Table~\ref{xy_remove_prob} shows the results of running $1000$ trials of the $(x, y)$ edge-removal process for $(x, y) = (1, 2), (1, 3), (1, 4), (2, 3), (2, 4), (3, 4)$ and finding the ratio of the number of trials on which the process generates an $(x, y)$ task-dependency graph over the total number of trials. We let the number of vertices range from $5$ to $14$. Recall that the $(x, x)$ edge-removal process generates an $(x, x)$ task-dependency graph for each $x \ge 1$, but it is possible for the $(x, y)$ edge-removal process to generate a task-dependency graph with less than $x$ initial vertices or less than $y$ terminal vertices whenever $x \neq y$.

\begin{table}
\begin{tabular}{|c|c|c|c|c|c|c|c|c|c|c|} 
  \hline
  $(x, y)$ & $5$ & $6$ & $7$ & $8$ & $9$ & $10$ & $11$ & $12$ & $13$ & $14$\\ 
  \hline

(1, 2) & 0.947 & 0.993 & 0.999 & 1.000 & 1.000 & 1.000 & 1.000 & 1.000 & 1.000 & 1.000 \\ 
\hline
(1, 3) & 0.507 & 0.748 & 0.908 & 0.965 & 0.992 & 0.998 & 1.000 & 1.000 & 1.000 & 1.000 \\ 
\hline
(1, 4) & 0.051 & 0.229 & 0.484 & 0.733 & 0.883 & 0.971 & 0.985 & 0.998 & 0.998 & 1.000 \\ 
\hline
(2, 3) & 0.687 & 0.870 & 0.968 & 0.994 & 0.998 & 1.000 & 1.000 & 1.000 & 1.000 & 1.000 \\ 
\hline
(2, 4) & 0.086 & 0.332 & 0.572 & 0.806 & 0.926 & 0.978 & 0.990 & 0.996 & 1.000 & 1.000 \\ 
\hline
(3, 4) & 0.258 & 0.590 & 0.796 & 0.908 & 0.981 & 0.992 & 0.999 & 0.999 & 1.000 & 1.000 \\ 
\hline

\end{tabular}
\caption{\label{xy_remove_prob} This table shows the results of running $1000$ trials of the $(x, y)$ edge-removal process on $n$ vertices for $(x, y) = (1, 2), (1, 3), (1, 4), (2, 3), (2, 4), (3, 4)$ and finding the ratio of the number of trials on which the process generates an $(x, y)$ task-dependency graph over the total number of trials. The number of vertices $n$ ranges from $5$ to $14$.}
\end{table}

Based on the results in Table~\ref{xy_remove_prob}, we conjecture for each fixed pair $x, y \ge 1$ that the probability of obtaining an $(x, y)$ task-dependency graph at the end of the $(x, y)$ edge-removal process approaches $1$ as the number of vertices goes to infinity. 

\begin{conj}
For fixed $x, y \ge 1$, if $r_{x,y,n}$ denotes the probability that the $(x, y)$ edge-removal process on $n$ vertices generates an $(x, y)$ task-dependency graph, then \[\lim_{n \rightarrow \infty} r_{x, y, n} = 1.\]
\end{conj}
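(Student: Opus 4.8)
The plan is to first recast ``success'' in a form that eliminates the awkward halting condition, then reduce the problem to controlling a race between two monotone counters, and finally to isolate the coupling between those counters as the real difficulty. Assume without loss of generality that $x \ge y$, and let $a$ and $b$ denote the current numbers of initial and terminal vertices, starting from $a=b=1$ in the complete directed acyclic graph. A successful removal of an edge $(u,v)$ increases $a$ by one exactly when $\indeg(v)=1$ and increases $b$ by one exactly when $\outdeg(u)=1$, and it never decreases either count; cancellation keeps $a\le x$ and $b\le y$ at all times. Thus $a$ and $b$ are nondecreasing and capped. The key observation is that as soon as the process reaches a state with $a=x$ and $b=y$, it has already succeeded: every later removal must preserve $a=x$ and $b=y$, and the process halts at a minimal $(x,y)$ task-dependency graph by Lemma~\ref{edge_remove}. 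Hence $r_{x,y,n}=\Pr[\,a\text{ reaches }x\text{ and }b\text{ reaches }y\,]$, and failure is precisely the event that the process halts with $a<x$ or $b<y$.

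It therefore suffices, by a union bound, to show that $\Pr[\text{halt with }a<x]\to 0$ and $\Pr[\text{halt with }b<y]\to 0$. I would first pin down the structure of a failing halt. If the process halts with $a<x$, then every present edge $(u,v)$ must have its removal cancelled; since creating a new initial vertex is permissible while $a<x$, the only possible obstruction is the creation of a new terminal, which forces $b=y$ and $\outdeg(u)=1$ for \emph{every} present edge. Consequently the halted graph has out-degree at most $1$ everywhere: it is a spanning in-forest with exactly $y$ sinks, $n-y$ edges, and fewer than $x$ sources. Symmetrically, halting with $b<y$ forces an out-forest with $x$ sources. So failure requires the random removals to drive the initially complete graph, in which vertex $1$ alone has out-degree $n-1$, all the way down to such a degenerate forest without $a$ ever reaching $x$.

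The main strategy is a two-phase analysis. Once $b$ first equals $y$, at time $\tau_b$, the only admissible removals are edges $(u,v)$ with $\outdeg(u)\ge 2$, and such a removal still raises $a$ whenever $\indeg(v)=1$; the process can only halt with $a<x$ once \emph{no} vertex retains out-degree $\ge 2$. The plan is to show that with high probability this collapse does not occur before $a$ reaches $x$. A convenient route is to exhibit a short ``certificate'': show that the $\binom{x}{2}$ edges lying inside the first $x$ vertices are all removed before the graph degenerates. Because every in-edge of a vertex $j\le x$ originates only in $\{1,\dots,j-1\}\subseteq\{1,\dots,x\}$, deleting these $O(1)$ edges makes all of $1,\dots,x$ initial and forces $a=x$; moreover each such removal $(i,j)$ with $i\le x$ cannot create a terminal, since $i$ keeps its $\ge n-x$ out-edges to the high-indexed vertices, so it is never cancelled on the terminal side. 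One would then bound the probability that this finite, fixed edge set is removed early, relative to the much larger time at which the global out-degree profile could collapse to a forest. The symmetric certificate, the $\binom{y}{2}$ edges inside the last $y$ vertices, handles the $b<y$ failure.

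The hard part is that in each round the removed edge is uniform over the present edges that are \emph{not} cancelled, and this admissible set depends on the current structure; in particular, reaching the cap $b=y$ changes the admissible-removal distribution and can freeze progress toward $a=x$. Making ``before the out-degrees collapse'' precise requires controlling degree statistics in an intermediate-density regime, since the decisive dynamics lie neither in the nearly-complete nor the nearly-empty phase, and the process is adaptive, so standard concentration arguments do not apply directly. Quantifying that the terminal cap is reached only while the graph is still rich in out-degree-$\ge 2$ vertices, equivalently that the in-forest reached at any would-be failure has at least $x$ sources with high probability, is the crux of the argument and is where most of the work, and the risk, lies.
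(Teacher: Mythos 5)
First, a point of status: the paper does not prove this statement. It appears only as a conjecture, and Section~\ref{open_prob} states explicitly that the analogous claim was proved for the edge-addition process but ``it remains to prove for the $(x,y)$ edge-removal process.'' So there is no proof in the paper to compare yours against, and your attempt must stand on its own. As written it is not a complete proof, for a reason you yourself flag: the whole argument funnels into the claim that, with high probability, the $\binom{x}{2}$ edges inside $\{1,\dots,x\}$ are all removed before the admissible-removal structure degenerates, and you leave that step unproved (``the crux \dots\ where most of the work, and the risk, lies''). Moreover, the one concrete justification you do give for the certificate step is wrong: you assert that an attempted removal of $(i,j)$ with $i<j\le x$ can never be cancelled on the terminal side because ``$i$ keeps its $\ge n-x$ out-edges to the high-indexed vertices.'' That is true only early in the process; those out-edges are themselves being removed, and by the time $(i,j)$ comes up for removal, $i$ may well have out-degree $1$, in which case the removal \emph{is} cancelled whenever $b=y$ --- which, as you note, is exactly what is forced in any failing run. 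So the proposal has a genuine gap at precisely the step that matters.

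The good news is that your certificate is the right one, and the gap can be closed without any control of the adaptive intermediate dynamics: the ``timing versus global collapse'' comparison you struggle with is the wrong comparison. Use instead the representation of the process that the paper itself uses in Lemma~\ref{lem:edge-prob}: since a cancelled removal stays cancelled forever (in- and out-degrees are non-increasing, while the counters $a,b$ are non-decreasing and capped), the process is equivalent to fixing one uniform random permutation of all $\binom{n}{2}$ edges and attempting each removal exactly once, in that order. Let $E_a$ be the event that for every $i<j\le x$ the edge $(i,j)$ is \emph{not} the last element of $B_i=\{(i,t):t>i\}$ in the permutation; a union bound gives $\Pr[\neg E_a]\le\binom{x}{2}/(n-x)$. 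On $E_a$, suppose for contradiction that $a<x$ throughout the run. Then no attempted removal is ever cancelled on the initial side (that would require $a=x$), and when a certificate edge $(i,j)$ is attempted, some other out-edge of $i$ occurs later in the permutation, hence has not yet been removed, so $\outdeg(i)\ge2$ and the removal cannot be cancelled on the terminal side either. Thus all $\binom{x}{2}$ certificate edges get removed; since every in-edge of a vertex $j\le x$ lies inside $\{1,\dots,x\}$, all of $1,\dots,x$ end up initial, contradicting $a<x$. Hence $a$ reaches $x$ on $E_a$. Symmetrically, on the event $E_b$ that no edge inside $\{n-y+1,\dots,n\}$ is last in its in-edge family $A_j=\{(s,j):s<j\}$, the counter $b$ reaches $y$, with $\Pr[\neg E_b]\le\binom{y}{2}/(n-y)$. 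By monotonicity, on $E_a\cap E_b$ the process terminates with exactly $x$ initial and $y$ terminal vertices, so $r_{x,y,n}\ge 1-\binom{x}{2}/(n-x)-\binom{y}{2}/(n-y)\rightarrow 1$. In short: replace your unquantified ``before the out-degrees collapse'' by the exact event ``not last within a fixed family of $O(1)$ edges,'' which is a pure relative-order statement requiring only a union bound; this both repairs the false degree claim and eliminates the adaptive-process difficulty entirely.
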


Analogously to the previous experiment, Table~\ref{xy_add_prob} shows the results of running $1000$ trials of the $(x, y)$ edge-addition process for $(x, y) = (1, 2), (1, 3), (1, 4), (2, 3), (2, 4), (3, 4)$ and finding the ratio of the number of trials on which the process generates an $(x, y)$ task-dependency graph over the total number of trials. As with the last table, we let the number of vertices range from $5$ to $14$. Recall that the $(x, x)$ edge-addition process generates an $(x, x)$ task-dependency graph for each $x \ge 1$, but it is possible for the $(x, y)$ edge-addition process to generate a task-dependency graph with more than $x$ initial vertices or more than $y$ terminal vertices whenever $x \neq y$.

\begin{table}
\begin{tabular}{|c|c|c|c|c|c|c|c|c|c|c|} 
  \hline
  $(x, y)$ & $5$ & $6$ & $7$ & $8$ & $9$ & $10$ & $11$ & $12$ & $13$ & $14$\\ 
  \hline

(1, 2) & 0.923 & 0.962 & 0.964 & 0.98 & 0.993 & 0.989 & 0.998 & 0.995 & 1.000 & 1.000 \\ 
\hline
(1, 3) & 0.715 & 0.828 & 0.903 & 0.914 & 0.954 & 0.968 & 0.978 & 0.988 & 0.982 & 0.992 \\ 
\hline
(1, 4) & 0.382 & 0.616 & 0.727 & 0.825 & 0.864 & 0.916 & 0.931 & 0.937 & 0.958 & 0.963 \\ 
\hline
(2, 3) & 0.958 & 0.986 & 0.988 & 0.998 & 0.998 & 0.999 & 1.000 & 1.000 & 1.000 & 1.000 \\ 
\hline
(2, 4) & 0.706 & 0.890 & 0.954 & 0.981 & 0.985 & 0.988 & 0.994 & 0.994 & 0.999 & 0.999 \\ 
\hline
(3, 4) & 0.907 & 0.982 & 0.994 & 0.999 & 1.000 & 0.999 & 1.000 & 1.000 & 1.000 & 1.000 \\ 
\hline

\end{tabular}
\caption{\label{xy_add_prob} This table shows the results of running $1000$ trials of the $(x, y)$ edge-addition process on $n$ vertices for $(x, y) = (1, 2), (1, 3), (1, 4), (2, 3), (2, 4), (3, 4)$ and finding the ratio of the number of trials on which the process generates an $(x, y)$ task-dependency graph over the total number of trials. The number of vertices $n$ ranges from $5$ to $14$.}
\end{table}

As with the previous experiment on the $(x, y)$ edge-removal process, we conjecture for each fixed pair $x, y \ge 1$ that the probability of obtaining an $(x, y)$ task-dependency graph at the end of the $(x, y)$ edge-addition process approaches $1$ as the number of vertices goes to infinity. 

\begin{conj}\label{prob_add}
For fixed $x, y \ge 1$, if $a_{x,y,n}$ denotes the probability that the $(x, y)$ edge-addition process on $n$ vertices generates an $(x, y)$ task-dependency graph, then \[\lim_{n \rightarrow \infty} a_{x, y, n} = 1.\]
\end{conj}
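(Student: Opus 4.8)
The plan is to show that the probability of \emph{failure}—the event that the $(x,y)$ edge-addition process terminates in a configuration that is not an $(x,y)$ task-dependency graph—is $o(1)$, whence $a_{x,y,n} = 1 - o(1) \to 1$. The engine of the argument is Theorem~\ref{edge_add_isolated}, which says the expected number of isolated vertices in the generated graph is $o(1)$. So the real work is structural: I would prove that every terminal configuration that is \emph{not} an $(x,y)$ task-dependency graph must contain at least one isolated vertex. Markov's inequality then finishes the proof, since $P(\text{failure}) \le P(\#\text{isolated} \ge 1) \le \mathbb{E}[\#\text{isolated}] = o(1)$.

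To set up the structural step, I would first record the dynamics. The number $r$ of initial vertices and the number $s$ of terminal vertices are non-increasing throughout the process and, by the cancellation rule, never drop below $x$ and $y$ respectively, so at termination $r \ge x$ and $s \ge y$. Moreover an edge $(a,b)$ with $a<b$ is cancelled exactly when adding it would breach a threshold, i.e.\ when $b$ is currently initial and $r = x$, or when $a$ is currently terminal and $s = y$. The process halts either upon reaching $(x,y)$ (success) or when every non-edge is cancelled (stuck), and a failure is precisely a stuck configuration with $(r,s)\ne(x,y)$.

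Next I would rule out the case $r>x$ and $s>y$: in this regime no edge is ever cancelled, so being stuck forces the complete DAG on the ordered vertices, which has $r=s=1$, contradicting $r>x\ge1$. Hence a failure satisfies $r=x,\,s>y$ or, symmetrically, $r>x,\,s=y$. In the first case $s>y$ means cancellations come only from the initial-vertex rule, so stuck means every non-initial vertex $b$ already receives all edges $(a,b)$ with $a<b$. A short computation then shows that a vertex $v$ is isolated iff $v$ and every larger vertex are initial, so the isolated vertices are the maximal initial suffix $\{n-k+1,\dots,n\}$ while the terminal vertices are exactly $\{n-k,\dots,n\}$; thus $\#\text{isolated}=s-1\ge y\ge1$. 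The second case I would reduce to the first via the order-reversing, edge-reversing symmetry $i\mapsto n+1-i$, which exchanges initial and terminal vertices, carries the $(x,y)$ process to the $(y,x)$ process, preserves the uniform distribution over edge orderings, and fixes the isolated-vertex count; it yields $\#\text{isolated}\ge x\ge1$ there. In either case the stuck non-$(x,y)$ configuration has at least one isolated vertex, as required.

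I expect the main obstacle to be the structural characterization of stuck configurations—specifically, verifying that $r=x,\,s>y$ together with the stuck condition forces every non-initial vertex to be fully in-connected from below, and then reading off the isolated and terminal vertices as nested suffixes of the vertex order. Once this characterization is established, combining it with Theorem~\ref{edge_add_isolated} and Markov's inequality to conclude $\lim_{n\to\infty} a_{x,y,n}=1$ is routine.
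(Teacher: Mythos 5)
Your proposal is correct and follows essentially the same strategy as the paper's proof: show that any terminal configuration that fails to be an $(x,y)$ task-dependency graph must contain an isolated vertex, then conclude via Markov's inequality and Theorem~\ref{edge_add_isolated}. The only cosmetic difference is which case you analyze directly---you characterize the stuck configurations with $r=x$, $s>y$ (every non-initial vertex fully in-connected from below, forcing an isolated suffix) and handle $r>x$, $s=y$ by the reversal symmetry, whereas the paper takes $r>x$, $s=y$ without loss of generality and argues that if no vertex were isolated then every pair of non-terminal vertices would be joined, forcing $r=1$, a contradiction.
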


We were able to prove Conjecture~\ref{prob_add} using an approach similar to the proofs of Propositions~\ref{xxedgeremove} and \ref{xxedgeadd}. 

\begin{thm}
Conjecture~\ref{prob_add} is true.
\end{thm}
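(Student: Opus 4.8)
The plan is to reduce the statement to Theorem~\ref{edge_add_isolated}, which already shows that the expected number of isolated vertices in the terminal graph is $o(1)$. Since the $(x,x)$ edge-addition process always succeeds by Proposition~\ref{xxedgeadd}, we have $a_{x,x,n}=1$, so assume $x\neq y$; reversing every edge interchanges initial and terminal vertices and swaps $x$ and $y$, so we may assume $x>y$, and in particular $x\geq 2$. The engine of the proof is the following deterministic claim: \emph{whenever the $(x,y)$ edge-addition process halts with a graph $G$ that is not an $(x,y)$ task-dependency graph, $G$ contains at least one isolated vertex.} Granting this claim, failure of the process forces the presence of an isolated vertex, and I finish probabilistically.

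First I would analyze the structure of a halting graph $G$ that is not an $(x,y)$ task-dependency graph. Because the process never lets the number of initial vertices drop below $x$ or the number of terminal vertices drop below $y$, at termination $G$ has $r\geq x$ initial vertices and $s\geq y$ terminal vertices, with $r>x$ or $s>y$. I would first rule out having $r>x$ and $s>y$ simultaneously: in that case adding any absent edge keeps both counts above their thresholds, so $G$ could halt only if it were complete, but the complete graph on $\{1,\dots,n\}$ is a $(1,1)$ task-dependency graph, a contradiction. Thus exactly one count is strict, and the two bad cases $s>y$ and $r>x$ are interchanged by the edge-reversal symmetry, so I would treat $s>y$. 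Here no admissible edge can lower $s$ below $y$ (that would require a terminal tail together with $s=y$), so stuckness forces $r=x$, and every absent edge $(u,v)$ with $u<v$ must instead lower the initial count below $x$, i.e. $v$ is currently initial. Equivalently, \emph{every non-initial vertex $v$ receives the edge $(u,v)$ from every $u<v$.}

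The key step is to extract an isolated vertex from this condition. A vertex $w$ is terminal exactly when it has no out-edge, and since every non-initial $v>w$ must receive $(w,v)$, the vertex $w$ is terminal if and only if every vertex larger than $w$ is initial; hence the terminal set is a consecutive top block $\{t,t+1,\dots,n\}$ of size $s$. For any terminal $w>t$, were $w$ non-initial it would receive $(t,w)$, making $t$ non-terminal, a contradiction; so every terminal vertex except the smallest is also initial, i.e. isolated. Since $s>y\geq 1$ gives $s\geq 2$, at least one such isolated vertex exists, which proves the claim. Finally, by Theorem~\ref{edge_add_isolated} the expected number of isolated vertices is $o(1)$, so Markov's inequality shows the probability of having at least one isolated vertex is $o(1)$; as failure forces an isolated vertex, $a_{x,y,n}\geq 1-o(1)\to 1$. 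I expect the main obstacle to be the structural claim, and specifically verifying that the stuck condition forces the terminal vertices into a top block all but one of whose members are isolated; the probabilistic conclusion is then immediate from the already-established Theorem~\ref{edge_add_isolated}.
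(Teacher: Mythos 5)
Your proposal is correct and takes essentially the same route as the paper: both reduce failure of the process to the deterministic claim that a stuck non-$(x,y)$ graph must contain an isolated vertex, and then finish with Markov's inequality applied to Theorem~\ref{edge_add_isolated}. The only difference is cosmetic: the paper analyzes the case $r>x$, $s=y$ and derives a contradiction from assuming no isolated vertices, whereas you analyze the mirror case $r=x$, $s>y$ and exhibit the isolated vertices explicitly (via the top block of terminal vertices); the two cases are interchangeable under edge reversal, as you note.
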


\begin{proof}
Suppose that $G$ is an $(r, s)$ task-dependency graph generated by the $(x, y)$ edge-addition process on $n$ vertices with $(r, s) \neq (x, y)$. Then we must have $r > x$ or $s > y$, so without loss of generality, let $r > x$. Note that we must have $s = y$, or else we could add another edge and the resulting task-dependency graph would still have at least $x$ initial vertices and at least $y$ terminal vertices. We first claim that $G$ cannot have any pair of distinct non-terminal vertices $u < v$ without an edge between them. If there were such a pair, then we could add the edge $(u, v)$ without decreasing the number of terminal vertices, so the $(x, y)$ edge-addition process would not have terminated on $G$. 

Next, we claim that $G$ must have some isolated vertex. To see this, suppose that $G$ had no isolated vertices. Then $G$ only has a single initial vertex since every pair of non-terminal vertices must have an edge between them. Thus $r = 1$, which contradicts $r > x$, so $G$ must have some isolated vertex.

By Markov's inequality, the probability that $G$ has an isolated vertex is at most the expected number of isolated vertices, which is $o(1)$ by Theorem~\ref{edge_add_isolated}. Thus, the probability that $G$ is not an $(x, y)$ task-dependency graph is $o(1)$, so $a_{x,y,n} = 1-o(1)$.
\end{proof}

We also ran $1000$ trials of the $(1, 1)$ edge-removal and $(1, 1)$ edge-addition processes and calculated the average number of edges and average maximum directed path length. We ran the processes for the number of vertices ranging from $3$ to $40$. Figure~\ref{xyp_plots} shows plots of the averages with fit curves.

\begin{figure}[h]\label{xyp_plots}
\caption{Plot (a) shows the average number of edges with respect to $n$ of the $(1, 1)$ edge-removal process on $n$ vertices as $n$ ranges from $3$ to $40$. The line of best fit is $y = 1.349x - 2.029$. \\
Plot (b) shows the average maximum directed path length with respect to $n$ of the $(1, 1)$ edge-removal process on $n$ vertices as $n$ ranges from $3$ to $40$. The logarithmic curve of best fit is $y = 3.679 \ln{(x+1.799)} - 3.776$. \\
Plot (c) shows the average number of edges with respect to $n$ of the $(1, 1)$ edge-addition process on $n$ vertices as $n$ ranges from $3$ to $40$. The quadratic of best fit is $y = 0.369 x^2 - 0.258 x + 0.007$. \\
Plot (d) shows the average maximum directed path length with respect to $n$ of the $(1, 1)$ edge-addition process on $n$ vertices as $n$ ranges from $3$ to $40$. The line of best fit is $y = 0.779x - 0.273$. All coefficients have been rounded to $3$ places.}
%\centering
\includegraphics[width=\textwidth]{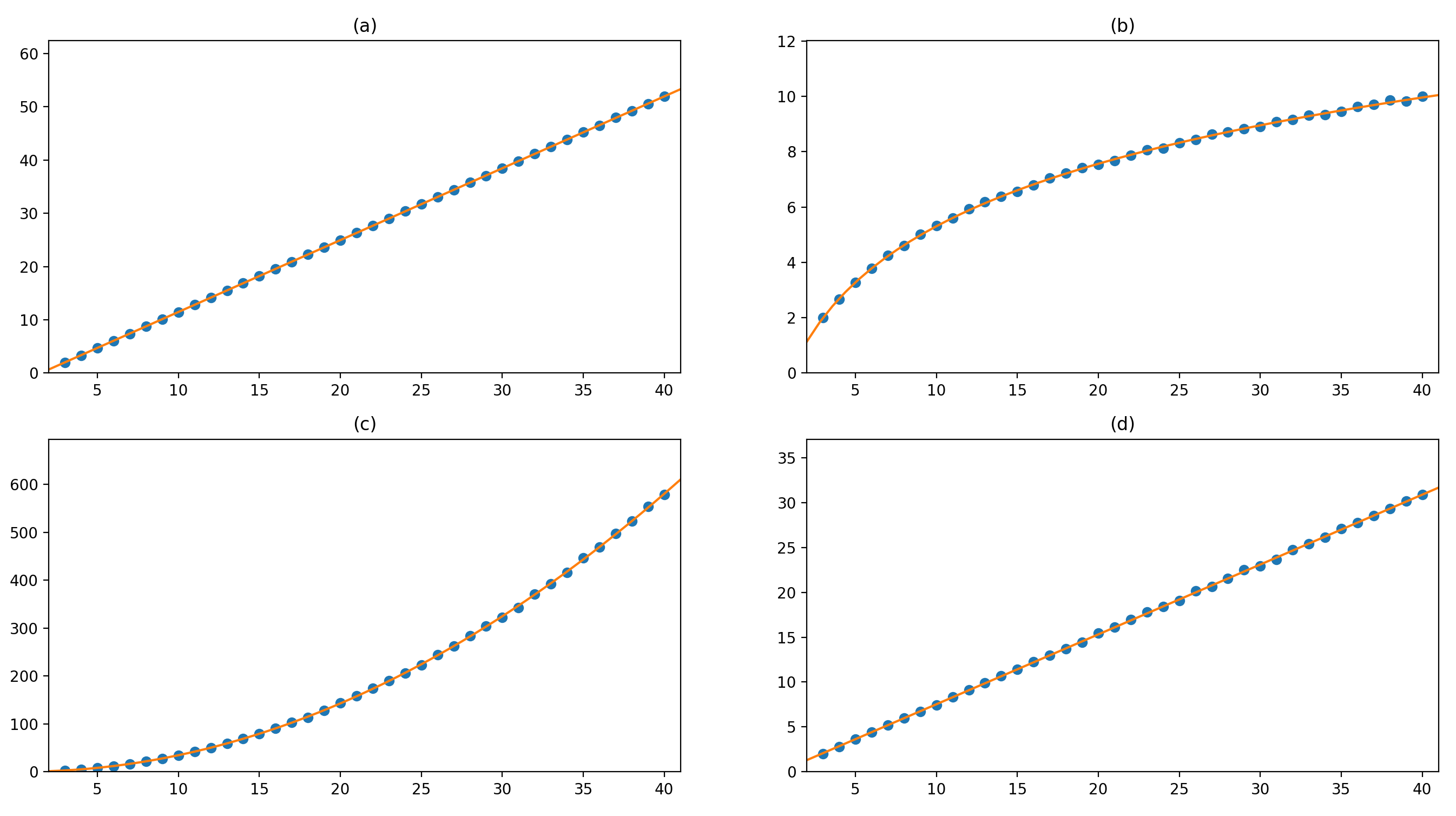}
\end{figure}

From the results in Section~\ref{ex_bounds}, we know that the expected number of edges in the task-dependency graph generated by the $(1, 1)$ edge-removal process on $n$ vertices is $\Theta(n)$, with an upper bound of $2n-4$ and a lower bound of $n-1$. Based on the results of the experiment, we conjecture that the expectation is closer to the lower bound. 

\begin{conj}
If $e_{n}$ denotes the expected number of edges in the task-dependency graph generated by the $(1, 1)$ edge-removal process on $n$ vertices, then \[\lim_{n \rightarrow \infty} \frac{e_n}{n} \approx 1.35.\]    
\end{conj}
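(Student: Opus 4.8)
The plan is to treat the $(1,1)$ edge-removal process on $[n]$ as the chain that, starting from the transitive tournament (all edges $(i,j)$ with $i<j$), repeatedly deletes a uniformly random \emph{removable} edge until none remain, where by Lemma~\ref{edge_remove} an edge $(u,v)$ is removable exactly when $\outdeg(u)\ge 2$ and $\indeg(v)\ge 2$. Picking a uniform present edge and cancelling illegal choices is equivalent to picking a uniform removable edge, since a cancelled attempt leaves the state unchanged, so the terminal state is a random minimal $(1,1)$ task-dependency graph. The bounds $n-1\le e_n\le 2n-4$ from Corollary~\ref{eamintask} and Corollary~\ref{11_edge} already give $1\le\liminf e_n/n\le\limsup e_n/n\le 2$, and the goal is to show the limit exists and equals the empirically observed constant $\approx 1.35$.

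First I would establish that $e_n/n$ concentrates. Exposing the sequence of removed edges one at a time defines a Doob martingale whose increments are bounded, since altering a single early choice perturbs the final edge count by $O(1)$ in expectation; by the Azuma--Hoeffding/McDiarmid inequality the number of surviving edges then lies within $O(\sqrt{n\log n})$ of its mean with high probability. This reduces the identification of $\lim e_n/n$ to analyzing the trajectory of a \emph{typical} run, and in particular lets me work with the evolution of global statistics rather than with the full random graph.

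The core step is a fluid-limit analysis in the style of the differential-equation method of Wormald. The natural state descriptor is the joint profile of in- and out-degrees, with the key driving quantities being the number $M$ of currently removable edges together with the counts $s_1=\#\{v:\indeg(v)=1\}$ and $t_1=\#\{u:\outdeg(u)=1\}$, since an edge freezes permanently exactly when one of its endpoints first reaches degree $1$ on the relevant side. I would write down the one-step expected changes of these statistics under deletion of a uniform removable edge, rescale time by $n$, and obtain an ODE system whose solution, evaluated at the time $M$ first hits $0$, yields $\lim e_n/n$; the concentration from the previous step is what lets me certify that the discrete process tracks these ODEs with high probability.

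The hard part, and the reason this is stated only as a conjecture, is twofold. Analytically, the removability test couples $\outdeg(u)$ with $\indeg(v)$, so the in- and out-degree profiles do not evolve independently and a low-dimensional system may fail to close; faithfully capturing the induced correlations could force one to track an unbounded family of statistics, defeating a clean ODE reduction. More fundamentally, even granting existence of the limit, I do not expect the constant to admit a simple closed form, so pinning it down to match the experimental value $1.349$ — rather than merely proving $1<\lim e_n/n<2$ or trapping it in a narrow interval — is the genuine obstacle, which is precisely why the authors record the exact value as a conjecture rather than a theorem.
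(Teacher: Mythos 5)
You correctly treat the statement as what it is---a conjecture---and your submission is a research program rather than a proof; on that score you are in the same position as the paper, which also does not prove the limit but supports it with experiments and with a partial rigorous result. The comparison is still instructive. The paper's rigorous progress is \emph{static}: Lemma~\ref{lem:edge-prob} re-encodes the whole process as a uniformly random permutation of all $\binom{n}{2}$ edges processed in order (valid because a frozen edge stays frozen forever), observes that $(r,s)$ can survive only if it comes last among $A_s=\{(p,s):p<s\}$ or among $B_r=\{(r,t):r<t\}$, and bounds the survival probability by inclusion-exclusion; summing over all edges gives $\limsup_{n\to\infty} e_n/n\le 3-2\ln 2\approx 1.614$, improving the trivial bound of $2$ that you quote from Corollary~\ref{11_edge}. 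Your route is \emph{dynamic}---martingale concentration plus Wormald's differential-equation method---and, if it could be carried out, it would do something the paper's argument cannot: identify the limit exactly, as the value of an ODE trajectory at a stopping time, rather than merely bound it from one side. The two approaches are genuinely different and complementary; your reduction of the cancellation dynamics to ``delete a uniformly random removable edge,'' and your observation that non-removability (via Lemma~\ref{edge_remove}) is permanent, are both correct and are the same structural facts the paper exploits in its permutation encoding.

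That said, two steps in your sketch are genuine gaps rather than routine unfinished business. First, the bounded-differences claim behind your Azuma/McDiarmid application is asserted, not argued: changing one early removal changes which later edges are removable and can cascade, the process has no evident worst-case Lipschitz property, and ``perturbs the final edge count by $O(1)$ in expectation'' is not a hypothesis under which McDiarmid applies. A more promising avenue is the paper's permutation representation, under which the final graph is a deterministic function of a uniform permutation, so one could attempt to bound the effect of a single transposition---but that bound is also not obviously $O(1)$. Second, as you yourself concede, the triple $(M,s_1,t_1)$ does not determine the one-step drift: removability couples the out-degree of the tail with the in-degree of the head, so the drift of $M$ depends on the joint in/out-degree profile over edges, and no finite family of statistics visibly closes the system. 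Since you also expect the limiting constant to lack a closed form, your plan, even if completed, would establish existence of $\lim_{n\to\infty} e_n/n$ but not the value $\approx 1.35$; in that sense the paper's modest explicit bound $3-2\ln 2$ remains the stronger rigorous statement currently available.
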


\begin{lem}\label{lem:edge-prob}
For any $1\le r<s\le n$, the probability that edge $(r,s)$ remains in the $(1,1)$ task dependency graph generated by $(1,1)$ edge-removal process is at most
$$
\frac{1}{s-1}+\frac{1}{n-r}-\frac{1}{n-2+s-r}.
$$
\end{lem}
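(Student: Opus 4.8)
The plan is to analyze the process through a uniformly random permutation $\sigma$ of the $\binom{n}{2}$ edges, which records the order in which edges are considered for removal; processing the edges in this order and deleting each one upon reaching it whenever its removal is still valid is equivalent to the $(1,1)$ edge-removal process. First I would record two structural facts. Since the process starts from the complete acyclic orientation of $1<2<\cdots<n$ and never creates a second source or sink, vertex $1$ remains the unique initial vertex and vertex $n$ the unique terminal vertex throughout, because deleting edges can never give vertex $1$ an in-edge or vertex $n$ an out-edge. Consequently, deleting an edge $(r,s)$ is valid exactly when $\indeg(s)\ge 2$ and $\outdeg(r)\ge 2$ in the current graph; moreover once $\indeg(s)=1$ or $\outdeg(r)=1$ these conditions persist, since deleting further edges cannot raise a degree, so a non-removable edge survives. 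Hence $(r,s)$ survives if and only if, at the moment it is reached in $\sigma$, either $s$ has no remaining in-edge besides $(r,s)$ or $r$ has no remaining out-edge besides $(r,s)$.

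Next I would pass to a necessary condition whose probability is easy to compute. If at $(r,s)$'s turn $s$ has no other remaining in-edge, then every other in-edge of $s$ must already have been deleted, hence considered, before $(r,s)$; that is, $(r,s)$ is the last of the $s-1$ in-edges $(u,s)$, $1\le u\le s-1$, to appear in $\sigma$. Call this event $A$. Symmetrically, if $r$ has no other remaining out-edge, then $(r,s)$ is the last of the $n-r$ out-edges $(r,w)$, $r+1\le w\le n$, to appear in $\sigma$; call this event $B$. Thus survival implies $A\cup B$, so the survival probability is at most $\Pr[A\cup B]$.

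Finally I would evaluate $\Pr[A\cup B]$ by inclusion-exclusion, using that for any fixed set of $k$ edges the probability that a prescribed one of them is last in $\sigma$ is $1/k$. This gives $\Pr[A]=\frac{1}{s-1}$ and $\Pr[B]=\frac{1}{n-r}$. The set of in-edges of $s$ and the set of out-edges of $r$ meet only in the single edge $(r,s)$, so their union has $(s-1)+(n-r)-1=n-2+s-r$ edges, and $A\cap B$ is exactly the event that $(r,s)$ is last among these, giving $\Pr[A\cap B]=\frac{1}{n-2+s-r}$. Therefore the survival probability is at most $\frac{1}{s-1}+\frac{1}{n-r}-\frac{1}{n-2+s-r}$, as claimed.

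The inclusion-exclusion and the counting are routine; the step that needs care is the reduction to the permutation model together with the observation that survival only forces the \emph{one-directional} implication $A\cup B$ (the other in-edges of $s$ being considered earlier does not mean they were actually deleted), which is precisely why the argument yields an upper bound rather than an exact value. I would also state the monotonicity claim cleanly—that non-removability is permanent—since it is what guarantees that ``reached in $\sigma$ while non-removable'' coincides with ``never removed.''
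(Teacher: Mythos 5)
Your proof is correct and takes essentially the same approach as the paper's: both pass to a uniform random permutation of all $\binom{n}{2}$ edges, observe that survival of $(r,s)$ forces it to come last among the in-edges of $s$ or last among the out-edges of $r$, and bound the probability of this necessary condition by inclusion-exclusion over those two edge sets, whose intersection is the single edge $(r,s)$. Your explicit justifications of the permutation model and of the permanence of non-removability simply spell out steps the paper leaves implicit.
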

\begin{proof}
Recall that starting with a maximally connected $(1,1)$ task dependency graph $G$ of order $n$, the edge-removal process randomly permutes all $\binom{n}{2}$ edges and then removes them one by one except if the removal makes any interior vertex exterior. Consider the edge sets $A_s=\{(p,s):p<s\}$ and $B_r=\{(r,t):r<t\}$. A necessary condition for edge $(r,s)$ to remain in the final task-dependency graph is that in the permutation of edges it has to come last among $B_r$ or $A_s$. Otherwise when edge $(r,s)$ is being considered for removal, vertex $r$ would have out-degree more than $1$, vertex $s$ would have in-degree more than $1$, and $(r,s)$ would have been removed. By the principle of inclusion and exclusion, this necessary condition happens with probability
$$
\frac{1}{|A_s|}+\frac{1}{|B_r|}-\frac{1}{|A_s\cup B_r|}=\frac{1}{s-1}+\frac{1}{n-r}-\frac{1}{n-2+s-r}.
$$
\end{proof}

In the last proof, note that the condition that edge $(r, s)$ comes last among the edges of $B_r$ or $A_s$ is only a necessary condition for $(r, s)$ to remain in the final task-dependency graph. Specifically, this condition is not sufficient for $(r, s)$ to remain in the final task-dependency graph. For example, suppose that there exist edges $(r, t) \in B_r$ and $(p, s) \in A_s$ which do not come last among the edges of $B_r$ and $A_s$ respectively. It is possible for both edges to remain in the final task-dependency graph if $(r, t)$ comes last among the edges of $A_t$ and $(p, s)$ comes last among the edges of $B_p$. In this case, the edge $(r, s)$ would be removed from the task-dependency graph, even if it comes last among the edges of $B_r$ or $A_s$, since its removal would not make any vertex exterior.

\begin{prop}
If $e_{n}$ denotes the expected number of edges in the task-dependency graph generated by the $(1, 1)$ edge-removal process on $n$ vertices, then 
\[\lim_{n \rightarrow \infty} \frac{e_n}{n} \le 3-2\ln 2\approx 1.614.\]
\end{prop}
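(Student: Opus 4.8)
The plan is to combine linearity of expectation with the per-edge bound from Lemma~\ref{lem:edge-prob}. Writing $e_n = \sum_{1 \le r < s \le n} \Pr[(r,s) \text{ survives}]$ and bounding each survival probability by the expression in Lemma~\ref{lem:edge-prob}, I would obtain
\[
e_n \le \sum_{1\le r<s\le n}\left(\frac{1}{s-1}+\frac{1}{n-r}-\frac{1}{n-2+s-r}\right).
\]
The strategy is then to evaluate each of the three resulting sums up to leading order in $n$, and to divide by $n$ at the end.

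First I would dispatch the two positive sums, both of which collapse exactly. Grouping $\sum \frac{1}{s-1}$ by the value of $s$, the inner sum over the $s-1$ admissible values of $r$ contributes $1$ for each $s \in \{2,\dots,n\}$, so this sum equals $n-1$. By the symmetric grouping of $\sum \frac{1}{n-r}$ over $r$, that sum also equals $n-1$. Hence the positive part contributes exactly $2(n-1)$.

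The heart of the argument is the negative sum $\sum_{1\le r<s\le n} \frac{1}{n-2+s-r}$. I would reindex by the gap $d = s-r \in \{1,\dots,n-1\}$, noting that there are $n-d$ pairs with that gap, which rewrites the sum as $\sum_{d=1}^{n-1} \frac{n-d}{n-2+d}$. Using the algebraic identity $n-d = -(n-2+d)+(2n-2)$, this splits as $-(n-1) + (2n-2)\sum_{d=1}^{n-1}\frac{1}{n-2+d}$, and the remaining sum is the harmonic tail $H_{2n-3}-H_{n-2}$, which tends to $\ln 2$. Thus the negative sum equals $(n-1)(2\ln 2 - 1) + o(n)$.

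Combining the pieces gives $e_n \le 2(n-1) - \bigl[(n-1)(2\ln 2 - 1) + o(n)\bigr] = (n-1)(3 - 2\ln 2) + o(n)$, so dividing by $n$ and letting $n \to \infty$ yields $\limsup_n e_n/n \le 3 - 2\ln 2$. I expect the only subtle point to be the asymptotics $H_{2n-3}-H_{n-2} \to \ln 2$, which follows from $H_m = \ln m + \gamma + o(1)$; the rest is exact bookkeeping, so the main content really lives in the per-edge estimate already supplied by Lemma~\ref{lem:edge-prob}.
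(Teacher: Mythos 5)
Your proposal is correct and follows essentially the same route as the paper: sum the per-edge bound from Lemma~\ref{lem:edge-prob} by linearity of expectation, collapse the two positive sums to $n-1$ each, and reindex the negative sum by the gap $d=s-r$ to obtain $\sum_{d=1}^{n-1}\frac{n-d}{n-2+d}$. The only difference is in the final computation: the paper evaluates the limit of that sum as a Riemann sum via the substitution $z=\frac{n-2+d}{n}$, giving $\int_1^2\frac{2-z}{z}\,dz = 2\ln 2-1$, whereas you split it algebraically and use the harmonic asymptotics $H_{2n-3}-H_{n-2}\to\ln 2$; both yield the same value, so this is a cosmetic variation rather than a different argument.
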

\begin{proof}
We prove it by summing up the probabilities in Lemma~\ref{lem:edge-prob}:
$$
\frac{e_n}{n}\leq \frac{1}{n}\left(\sum_{r=1}^{n-1}\sum_{s=r+1}^n\frac{1}{s-1}+\frac{1}{n-r}-\frac{1}{n-2+s-r}\right)
=\frac{1}{n}\left(n-1+n-1-\sum_{d=1}^{n-1}\frac{n-d}{n-2+d}\right).
$$
By setting $z=\frac{n-2+d}{n}$, the last term approaches $-\int_{1}^{2}\frac{2-z}{z}dz=1-2\ln 2$. Therefore
$$
\lim_{n\rightarrow\infty}\frac{e_n}{n}\leq 3-2\ln 2.
$$
\end{proof}

From the results in Section~\ref{eap}, we know that the expected number of edges in the task-dependency graph generated by the $(1, 1)$ edge-addition process on $n$ vertices is $\Theta(n^2)$, with an upper bound of approximatey $\frac{1}{2}n^2$ and a lower bound of approximately $\frac{1}{4}n^2$. Based on the results of the experiment, we conjecture that the expectation is slightly closer to the lower bound. 

\begin{conj}
If $f_{n}$ denotes the expected number of edges in the task-dependency graph generated by the $(1, 1)$ edge-addition process on $n$ vertices, then \[\lim_{n \rightarrow \infty} \frac{f_n}{n^2} \approx 0.37.\]    
\end{conj}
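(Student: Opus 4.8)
The plan is to first strip the $(1,1)$ edge-addition process down to a pure coverage problem. Since every edge is oriented from a smaller to a larger label, vertex $1$ always has in-degree $0$ and vertex $n$ always has out-degree $0$, so there is always at least one initial and at least one terminal vertex; hence \emph{no} edge addition is ever cancelled when $x=y=1$. Consequently the process simply reveals the $N=\binom{n}{2}$ edges in a uniformly random order and halts the first time the graph becomes a $(1,1)$ task-dependency graph, which happens exactly when every vertex $j\in\{2,\dots,n\}$ has acquired an in-edge and every vertex $i\in\{1,\dots,n-1\}$ has acquired an out-edge. Writing the halting time as $T=\max_k \tau_k$, where $k$ ranges over the $2(n-1)$ coverage requirements (one in-edge for each $j\ge 2$, one out-edge for each $i\le n-1$), $E_k$ is the set of edges meeting requirement $k$, and $\tau_k$ is the position of the first revealed edge in $E_k$, we have $f_n=\mathbb{E}[T]$.

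Second, I would pass to the continuous arrival-time model: attach to each edge an independent $\mathrm{Uniform}[0,1]$ arrival time $t_e$, rank the times to recover the uniform random permutation, and set $\hat T_n=\max_k \min_{e\in E_k} t_e$, the fraction of time needed to achieve coverage. Standard comparison of ranks with quantiles gives $f_n/N-\mathbb{E}[\hat T_n]\to 0$ (recall $N=\binom{n}{2}\sim n^2/2$), where $\mathbb{E}[\hat T_n]=\int_0^1\big(1-g_n(c)\big)\,dc$ and $g_n(c)=P(\hat T_n\le c)$ is the probability that, when each edge is present independently with probability $c$, vertex $1$ is the unique source and vertex $n$ the unique sink. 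Letting $X$ count sources in $\{2,\dots,n\}$ and $Y$ count sinks in $\{1,\dots,n-1\}$, we have $g_n(c)=P(X=0,\,Y=0)$.

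Third, I would identify $\lim_n g_n(c)$. Split the vertices into a left half and a right half and write $X=X_L+X_R$, $Y=Y_L+Y_R$ for the corresponding counts. The expected number of sources in the right half and of sinks in the left half is $O\big((1-c)^{n/2}/c\big)\to 0$, so $P(X_R\ge1)\to0$ and $P(Y_L\ge1)\to0$; thus with high probability all sources lie in the left half (depending only on edges inside it) and all sinks lie in the right half (depending only on edges inside it). These two edge sets are disjoint, so $X_L$ and $Y_R$ are independent; moreover the events ``vertex $j$ has an in-edge'' are independent across $j$, giving $P(X_L=0)\to\prod_{m\ge1}\big(1-(1-c)^m\big)=\phi(1-c)$, where $\phi(q)=\prod_{m\ge1}(1-q^m)$ is the Euler function. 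A squeeze using $P(X_R\ge1),P(Y_L\ge1)\to0$ then yields $g_n(c)\to\phi(1-c)^2$ for each fixed $c\in(0,1)$. Since $0\le 1-g_n(c)\le 1$, bounded convergence gives
\[
\lim_{n\to\infty}\frac{f_n}{n^2}=\frac12\int_0^1\big(1-\phi(1-c)^2\big)\,dc=\frac12\int_0^1\big(1-\phi(q)^2\big)\,dq,
\]
and a numerical evaluation of this integral gives $\approx 0.372$, confirming the conjectured value.

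The main obstacle is the middle step: rigorously justifying that the discrete uniform-order model and the continuous arrival-time model produce the same limit, i.e. that $f_n/N$ really converges to $\int_0^1(1-g_n(c))\,dc$ and thence to $\int_0^1(1-g(c))\,dc$. The pointwise limit $g_n(c)\to\phi(1-c)^2$ and the bounded-convergence step are clean once the left/right independence is in place, and the coverage contributions of ``bulk'' vertices are negligible; the delicate points are controlling the $O(1/N)$ discrepancy between edge-rank and arrival-time quantile uniformly enough to interchange the limit with the expectation, and checking integrability as $c\to0$ (where $g_n(c)\to0$ but $1-g_n$ stays bounded). I expect these to be routine but tedious, leaving the structural identity $\lim_{n\to\infty} f_n/n^2=\tfrac12\int_0^1(1-\phi(q)^2)\,dq$ as the real content.
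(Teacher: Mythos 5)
This statement is one of the paper's \emph{conjectures}: the paper offers no proof of it, only the empirical quadratic fit $y=0.369x^2-0.258x+0.007$ reported in Section~\ref{s:experiment} and the rigorous but much weaker $\Theta(n^2)$ bounds of Proposition~\ref{exp_edge_add}, whose leading constants ($\tfrac{1}{6}$ and $\tfrac{1}{2}$ when $x=y=1$) do not pin down the limit. So there is no proof in the paper to compare yours against; your argument is genuinely new content, and its outline is correct. The reduction to a coverage problem is right: for $x=y=1$ no addition is ever cancelled (vertex $1$ is always initial and vertex $n$ always terminal; the paper makes the same observation at the start of Section~\ref{eap}), so the process reveals a uniform permutation of the $N=\binom{n}{2}$ edges and halts exactly when every $j\ge 2$ has an in-edge and every $i\le n-1$ has an out-edge. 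Your left/right split correctly handles the only dependence issue: the in-degree-zero events are independent across vertices (disjoint edge sets), sources concentrate on small labels and sinks on large ones, and the two halves involve disjoint edge sets, so $g_n(c)\to\phi(1-c)^2$ pointwise; dominated convergence with dominating constant $1$ then gives $\int_0^1(1-g_n(c))\,dc\to\int_0^1\bigl(1-\phi(q)^2\bigr)\,dq$, so the integrability worry you raise at $c\to 0$ is vacuous. The step you flag as the main obstacle also closes cleanly and needs no uniform-versus-binomial sandwiching: in the arrival-time coupling one has exactly $T=NF_N(\hat T_n)$, where $F_N$ is the empirical CDF of the $N$ i.i.d.\ uniform arrival times, hence $|T/N-\hat T_n|\le\sup_c|F_N(c)-c|$, and the Dvoretzky--Kiefer--Wolfowitz inequality gives $\mathbb{E}\bigl[\sup_c|F_N(c)-c|\bigr]=O(N^{-1/2})$, so $f_n/N-\mathbb{E}[\hat T_n]\to 0$. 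Your numerical value checks out as well: $\int_0^1\phi(q)^2\,dq\approx 0.254$, so $\tfrac{1}{2}\int_0^1\bigl(1-\phi(q)^2\bigr)\,dq\approx 0.373$, consistent with the conjectured $0.37$ (the fitted $0.369$ plausibly reflects finite-size effects absorbed by the linear term of the fit). Completed, your argument would upgrade the conjecture to a theorem identifying the limit exactly as $\tfrac{1}{2}\int_0^1\bigl(1-\phi(q)^2\bigr)\,dq$, which is strictly stronger than what the paper claims.
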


The experimental results for the expected maximum directed path length are the most interesting of all. For the $(1, 1)$ edge-addition process, the curve of best fit is clearly linear for the average maximum directed path length with respect to $n$. However for the $(1, 1)$ edge-removal process, it appears that the expected maximum directed path length is logarithmic in $n$.

\begin{conj}
    For the $(1, 1)$ edge-addition process on $n$ vertices, the expected maximum directed path length of the resulting task-dependency graph is $\Theta(n)$.
\end{conj}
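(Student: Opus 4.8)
The plan is to prove matching $O(n)$ and $\Omega(n)$ bounds on the expected maximum directed path length. The upper bound is immediate: the resulting graph is a directed acyclic graph on $n$ vertices, so any directed path visits distinct vertices and hence has at most $n-1$ edges, giving a deterministic $O(n)$ bound. The work is in the lower bound, for which I would first reformulate the $(1,1)$ edge-addition process. Because vertex $1$ always has in-degree $0$ and vertex $n$ always has out-degree $0$, the number of initial and terminal vertices never drops below $1$, so no edge addition is ever cancelled. Hence the process is equivalent to drawing a uniformly random permutation of the $\binom{n}{2}$ edges $(a,b)$ with $a<b$, adding them in order, and stopping at the first time $\tau$ at which the current graph $G_\tau$ is a $(1,1)$ task-dependency graph, i.e.\ every vertex in $\{2,\dots,n\}$ has an in-edge and every vertex in $\{1,\dots,n-1\}$ has an out-edge.

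Next I would fix a small constant $c>0$ and the deterministic time $t_0=\lfloor c\binom{n}{2}\rfloor$, and argue that with probability at least $1-c$ the process has not terminated by time $t_0$, so that $G_{t_0}\subseteq G_\tau$. For this it suffices to note that termination requires vertex $2$ to have an in-edge, and its only possible in-edge is $(1,2)$; since $G_{t_0}$ is a uniformly random $t_0$-subset of the edges, $\Pr[\tau\le t_0]\le\Pr[(1,2)\in G_{t_0}]=t_0/\binom{n}{2}\le c$. Because the maximum directed path length is monotone under adding edges, on the event $\{\tau\ge t_0\}$ the maximum directed path length of $G_\tau$ is at least that of $G_{t_0}$.

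It then remains to show that $G_{t_0}$, a random directed acyclic graph in which each edge $(a,b)$ with $a<b$ is present with density $\approx c$, already contains a directed path of length $\Omega(n)$ with high probability. I would establish this with a greedy forward walk: start at vertex $1$, and from the current vertex $v$ move to the smallest $u>v$ with $(v,u)$ present. Coupling $G_{t_0}$ from below with an independent-edge model $G(n,p)$ for $p=c/2$ (so that $G(n,p)\subseteq G_{t_0}$ with high probability), the successive jumps examine disjoint edge sets and are therefore i.i.d.\ truncated geometric random variables of mean $1/p$. A Chernoff bound for sums of geometrics shows that after $M=pn/2=\Theta(n)$ steps the walk is still at a position below $n$ and has not gotten stuck (the probability of being stuck at any $v$ with $n-v=\Omega(n)$ is exponentially small), so the walk yields a directed path with $\Omega(n)$ edges with high probability.

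Finally I would combine the two high-probability events by a union bound: with probability at least $1-c-o(1)$ both $\tau\ge t_0$ holds and $G_{t_0}$ contains a path of length $c'n$ for a constant $c'>0$, and on this event $G_\tau$ contains such a path; taking expectations gives $\mathbb{E}[\text{max path length}]\ge(1-c-o(1))c'n=\Omega(n)$. The main obstacle is the random-graph step: making the greedy-walk lower bound fully rigorous requires controlling the mild dependencies introduced by the fixed-size model $G(n,t_0)$ and the truncation of jumps near vertex $n$, which is why I would route through the coupling with $G(n,p)$ and standard longest-increasing-path estimates for random graphs rather than analyzing $G(n,t_0)$ directly. The coupon/stopping-time step and the union bound are routine by comparison.
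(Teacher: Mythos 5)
The statement you were given is not proved in the paper at all: it appears there only as a conjecture in the experimental section, supported by simulation data (the best-fit line $y=0.779x-0.273$ for the average maximum directed path length), and the paper's only theoretical progress on path lengths concerns the companion conjecture for the edge-\emph{removal} process, via the random directed tree process. So there is no paper proof to compare yours against; what you have written is a sketch that, once the routine details are filled in, would resolve one of the paper's open questions. As far as I can check, the sketch is correct. The reformulation is valid: for $x=y=1$, vertex $1$ is always initial and vertex $n$ always terminal, so no addition is ever cancelled, and the process is a uniform random permutation of the $\binom{n}{2}$ edges stopped at the first time $\tau$ at which every vertex except $1$ has an in-edge and every vertex except $n$ has an out-edge. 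Your stopping-time bound is the crucial trick and it is sound: any $(1,1)$ task-dependency graph on the ordered vertex set must contain the edge $(1,2)$, since that is the unique possible in-edge of vertex $2$, so $\Pr[\tau\le t_0]\le t_0/\binom{n}{2}\le c$, and monotonicity of the longest directed path under edge addition reduces everything to the uniform random graph $G_{t_0}$. The greedy-walk analysis is also correct in its essentials: successive steps query out-edges of distinct, previously unqueried vertices, so in the independent model $G(n,p)$ the jumps are i.i.d.\ geometric of mean $1/p$ up to truncation; a Chernoff bound for geometric sums keeps the walk below $3n/4$ for $pn/2$ steps, the per-step sticking probability there is at most $(1-p)^{n/4}$, and the standard containment coupling of $G(n,p)$ with $p=c/2$ inside the fixed-size model (condition on the binomial edge count being at most $t_0$) costs only $o(1)$ in probability. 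Combining, the expected maximum path length is at least $(1-c-o(1))\,cn/4=\Omega(n)$, while $O(n)$ is trivial because the graph is acyclic on $n$ vertices. The two items you defer are indeed standard; in the final write-up, make the truncation rigorous by coupling the walk to an infinite i.i.d.\ geometric sequence and declaring failure on any overshoot past $n$ or any stuck step, so that the independence used in the Chernoff bound is never conditioned away.
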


\begin{conj}\label{conj_logn}
    For the $(1, 1)$ edge-removal process on $n$ vertices, the expected maximum directed path length of the resulting task-dependency graph is $\Theta(\log{n})$.
\end{conj}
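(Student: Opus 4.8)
The plan is to prove matching bounds, namely that the expected length of the longest directed path in the final graph is both $O(\log n)$ and $\Omega(\log n)$. Throughout I would use the description from Lemma~\ref{lem:edge-prob}: the process fixes a uniformly random permutation of the $\binom{n}{2}$ edges of the transitive tournament on $1,\dots,n$ and deletes them in order, cancelling any deletion that would create a new source or sink. Since no deletion is ever allowed to create a new initial or terminal vertex, vertex $1$ remains the unique initial vertex and vertex $n$ the unique terminal vertex throughout, so the final graph is a connected minimal $(1,1)$ task-dependency graph and every vertex lies on a directed path from $1$ to $n$. I write $A_v=\{(p,v):p<v\}$ and $B_v=\{(v,q):q>v\}$ as in that lemma.

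For the upper bound I would first pass to the candidate supergraph $H$ whose edges are exactly those satisfying the necessary survival condition of Lemma~\ref{lem:edge-prob}, namely edges $(v_i,v_{i+1})$ that come last in $B_{v_i}$ or last in $A_{v_{i+1}}$; the longest directed path in the final graph is at most the longest directed path in $H$. I would then attempt a first-moment bound on the number of directed paths $v_1<\dots<v_{\ell+1}$ of length $\ell=C\log n$ surviving in $H$. The useful structural fact is an independence statement: for distinct indices the events ``$e_i$ is last in $A_{v_{i+1}}$'' concern the internally disjoint blocks $A_{v_2},\dots,A_{v_{\ell+1}}$ and are therefore mutually independent, and likewise the events ``$e_i$ is last in $B_{v_i}$'' are mutually independent. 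The difficulty is the disjunction over the two reasons: expanding it over the $2^{\ell}$ choices of reason per edge produces intersections mixing $A$-blocks with $B$-blocks, and a pair $A_{v_{i+1}},B_{v_j}$ overlaps (in the edge $(v_j,v_{i+1})$) exactly when $j\le i$, so the mixed events are correlated and not jointly controllable by this decomposition.

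I expect this to be the main obstacle, and I believe the bare necessary condition is genuinely too weak: the per-edge keep-probability $\tfrac{1}{v_{i+1}-1}+\tfrac{1}{n-v_i}$ weighed against the $\binom{n}{\ell+1}$ choices of vertices gives a per-step multiplicative weight of order $1+\ln(n/v_i)>1$, so the expected number of candidate paths in $H$ does not tend to $0$. Any successful upper bound must therefore exploit that survival is a \emph{global} conjunction rather than only the local ``record'' conditions — for instance by showing that keeping all $\ell$ edges of a path forces $\Omega(\ell)$ genuinely independent and individually unlikely events in the full dynamics, or by coupling the height function $h(v)=1+\max_{(u,v)\text{ kept}}h(u)$ to the depth of a random recursive (increasing) tree, whose height is known to be $\Theta(\log n)$. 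Strengthening Lemma~\ref{lem:edge-prob} to a sharp enough bound on the \emph{joint} keep-probability of a path is the crux of this direction.

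For the lower bound I would argue that a long path survives with high probability by a second-moment argument over a well-chosen family of candidate paths, for example paths whose vertices are geometrically spaced so that the blocks $A_{v_{i+1}}$ and $B_{v_i}$ grow at controlled rates. This direction needs \emph{lower} bounds on keep-probabilities, complementing the upper bound of Lemma~\ref{lem:edge-prob}, and these are harder because keeping an edge is a global event; I would try to obtain them by conditioning on favorable relative orders within the blocks $A_{v_{i+1}}$ and controlling the cancellations elsewhere, and then bound the second moment by showing that two candidate paths sharing few vertices have nearly independent survival events. A layer/reachability argument alone will not suffice here, since out-degrees can be as large as $n-1$, so the randomness of the process, and not merely connectivity, must be used to force depth $\Omega(\log n)$.
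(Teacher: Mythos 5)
You should first be aware that the paper does not prove this statement: it appears as Conjecture~\ref{conj_logn}, and the authors say explicitly that they did not prove it, offering only a related lemma for a different and much simpler process (the random directed tree process, i.e.\ a random recursive tree with edges oriented from smaller to larger labels, for which the expected depth of vertex $k$ is $1+\tfrac12+\cdots+\tfrac{1}{k-1}=\Theta(\log k)$). So there is no proof in the paper to compare yours against; a complete argument would be new mathematics. Your proposal, by your own account, is also not a proof: both halves are left open. For the upper bound, your reduction to the supergraph $H$ of edges satisfying the necessary condition of Lemma~\ref{lem:edge-prob} is valid (as is your observation that vertex $1$ and vertex $n$ remain the unique initial and terminal vertices), but you then assert that the first moment of the number of length-$C\log n$ paths in $H$ does not tend to $0$ and abandon the route. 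That assertion is itself unsubstantiated: the per-step weight $1+\ln(n/v_i)>1$ heuristic ignores that the vertices of a candidate path must be increasing, and summing the product of keep-probabilities over the ordered simplex of vertex choices produces factorial suppression. Indeed, $H$ is essentially the union of two random-recursive-tree-like structures (each $v$ keeps the last-removed in-edge, a uniform element of $A_v$, and each $u$ the last-removed out-edge, a uniform element of $B_u$), and the pure in-tree or out-tree paths are controlled exactly as in the classical $e\log n$ height bound for random recursive trees; whether the mixed, alternating paths destroy the first moment is precisely the computation you would need to carry out, and you neither do it nor prove it impossible.

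For the lower bound the gap is clearer still: you never produce any lower bound on the joint probability that all edges of a candidate path survive the full dynamics (Lemma~\ref{lem:edge-prob} gives only an upper bound on the survival of a single edge), so there is no quantity on which to run a second-moment argument. Your closing idea of coupling the height function to a random recursive tree is the closest to what the paper actually contains, since the paper's random directed tree process is exactly such a tree; but the paper proves its $\Theta(\log k)$ depth statement only for that stand-alone process, and the coupling between it and the edge-removal dynamics, in either direction of inequality, is missing both from the paper and from your proposal. In short: correct setup and a correct, honest identification of the obstacles (including the accurate remark that reachability alone forces only length $2$), but no proof of either bound, so the statement remains open after your attempt just as it does in the paper.
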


We did not prove Conjecture~\ref{conj_logn}, but we found a related result for a different random process for generating task-dependency graphs. Let the \textit{random directed tree process} on $n$ vertices be the random process which starts with $n$ isolated vertices $1,2,\ldots,n$, and in round $s>0$, an edge $(r,s+1)$ is added uniformly at random where $r\le s$. Note that for any task-dependency graph generated by the random directed tree process on $n$ vertices, the underlying undirected graph is a tree. 

\begin{lem}
Let $G$ be a task dependency graph of order $n$ generated by the random directed tree process on $n$ vertices. Then for $k>1$ the expected length of the unique path from vertex $1$ to vertex $k$ is $1+\frac{1}{2}+\ldots+\frac{1}{k-1}=\Theta(\log k)$.
\end{lem}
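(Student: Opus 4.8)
The plan is to recognize that the random directed tree process is exactly the random recursive tree model: vertex $1$ is the root, and for each $k \ge 2$ the unique edge entering $k$ comes from a parent chosen uniformly at random from $\{1, \dots, k-1\}$. Because the underlying undirected graph is a tree rooted at vertex $1$, the unique path from vertex $1$ to vertex $k$ is just the root-to-$k$ path, whose length equals the depth of vertex $k$. So writing $D_k$ for this depth and $d_k = E[D_k]$, it suffices to show $d_k = H_{k-1} := 1 + \tfrac12 + \cdots + \tfrac{1}{k-1}$, and then invoke the standard estimate $H_{m} = \ln m + O(1)$ to obtain the $\Theta(\log k)$ conclusion.

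First I would set up a recurrence by conditioning on the parent $P_k$ of vertex $k$. If $P_k = j$, then $D_k = D_j + 1$. The key point is that the tree structure on $\{1, \dots, k-1\}$ (built in rounds $1, \dots, k-2$) is determined independently of the round-$(k-1)$ choice of $P_k$, so $D_j$ has the same distribution whether or not $k$ attaches to $j$. Hence by the law of total expectation,
\[
d_k = \sum_{j=1}^{k-1} \frac{1}{k-1}\,(d_j + 1) = 1 + \frac{1}{k-1}\sum_{j=1}^{k-1} d_j,
\]
with base case $d_1 = 0$.

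Then I would solve this recurrence. The cleanest route is to eliminate the sum: writing $S_{k} = \sum_{j=1}^{k} d_j$, the recurrence gives $S_{k-1} = (k-1)(d_k - 1)$, and the analogous identity $S_{k-2} = (k-2)(d_{k-1}-1)$ together with $S_{k-1} - S_{k-2} = d_{k-1}$ yields, after simplification, the telescoping relation $d_k = d_{k-1} + \tfrac{1}{k-1}$. Summing from the base case $d_1 = 0$ gives $d_k = \sum_{i=1}^{k-1}\tfrac1i = H_{k-1}$. Alternatively one can verify $d_k = H_{k-1}$ directly by induction, using the identity $\sum_{i=0}^{k-2} H_i = (k-1)H_{k-2} - (k-2)$.

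The only genuinely delicate point is the independence justification in the second step, namely that conditioning on $P_k = j$ does not bias the depth $D_j$; I would argue this by noting that the tree on the first $k-1$ vertices is generated entirely by rounds $1,\dots,k-2$ and is therefore independent of the uniform choice made in round $k-1$. Everything after that is a routine manipulation of the harmonic recurrence, and the final asymptotic $H_{k-1} = \Theta(\log k)$ is standard.
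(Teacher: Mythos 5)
Your proof is correct and takes essentially the same approach as the paper's: both condition on the uniformly random parent of vertex $k$ to obtain the recurrence $d_k = 1 + \frac{1}{k-1}\sum_{j=1}^{k-1} d_j$ with $d_1 = 0$, and then simplify it to the telescoping relation $d_k - d_{k-1} = \frac{1}{k-1}$, giving the harmonic number $H_{k-1}$. The only difference is that you spell out the independence justification and the elimination-of-the-sum algebra that the paper leaves implicit.
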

\begin{proof}
Denote by $f(k)$ the expected length of the unique path from vertices $1$ to $k$. The probability that vertices $r<s$ are adjacent is $\frac{1}{s-1}$, so we have $f(1)=0$ and for $k>1$
$$
f(k)=1+\frac{1}{k-1}\left(f(1)+\ldots+f(k-1)\right).
$$
It simplifies to
$$
f(k+1)-f(k)=\frac{1}{k}.
$$
\end{proof}

\section{Future directions and open problems}\label{open_prob}
Our experimental results suggest that both the $(x, y)$ edge-addition process and the $(x, y)$ edge-removal process result in an $(x, y)$ task-dependency graph with probability approaching $1$ as the number of vertices increases. We proved that this is true for the $(x, y)$ edge-addition process, but it remains to prove for the $(x, y)$ edge-removal process. 

We determined the maximum possible number of edges in a minimal $(x, y)$ task-dependency graph of order $n$ for all $x, y \ge 1$ and $n \ge \max(x, y)$, and we showed that this is equal to the maximum possible number of edges in any task-dependency graph generated by the $(x, y)$ edge-removal process on $n$ vertices. We also characterized the extremal task-dependency graphs. 

Many related questions remain. For one, how many ways are there to order the tasks in a given task-dependency graph? In particular, what is the maximum possible number of orderings of the tasks in an $(x,y)$ task-dependency graph of order $n$?

We made some progress on this question in the following theorem. Specifically, we solved it for all $n \ge 1$ when $x = y = 1$. We also solved it for all $x, y \ge 1$ when $\max(x, y) \le n \le \max(x,y)+2$.

\begin{prop}
    For $x \geq y \geq 1$, the maximum possible number of orderings of the tasks in an $(x,y)$ task-dependency graph of order $n$ is $n!$ for $n = x$ (in which case $x = y$), $\frac{(x+1)!}{x-y+2}$ for $n = x+1$, $\frac{(x+2)!}{2(x-y+2)}$ for $n = x+2$ and $y > 1$, and $\frac{(x+2)!}{2(x-y+3)}$ for $n = x+2$ and $y = 1$. The maximum possible number of orderings of the tasks in a $(1,1)$ task-dependency graph of order $n$ is $(n-2)!$ for $n \geq 2$. 
\end{prop}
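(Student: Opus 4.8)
The plan is to interpret an \emph{ordering of the tasks} as a topological ordering of the directed acyclic graph, equivalently a linear extension of the associated partial order in which the initial vertices are the minimal elements and the terminal vertices are the maximal elements. Write $\operatorname{LE}(G)$ for the number of such orderings. I would repeatedly use two facts: any poset on $m$ elements has at most $m!$ linear extensions, with equality exactly for an antichain; and if a poset has a unique minimal element then that element is a global minimum and hence occupies the first position in every linear extension (dually for a unique maximal element). The structural observation driving all the finite cases is that an $(x,y)$ task-dependency graph of order $n$ has exactly $n-x$ non-initial vertices, and since the head of every edge is non-initial, \emph{all edges point into this set of $n-x$ vertices}.

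First I would dispose of the $(1,1)$ statement for general $n$. Here the unique initial vertex is forced first and the unique terminal vertex is forced last in every ordering, so $\operatorname{LE}(G)$ equals the number of linear extensions of the induced poset on the $n-2$ interior vertices. This is at most $(n-2)!$, with equality when the interior vertices form an antichain; such a graph is realized by joining a single source to all $n-2$ interior vertices and those to a single sink, which is a valid $(1,1)$ task-dependency graph. Hence the maximum is exactly $(n-2)!$ for $n\ge 2$.

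Next I would handle the cases $n=x,\,x+1,\,x+2$ under the assumption $x\ge y$. For $n=x$ every vertex is both initial and terminal, so the graph is edgeless and $\operatorname{LE}=n!$. For $n=x+1$ there is a single non-initial vertex $w$; every edge points to $w$, so its in-degree $k$ equals the number of edges, and counting terminal vertices forces $k=x-y+1$. The structure is thus determined up to isomorphism, and since $w$ must follow its $k$ predecessors, $\operatorname{LE}=\frac{(x+1)!}{k+1}=\frac{(x+1)!}{x-y+2}$, no optimization being needed. For $n=x+2$ there are exactly two non-initial vertices $w_1,w_2$, and acyclicity permits at most one edge between them; this splits into Case A (no edge between $w_1,w_2$, so both are terminal, forcing $y-2$ isolated vertices and requiring $y>1$) and Case B (an edge $w_1\to w_2$, so $w_1$ is interior and there are $y-1$ isolated vertices). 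In each case I would peel off the isolated vertices, which multiply $\operatorname{LE}(G)$ by $\frac{n!}{(\text{core size})!}$, and then optimize the edge configuration of the core.

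The main computational step, and the one I expect to require the most care, is optimizing the core in Case A. Writing $\alpha,\beta$ for the in-degrees of $w_1,w_2$ on a core with $p=x-y+2$ minimal vertices, a short computation (for instance by assigning each vertex an independent uniform position on $[0,1]$ and integrating) gives
\[
\operatorname{LE}(\mathrm{core})=(p+1)!\left(\frac{1}{\alpha+1}+\frac{1}{\beta+1}\right),
\]
subject to $\alpha,\beta\ge 1$ and $\alpha+\beta\ge p$, with equality in the last constraint exactly when $w_1,w_2$ share no predecessor. By convexity of $t\mapsto\frac{1}{t+1}$ this is maximized at $\alpha+\beta=p$ with $\{\alpha,\beta\}=\{1,p-1\}$, giving $\operatorname{LE}(\mathrm{core})=\frac{(p+2)!}{2p}$ and hence $\operatorname{LE}(G)=\frac{n!}{2p}=\frac{(x+2)!}{2(x-y+2)}$. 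In Case B the terminal vertex $w_2$ is a global maximum of its component, so it is forced last and the count reduces to a single non-initial vertex of in-degree $\alpha\ge 1$ over a core of $x-y+1$ minimal vertices; minimizing $\alpha$ yields $\operatorname{LE}(G)=\frac{(x+2)!}{2(x-y+3)}$. Combining: when $y>1$ both cases occur and Case A is larger, giving $\frac{(x+2)!}{2(x-y+2)}$; when $y=1$ only Case B occurs, giving $\frac{(x+2)!}{2(x-y+3)}=\frac{(x+1)!}{2}$. The one point needing vigilance throughout is confirming that Cases A and B genuinely exhaust all $(x,y)$ task-dependency graphs of order $x+2$ — in particular that no configuration with an interior vertex escapes the analysis — which follows from the ``all edges point into $\{w_1,w_2\}$'' observation together with the at-most-one-edge-between-them constraint.
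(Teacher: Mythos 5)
Your proof is correct, and at the level of decomposition it mirrors the paper's: the cases $n=x$, $n=x+1$, $n=x+2$ are classified structurally (your Cases A and B are exactly the paper's cases (b) and (a), respectively), isolated vertices are peeled off with the interleaving factor $n!/(\text{core size})!$, the count is optimized over an in-degree parameter, and your $(1,1)$ argument (unique source forced first, unique sink forced last, antichain interior) is essentially verbatim the paper's. Where you genuinely differ is the counting step at $n=x+2$. The paper counts orderings by directly placing each class of vertices, which tacitly assumes the non-isolated initial vertices are partitioned between the two non-initial vertices, each initial vertex having exactly one out-edge; this yields $\frac{(x+2)!}{(t+1)(x-y+3-t)}$ in its case (b) and $\frac{(x+2)!}{(x-y+3)(t+1)}$ in its case (a), each maximized at $t=1$. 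You instead prove, via the uniform-$[0,1]$ embedding, the closed formula $(p+1)!\left(\frac{1}{\alpha+1}+\frac{1}{\beta+1}\right)$, valid for \emph{arbitrary} overlap of the two predecessor sets provided their union covers all $p$ non-isolated initial vertices: indeed, with overlap $\gamma$ one computes $\int_0^1\!\int_0^1 s^{\alpha-\gamma}t^{\beta-\gamma}\min(s,t)^{\gamma}\,ds\,dt=\frac{1}{\alpha+\beta-\gamma+2}\left(\frac{1}{\alpha+1}+\frac{1}{\beta+1}\right)$, which depends only on $\alpha,\beta$ once $\alpha+\beta-\gamma=p$. You then optimize by monotonicity plus convexity. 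This buys a real gain in rigor: configurations in which an initial vertex points to both $w_1$ and $w_2$ (overlapping predecessor sets) are handled automatically, whereas the paper's enumeration silently omits them and would need the extra remark that adding edges only destroys linear extensions, so the maximum is attained at a partition. When the sets are disjoint your formula specializes to the paper's, since $(p+1)!\left(\frac{1}{t+1}+\frac{1}{p-t+1}\right)=\frac{(p+2)!}{(t+1)(p-t+1)}$, so the two computations agree in every case; your Case B reduction (the unique sink of the core is forced last) and your exhaustiveness argument for Cases A and B are both sound.
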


\begin{proof}
    For $n = x$, the upper bound is immediate since there are a total of $n!$ ways to order the vertices of a directed graph of order $n$. This bound is attained by the directed graph of order $n$ with all vertices isolated.

    For $n = x+1$, we noted in the proof of Theorem~\ref{mainthm} that any $(x,y)$ task-dependency graph has $y-1$ isolated vertices, $x-y+1$ non-isolated initial vertices, one non-isolated terminal vertex, and edges from each non-isolated initial vertex to the non-isolated terminal vertex. Thus any ordering of the vertices that has the non-isolated terminal vertex after all of the non-isolated initial vertices will be consistent with the task-dependency graph. The probability that a random ordering of the vertices has the non-isolated terminal vertex after all of the non-isolated initial vertices is $\frac{1}{x-y+2}$ and there are $n! = (x+1)!$ total orderings, so the number of orderings consistent with the task-dependency graph is \[\frac{n!}{x-y+2} = \frac{(x+1)!}{x-y+2}.\]

    For $n = x+2$, we noted in the proof of Theorem~\ref{mainthm} that there are two possibilities for an $(x,y)$ task-dependency graph $G$ of order $n$: $G$ either has (a) $y-1$ isolated vertices, $x-y+1$ non-isolated initial vertices, one non-isolated terminal vertex, and one interior vertex, or (b) $y-2$ isolated vertices, $x-y+2$ non-isolated initial vertices, two non-isolated terminal vertices, and no interior vertex. In either case $G$ has $2n-x-y-2$ edges. Note that case (b) can only happen when $y > 1$.

    For case (a), suppose that $G$ has $t$ non-isolated initial vertices with edges to the interior vertex and $x-y+1-t$ non-isolated initial vertices with edges to the non-isolated terminal vertex, with $1 \leq t \leq x-y+1$. There are $(y-1)!$ ways to order the isolated vertices, $\binom{x+2}{y-1}$ ways to choose the locations of the isolated vertices, $1$ way to choose the location of the non-isolated terminal vertex, $\binom{x-y+2}{t+1}$ ways to choose the locations of the interior vertex together with the $t$ non-isolated initial vertices that have edges to the interior vertex, $t!$ ways to order the $t$ non-isolated initial vertices that have edges to the interior vertex, and $(x-y+1-t)!$ ways to order the non-isolated initial vertices that have edges to the non-isolated terminal vertex. This gives a total of \[(y-1)!  \binom{x+2}{y-1}  \binom{x-y+2}{t+1} t! (x-y+1-t)!\] orderings of the vertices, which simplifies to \[\frac{(x+2)!}{(x-y+3)(t+1)}.\] This is maximized when $t = 1$, which gives \[\frac{(x+2)!}{2(x-y+3)}.\] Thus for $n = x+2$ and $y = 1$, this quantity is the maximum number of orderings.

    For case (b), suppose that $t$ non-isolated initial vertices have edges to the first non-isolated terminal vertex, with $1 \leq t \leq x-y+1$. This forms a first component of size $t+1$. Furthermore, suppose that $x-y+2-t$ non-isolated initial vertices have edges to the second non-isolated terminal vertex. This forms a second component of size $x-y+3-t$. There are $(y-2)!$ ways to order the isolated vertices, $\binom{x+2}{y-2}$ ways to choose the locations of the isolated vertices, $\binom{x-y+4}{t+1}$ ways to choose the locations of the vertices in the first component, $t!$ ways to order the vertices in the first component, and $(x-y+2-t)!$ ways to order the vertices in the second component. This gives a total of \[(y-2)!  \binom{x+2}{y-2}  \binom{x-y+4}{t+1}  t!  (x-y+2-t)!\] orderings of the vertices, which simplifies to \[\frac{(x+2)!}{(t+1)(x-y+3-t)}.\] This is maximized when $t = 1$, which gives \[\frac{(x+2)!}{2(x-y+2)}.\] Thus for $n = x+2$ and $y > 1$, this quantity is the maximum number of orderings.

If $x = y = 1$ and $n \ge 2$, then there are $n-2$ interior vertices, an initial vertex, and a terminal vertex. The initial task must precede all others, the final task must follow all others, and the number of ways to order the interior tasks is at most $(n-2)!$ since there are $(n-2)!$ ways to order the interior tasks if there are no restrictions caused by edges from one interior task to another. This upper bound of $(n-2)!$ is attained by $T_{1,1,n}$.
\end{proof}

For each of the cases covered in the last theorem, it is notable that the maximum possible number of orderings of the tasks in an $(x,y)$ task-dependency graph of order $n$ is attained by a minimal $(x, y)$ task-dependency graph of order $n$ with the maximum possible number of edges. It would be interesting to determine whether this pattern continues to hold for all $x, y \ge 1$ and $n \ge \max(x, y)$.

There are also some interesting probabilistic questions related to our results. We proved for all $x, y \ge 1$ that the expected number of edges in the random task-dependency graph resulting from the $(x, y)$ edge-removal process on $n$ vertices is $\Theta(n)$. Our upper bound has leading coefficient $2$ and our lower bound has leading coefficient $1$, so it would be interesting to close the gap between these bounds. We also showed for all $x, y \ge 1$ that the expected number of edges in the random task-dependency graph resulting from the $(x, y)$ edge-addition process on $n$ vertices is $\Theta(n^2)$. The upper bound has leading coefficient $\frac{1}{2}$ and the lower bound has leading coefficient $\frac{1}{(x+1)x+(y+1)y+2}$. 

Based on our results, we describe a random process for generating $(x, y)$ task-dependency graphs of order $n$ with $m$ edges for any $x, y \ge 1$, using only edge-addition and edge-removal. In the case that $x = y = 1$, this random process provides an alternative to the one in \cite{randgbsd} for randomly generating $(1, 1)$ task-dependency graphs of order $n$ with $m$ edges.

\begin{prop}
    For any $x, y \ge 1$, $n > \max(x, y)+1$, and $2n-x-y-2 \le m \le \binom{n}{2}-(x+y)n$, there is a random process only involving edge-addition and edge-removal which generates with high probability an $(x, y)$ task-dependency graph of order $n$ with $m$ edges.
\end{prop}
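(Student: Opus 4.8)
The plan is to run a two-phase process: an addition phase that produces, with high probability, a dense $(x,y)$ task-dependency graph, followed by a removal phase that trims the edge count down to exactly $m$. Concretely, in Phase~1 I would run the $(x,y)$ edge-addition process but, instead of halting as soon as an $(x,y)$ task-dependency graph appears, I would keep adding edges in random order (still cancelling any addition that would push the number of initial vertices below $x$ or terminal vertices below $y$) until no further edge can be added, obtaining a maximal graph $G_1$. In Phase~2 I would run the $(x,y)$ edge-removal process on $G_1$, halting the instant exactly $m$ edges remain.

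First I would observe that once the addition phase first reaches an $(x,y)$ task-dependency graph, its initial set $A$ (with $|A|=x$) and terminal set $B$ (with $|B|=y$) are frozen: every subsequent admissible addition is an edge $(a,b)$ with $b\notin A$ and $a\notin B$, so the counts stay at $(x,y)$ and the sets $A,B$ do not change. Hence $G_1$ is exactly the maximal $(x,y)$ task-dependency graph with initial set $A$ and terminal set $B$, and this occurs precisely on the event that the ordinary $(x,y)$ edge-addition process yields an $(x,y)$ task-dependency graph (on the complementary event both processes get stuck at the same non-$(x,y)$ graph). By the theorem establishing Conjecture~\ref{prob_add}, that event has probability $1-o(1)$, so with high probability $G_1$ is an $(x,y)$ task-dependency graph.

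Next I would bound the number of edges of $G_1$ from below. In a maximal $(x,y)$ task-dependency graph the only missing edges are those directed into a vertex of $A$ or out of a vertex of $B$; since $|A|=x$, $|B|=y$, and each vertex index lies in $\{1,\dots,n\}$, there are at most $x(n-1)$ edges into $A$ and at most $y(n-1)$ edges out of $B$. Thus $G_1$ has at least $\binom{n}{2}-(x+y)(n-1) > \binom{n}{2}-(x+y)n \ge m$ edges. This is exactly where the hypothesis $m\le\binom{n}{2}-(x+y)n$ enters, and it guarantees that Phase~2 starts with strictly more than $m$ edges.

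Finally, for Phase~2 I would invoke Theorem~\ref{mainthm}: every minimal $(x,y)$ task-dependency graph of order $n$ has at most $2n-x-y-2$ edges, so any $(x,y)$ task-dependency graph with more than $2n-x-y-2$ edges is non-minimal and admits a removable edge. Since removing an edge never decreases the number of initial or terminal vertices, the removal process keeps the graph exactly $(x,y)$ throughout; and as long as the current edge count exceeds $m\ge 2n-x-y-2$ it also exceeds $2n-x-y-2$, so a removable edge always exists and the count drops by one at each successful removal. Hence the process cannot get stuck above $m$ and must reach an $(x,y)$ task-dependency graph with exactly $m$ edges. The main obstacle is the high-probability claim in Phase~1, but this is already supplied by the theorem proving Conjecture~\ref{prob_add}; the remaining work, namely the deterministic density estimate for maximal $(x,y)$ graphs and the non-stuck removal argument, is precisely what pins down the stated range of $m$.
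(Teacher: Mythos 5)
Your proposal is correct, and it rests on the same two pillars as the paper's own proof: the theorem resolving Conjecture~\ref{prob_add} (so the addition phase yields an $(x,y)$ task-dependency graph with high probability) and Theorem~\ref{mainthm} (an $(x,y)$ task-dependency graph with more than $2n-x-y-2$ edges is non-minimal, so the removal phase can never stall while the edge count exceeds $m$). The difference lies in the process you construct. The paper runs the $(x,y)$ edge-addition process exactly as defined --- halting at the first $(x,y)$ graph --- and then branches on the resulting edge count $r$: if $r<m$ it adds further random admissible edges up to $m$, using the hypothesis $m\le\binom{n}{2}-(x+y)n$ to argue that at most $(x+y)n$ missing edges are ever inadmissible and hence an admissible addition always exists; if $r>m$ it runs the edge-removal process down to $m$. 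You instead let the addition phase run to saturation, observing correctly that once the counts reach $(x,y)$ the initial and terminal sets $A$ and $B$ are frozen, so the saturated graph is the maximal $(x,y)$ graph on $A,B$ with at least $\binom{n}{2}-(x+y)(n-1)>m$ edges; here the hypothesis $m\le\binom{n}{2}-(x+y)n$ is used deterministically, the branching disappears, and only the single removal phase remains. What the paper's version buys is that it uses the two processes exactly as already defined; what your version buys is a cleaner monotone argument (one descent, no case analysis) at the modest cost of altering the halting rule of the addition process, and your coupling of the saturated process with the standard one (both get stuck at the same graph on the bad event) is sound.
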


\begin{proof}
    Consider the following random process. First, run the $(x, y)$ edge-addition process on $n$ vertices until it terminates. With high probability, it generates an $(x, y)$ task-dependency graph $G$. Let $r$ be the number of edges in $G$. If $r = m$, then $G$ is the final task-dependency graph. If $r < m$, then $r < \binom{n}{2}-(x+y)n$. Since there are at most $(x+y)n$ missing edges whose addition decreases the number of initial or terminal vertices, there must be a missing edge whose addition does not decrease the number of initial or terminal vertices. Thus in this case, we can add edges to $G$ uniformly at random without decreasing the number of initial vertices and terminal vertices until the resulting task-dependency graph has $m$ edges. %As long as $m \le \binom{n}{2}-(x+y)n$, we can add edges from any non-terminal vertex to any terminal vertex, and from any initial vertex to any non-initial vertex, until we obtain an $(x, y)$ task-dependency graph with $m$ edges.
    
    If $r > m$, then perform the $(x, y)$ edge-removal process starting with $G$ until the resulting task-dependency graph has $m$ edges. Note that running the complete $(x, y)$ edge-removal process on $n$ vertices always results in a task-dependency graph with at most $2n-x-y-2$ edges given the assumption that $n > \max(x, y)+1$. Moreover, for each round in which we perform the $(x, y)$ edge-removal process starting with $G$, the resulting task-dependency graph is guaranteed to be an $(x, y)$ task-dependency graph assuming that $G$ is an $(x, y)$ task-dependency graph. Thus, as long as $2n-x-y-2 \le m \le \binom{n}{2}-(x+y)n$, with high probability this random process will generate an $(x, y)$ task-dependency graph of order $n$ with $m$ edges.
\end{proof}

There are other natural questions about expected values that pertain to both the $(x, y)$ edge-addition process and the $(x, y)$ edge-removal process. 

\begin{enumerate}
\item What is the expected maximum directed path length of the resulting task-dependency graph as a function of $x$, $y$, and $n$? 

\item What is the expected number of orderings of the tasks of the resulting task-dependency graph as a function of $x$, $y$, and $n$?

\item What is the expected number of copies of a given directed acyclic graph $H$ in the resulting task-dependency graph as a function of $x$, $y$, and $n$?
\end{enumerate}

Besides expected values, there are other natural probabilistic questions. For example, find the probability that the resulting task-dependency graph of order $n$ has exactly $m$ edges, the probability that it has maximum directed path length $m$, the probability that the number of orderings is $m$, or the probability that the number of copies of a given directed acyclic graph $H$ is $m$.

\section*{Acknowledgement}
Shen-Fu Tsai is supported by the Ministry of Science and Technology of Taiwan under grant MOST 111-2115-M-008-010-MY2. We thank Lance Menthe for helpful comments on the manuscript.


\begin{thebibliography}{0}
\bibitem{blelloch} G.E. Blelloch and P.B. Gibbons, Effectively sharing a cache among threads. In Proceedings of the sixteenth annual ACM symposium on Parallelism in algorithms and architectures (2004) 235-244.
\bibitem{blelloch0} G.E. Blelloch, P.B. Gibbons, and Y. Matias, Provably efficient scheduling for languages with fine-grained parallelism. J. ACM 46 (1999) 281-321.
\bibitem{blumofe} R.D. Blumofe and P.A. Lisiecki, Adaptive and reliable parallel computing on networks of workstations. In Proceedings of the USENIX 1997 Annual Technical Conference on UNIX and Advanced Computing Systems (1997) 133-147.
\bibitem{blumofe0} R.D. Blumofe and C.E. Leiserson, Scheduling multithreaded computations by work stealing. J. ACM 46 (1999) 720-748.
\bibitem{blumofe1} R.D. Blumofe and C.E. Leiserson, Space-efficient scheduling of multithreaded computations. SIAM J. Comput. 27 (1998) 202-229.
\bibitem{burton} F. W. Burton and M.R. Sleep, Executing functional programs on a virtual tree of processors. In Proceedings of the 1981 Conference on Functional Programming Languages and Computer Architecture (1981) 187-194.
\bibitem{chen} S. Chen, P.B. Gibbons, M. Kozuch, V. Liaskovitis, A. Ailamaki, G.E. Blelloch, B. Falsafi, L. Fix, N. Hardavellas, T.C. Mowry, and C. Wilkerson, Scheduling threads for constructive cache sharing on CMPs. In Proceedings of the nineteenth annual ACM symposium on Parallel algorithms and architectures (2007) 105-115.
\bibitem{sentinel} Congressional Research Service, Defense Primer: LGM-35A Sentinel Intercontinental Ballistic Missile. January 10, 2023.
\bibitem{eager} D.L. Eager, E.D. Lazowska, and J. Zahorjan. A comparison of receiver- initiated and sender-initiated adaptive load sharing. Perform. Eval. 6 (1986) 53-68.
\bibitem{pert} M. Engwall, PERT: Polaris and the realities of project execution. International Journal of Managing Projects in Business 5 (2012) 595-616.
\bibitem{feldmann} R. Feldmann, P. Mysliwietz, and B. Monien, Game tree search on a massively parallel system. Adv. Comput. Chess 7 (1993) 203-219.
\bibitem{finkel} R. Finkel and U. Manber, DIB-A distributed implementation of backtracking. ACM Trans. Program. Lang. Syst. 9 (1987) 235-256.
\bibitem{gt23code} J. Geneson and S. Tsai, Code for edge-addition and edge-removal processes. \url{https://github.com/jgeneson/tdgraphs}.
\bibitem{halbherr} M. Halbherr, Y. Zhou, and C.F. Joerg, MIMD-style parallel programming with continuation-passing threads. In Proceedings of the 2nd International Workshop on Massive Parallelism: Hardware, Software, and Applications (1994) 1-10.
\bibitem{halstead} R.H. Halstead Jr., Implementation of Multilisp: Lisp on a multiprocessor. In Conference Record of the 1984 ACM Symposium on LISP and Functional Programming (1984) 9-17.
\bibitem{karp} R.M. Karp and Y. Zhang, Randomized parallel algorithms for backtrack search and branch-and-bound computation. J. ACM 40 (1993) 765-789.
\bibitem{cpm} J.E. Kelley, Jr. and M.R. Walker, The origins of CPM: A personal history. PM Network 3 (1989) 7-22.
\bibitem{kuszmaul} B.C. Kuszmaul, Synchronized MIMD computing. Ph.D. thesis, Dept. Electrical Engineering and Computer Science, Massachusetts Institute of Technology, Cambridge, Mass (1994) 1-162.
\bibitem{mirchandaney} R. Mirchandaney, D. Towsley, and J.A. Stankovic, Adaptive load sharing in heterogeneous distributed systems. J. Parallel Distrib. Comput. 9 (1990) 331-346.
\bibitem{mohr} E. Mohr, D.A. Kranz, and R.H. Halstead Jr., Lazy task creation: A technique for increasing the granularity of parallel programs. IEEE Trans. Parall. Dist. Syst. 2 (1991) 264-280.
\bibitem{rudolph} L. Rudolph, M. Slivkin-Allalouf, and E. Upfal, A simple load balancing scheme for task allocation in parallel machines. In Proceedings of the 3rd Annual ACM Symposium on Parallel Algorithms and Architectures (1991) 237-245.
\bibitem{randgbsd} D. Snyder, C. Johnson, P. Rockstroh, L. Menthe, and B.E. Bennett, Graph theoretic algorithms for the ground based strategic deterrent program: prioritization and scheduling. Santa Monica, CA: RAND Corporation (2021) 1-68.
\bibitem{vanhoudt} B. Van Houdt, Randomized work stealing versus sharing in large-scale systems with non-exponential job sizes. IEEE/ACM Transactions on Networking 27 (2019) 2137-2149.
\bibitem{vandevoorde} M.T. Vandevoorde and E.S. Roberts, WorkCrews: An abstraction for controlling parallelism. International Journal of Parallel Programming 17 (1988) 347-366.

\end{thebibliography}
\end{document}